\documentclass[aps,prx,preprint,groupedaddress]{revtex4-2}

\usepackage{graphicx}
\graphicspath{{./pics/}}
\usepackage{subcaption}
\usepackage{dcolumn}
\usepackage{bm}

\usepackage{amsmath,amssymb}
\usepackage[utf8]{inputenc}
\usepackage[T1]{fontenc}
\usepackage{mathptmx}
\usepackage[cal=cm]{mathalpha}
\usepackage{etoolbox}
\usepackage{mathtools}
\usepackage{amsthm}
\usepackage{xcolor}
\usepackage[normalem]{ulem}
\newtheorem{theorem}{Theorem}

\theoremstyle{definition}
\newtheorem{example}{Example}

\newcommand{\bx}{\mathbf{x}}
\newcommand{\bv}{\mathbf{v}}
\newcommand{\R}{\mathbb{R}}

\begin{document}


\title{From Flow-Box Invariants to Global Conservation Laws: \\
Machine Learning--Assisted Detection of Singularities and Near-Integrability}


\author{Jimmie Adriazola}
\email[]{jimmie.adriazola@asu.edu}
\affiliation{School of Mathematical and Statistical Sciences, Arizona State University, Tempe, AZ
85287, USA}

\author{Ioannis G.~Kevrekidis}
\email[]{yannisk@jhu.edu}
\affiliation{Department of Applied Mathematics and Statistics, Johns Hopkins University, Baltimore, MD 21218, USA}

\author{Panayotis G.~Kevrekidis}
\email[]{kevrekid@umass.edu}
\affiliation{Department of Mathematics and Statistics, University of Massachusetts Amherst, Amherst, MA 01003-4515, USA}

\author{Wei Zhu}
\email[]{Contact author: weizhu@gatech.edu}
\affiliation{School of Mathematics, Georgia Institute of Technology, Atlanta, GA 30332, USA}


\date{\today}

\begin{abstract}
Conservation laws reveal invariant structure in dynamical systems and
constrain their evolution to lower-dimensional level sets. Locally,
however, per the flow-box theorem, every $d$-dimensional system generically admits $d-1$ functionally
independent conservation laws near a regular point even though,
generically, only a few such
laws persist globally. 
Data-driven methods typically learn
conservation laws on a prescribed region of phase space from which
training samples are drawn, leaving their behavior beyond that region
unresolved. We formulate this local-to-global continuation problem and
develop an {expanding-domain neural deflation} framework that tracks the
number of independent conservation laws over nested domains. The
method recovers the expected $d-1$ local invariants and detects their
successive breakdown as the domain expands, relaxing to the expected
global conservation law limit. It distinguishes
the local invariants (associated with Frobenius integrability)
from persistent global structure and shows that
the domain scale over which additional laws survive reflects proximity
to  Liouville integrability. Pointwise losses can localize isolated obstructions,
whereas in more complicated dynamics the breakdown of our loss functions is gradual and
spatially diffuse. The framework thus provides a data-driven approach
for studying the persistence and breakdown of (so-far-learned) conservation laws across
phase space.
\end{abstract}


\maketitle

\section{\label{sec:intro}Introduction}

Conservation laws play a central role in the analysis of dynamical
systems~\cite{goldstein2002classical}. By restricting trajectories to invariant level sets, they
reduce the effective dimension of the dynamics, reveal hidden
geometric structure, and provide important information about
long-time behavior. In Hamiltonian mechanics, the existence of
sufficiently many independent first integrals in involution underlies
Liouville integrability \cite{arnold1989classical}. For systems arising
from variational principles, continuous symmetries give rise to
conserved quantities through Noether's theorem
\cite{noether1983invariante,olver1993applications}.

When the governing equations and their structure are known,
conservation laws may sometimes be derived analytically. For complex
systems, however, such derivations can be challenging, and  the
number of independent conservation laws may not be known in advance.
This has motivated the recent development of {\it data-driven approaches} for
discovering conserved quantities from trajectories, vector-field
evaluations, symmetries, or the geometry of invariant level sets
\cite{wetzel2020discovering,liu2021machine,ha2021discovering,
liu2022machine,lu2023discovering}. Methods aimed specifically at recovering a complete set of functionally independent conservation laws (for Hamiltonian systems) include,
among others, neural deflation~\cite{zhu2023machine} and its trajectory-based extension~\cite{chen2025data}.

Existing data-driven methods typically formulate conservation-law discovery on a prescribed domain of phase space~\cite{liu2021machine,liu2022machine,zhu2023machine,chen2025data}. This restriction arises naturally because training requires samples from the phase space, which are typically drawn from a distribution with compact support. Consequently, the learned conservation laws are identified only on the region 
representatively 
covered by the training distribution. Unless this region coincides with the entire
phase space ---which is particularly challenging computationally, especially
as the number of degrees of freedom increases--- the recovered quantities should therefore be regarded as
local conservation laws and need not extend globally. Moreover,
learning conservation laws on a single prescribed domain does not
reveal how they persist or break down as the domain is enlarged. In
particular, it does not identify the scales at which individual
conservation laws cease to exist, nor the geometric or dynamical
obstructions responsible for their loss.

This dependence on the domain is essential because conservation laws
have fundamentally different local and global characters. Indeed, near any
regular point of a $d$-dimensional dynamical system, the flow-box
theorem (also called the rectification or straightening theorem) guarantees the existence of $d-1$ functionally independent
local conservation laws \cite{lee2003smooth}. This conclusion requires
neither a Hamiltonian structure nor global integrability. Thus, even a
system with no nontrivial global conservation law possesses the
maximal number of independent conservation laws in a sufficiently
small neighborhood of every regular point.

Local existence is therefore universal and, by itself, does not
distinguish an integrable system from a non-integrable one. The more
informative question is how the locally available conservation laws
behave {\it as the region of interest expands}. Some may persist over
increasingly large domains and ultimately extend to global
conservation laws. Others may cease to admit a smooth or single-valued
continuation, lose functional independence, or break down when the
domain encounters an equilibrium, a singular set, a recurrence or a more subtle
global obstruction. From this perspective, as the box of phase space under consideration is
expanded,
the transition from local
to global conservation laws is reflected in the successive loss of
the invariants  initially (meaning: in the neighborhood of a regular point) guaranteed by the flow-box theorem.

This observation motivates the central problem considered in this
paper:
\begin{quote}
\itshape
Starting from a regular point in phase space, can one
recover a complete family of independent local conservation laws and
track how this family changes as the domain expands?
\end{quote}
More specifically,
we seek to determine when the number of independent conservation laws
decreases, to identify which local structures persist over large
domains, and, whenever possible, to localize and hopefully even rationalize 
the regions responsible
for their breakdown.

We investigate this problem by extending the neural deflation method~\cite{zhu2023machine,chen2025data}, originally developed to discover global conservation laws in Hamiltonian systems, to a nested family of expanding domains. Deflation, which traces its origins to root-finding for algebraic equations with multiple solutions~\cite{BrownGearhart1971,ojika1987modified,burrage1998deflation} and is now widely used to compute multiple steady states of nonlinear PDEs~\cite{gu2021structure,SUBRAMANIAN2021105679}, modifies the objective after each solution is found so that subsequent solutions are distinct from those already identified. Neural deflation adapts this idea to sequentially learn conservation laws while enforcing the functional independence of each newly learned law from those discovered previously. We apply this procedure as the domain is progressively enlarged, using the resulting conservation and deflation losses to determine how many independent conservation laws remain admissible at each stage. 
Their pointwise values over the phase space further help identify where a candidate law begins to fail; see, e.g., Figure~\ref{fig:2d_node}(c)--(d), where a singularity is revealed by a large loss value at the origin. 

The domain size thereby serves as a natural continuation parameter toward
exploring the phase space of the system. For domains centered at a regular point, the method initially seeks the $d-1$ local conservation laws predicted by the flow-box theorem. As the domain expands, the disappearance of one or more of these laws indicates that the corresponding local flow-box structure cannot be extended over the entire region. Conservation laws that persist over substantially larger domains reflect structure beyond that guaranteed by local theory and may correspond to genuine global invariants.

Our numerical experiments show that the proposed framework recovers the local conservation laws predicted by the flow-box theorem (and the associated
Frobenius integrability~\cite{liu2022machine}) and distinguishes them from invariants that persist over larger domains. By tracking successive reductions in the number of valid independent conservation laws, the method differentiates systems and parameter regimes with varying global structures, including integrable, near-integrable, and non-integrable dynamics. It also reveals how the spatial scale over which a conservation law remains valid changes as a system approaches or departs from an integrable limit.

The remainder of the paper is organized as follows.
Section~\ref{sec:background} introduces local and global conservation
laws, functional independence, and the flow-box theorem, and then
formulates the local-to-global continuation problem.
Section~\ref{sec:method} presents neural deflation and explains how it is applied over a nested family of expanding
domains, with the associated losses serving as indicators of the
persistence and breakdown of local conservation laws (possibly in the form
of singularities). Section~\ref{sec:experiments} reports numerical experiments on
elementary dynamical systems and Hamiltonian benchmarks in integrable,
near-integrable, and non-integrable regimes. The paper concludes with a
discussion of the capabilities, limitations, and possible extensions of
the proposed framework.

\section{\label{sec:background}Background and motivation}

\subsection{Local and global conservation laws and their functional
independence}

Consider a $d$-dimensional dynamical system, not necessarily Hamiltonian, on an open phase space $\Omega\subset \R^d$
\begin{align}
\label{eq:dynmiacal_system}
    \frac{d\bx}{dt} = \mathbf{f}(\bx), \quad \mathbf{f}: 
    \Omega \to \R^d.
\end{align}
A function $I: U\to\R$, defined on an open subset
$ U\subset \Omega$, is called a \emph{local conservation law} if it is
constant along every trajectory segment contained in $ U$. More precisely, for any solution $\bx(t)$ of Eq.~\eqref{eq:dynmiacal_system} and any time interval $J$ such that $\bx(t)\in U$ for all $t\in J$,
\begin{align}
    I(\bx(t))\equiv \text{constant}, \qquad t\in J.
\end{align}
If $I$ is continuously differentiable, this condition is equivalent, by the chain rule, to the vanishing of its Lie derivative along $\mathbf{f}$:
\begin{align}
\label{eq:conservation_condition}
    \mathcal{L}_{\mathbf{f}}I(\bx) \coloneqq \mathbf{f}(\bx)\cdot \nabla I(\bx)= 0, \qquad \forall\,\bx\in U.
\end{align}
When $ U = \Omega$, $I$ is called a \textit{global conservation law}.

A collection of local conservation laws, $ I_1,\ldots,I_k: U\to\mathbb{R}$,
is said to be \textit{functionally independent} on $ U$ if the map
\[
    \mathbf{I}
    \coloneqq
    (I_1,\ldots,I_k): U\to\R^k
\]
has full rank almost everywhere on $ U$, namely,
\begin{align}
\label{eq:full_rank}
    \operatorname{rank}D\mathbf{I}(\bx)
    =
    k,
    \qquad
    \text{a.e.}~\bx\in U,
\end{align}
where 
$$D\mathbf{I}(\mathbf{x}) = \left[\partial I_i / \partial x_j\right]_{1\le i\le k,\,1\le j\le d}$$
is the $k \times d$ Jacobian of $\mathbf{I}$ at $\mathbf{x}$.
Equivalently, the gradients
$\nabla I_1(\bx),\ldots,\nabla I_k(\bx)$ are linearly independent at (almost) every point of $ U$.

A point $\bx\in U$ is called a \emph{regular point} of the vector
field if $\mathbf{f}(\bx)\neq 0$. By
Eq.~\eqref{eq:conservation_condition}, the gradient of every conservation
law at a regular point is orthogonal to $\mathbf{f}(\bx)$. Therefore,
at most $d-1$ conservation laws can be functionally independent in a
neighborhood of a regular point.

\subsection{The flow-box theorem and local conservation laws}

The following classical result shows that the upper bound of $d-1$ is always attained
locally near a regular point.

\begin{theorem}[Flow-box theorem~\cite{lee2003smooth}]\label{thm:flow_box}Let $\Omega\subset\R^d$ be open, let
$\mathbf{f}\in C^1(\Omega;\R^d)$, and suppose that
$\mathbf{f}(\bx_0)\neq 0$ for some $\bx_0\in \Omega$. Then there exist
an open neighborhood $ U\subset \Omega$ of $\bx_0$, an open set
$V\subset\R^d$, and a $C^1$-diffeomorphism
\[
    \Phi: U\to V,
    \qquad
    \Phi(\bx)
    =
    \bigl(y_1(\bx),\ldots,y_d(\bx)\bigr),
\]
such that
\begin{align}
\label{eq:straightened_vector_field}
    D\Phi(\bx)\mathbf{f}(\bx)
    =
    \mathbf{e}_1,
    \qquad
    \bx\in U,
\end{align}
where $\mathbf{e}_1=(1,0,\ldots,0)^\top$. Equivalently, in the
coordinates $(y_1,\ldots,y_d)$, the system
\eqref{eq:dynmiacal_system} takes the form
\begin{align}
\label{eq:flow_box_coordinates}
    \frac{dy_1}{dt}=1,
    \qquad
    \frac{dy_i}{dt}=0,
    \quad
    i=2,\ldots,d.
\end{align}
\end{theorem}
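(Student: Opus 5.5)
The plan is to build the inverse map $\Psi=\Phi^{-1}$ explicitly from the flow of $\mathbf{f}$ and then invert it with the inverse function theorem. After a translation and a linear change of coordinates we may assume $\bx_0=0$ and $\mathbf{f}(0)$ is not contained in the hyperplane $H=\{\bx: x_1=0\}$; for instance, rotate so that $\mathbf{f}(0)$ is parallel to $\mathbf{e}_1$. Since $\mathbf{f}\in C^1$, the Picard--Lindel\"of theorem gives a local flow $\varphi_t(\bx)$. This flow is defined for $|t|<\varepsilon$ and $\bx$ in a neighborhood of $0$. By the standard theorem on differentiable dependence on initial conditions, $(t,\bx)\mapsto\varphi_t(\bx)$ is $C^1$, and $D_\bx\varphi_t$ solves the variational equation.

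Next we define, on a small open box $W\ni 0$ in $\R^d$,
\[
    \Psi(y_1,y_2,\ldots,y_d) = \varphi_{y_1}\bigl(0,y_2,\ldots,y_d\bigr).
\]
This map is $C^1$ and satisfies two key identities. First, $\partial\Psi/\partial y_1 = \mathbf{f}(\Psi(\mathbf{y}))$ by the definition of the flow. Second, at $\mathbf{y}=0$ we have $\partial\Psi/\partial y_i = \mathbf{e}_i$ for $i\ge 2$, because $\varphi_0=\mathrm{id}$. Hence $D\Psi(0)=[\mathbf{f}(0),\mathbf{e}_2,\ldots,\mathbf{e}_d]$, which is invertible precisely because $\mathbf{f}(0)$ is transverse to $H$.

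The inverse function theorem (the $C^1$ version) then yields an open $V\ni 0$ on which $\Psi$ is a $C^1$-diffeomorphism onto an open $U\ni\bx_0$. We set $\Phi=\Psi^{-1}$. Differentiating the identity $\Phi(\Psi(\mathbf{y}))=\mathbf{y}$ in $y_1$ gives $D\Phi(\Psi(\mathbf{y}))\,\mathbf{f}(\Psi(\mathbf{y}))=\mathbf{e}_1$, which is exactly \eqref{eq:straightened_vector_field}. The equivalent form \eqref{eq:flow_box_coordinates} follows by the chain rule: along a solution, $\frac{d}{dt}\Phi(\bx(t))=D\Phi\,\mathbf{f}=\mathbf{e}_1$.

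The main obstacle is regularity. We need $\Psi$ to be genuinely $C^1$ jointly in $(y_1,y_2,\ldots,y_d)$ when $\mathbf{f}$ is only $C^1$, and this rests on the $C^1$ dependence of the flow on initial data. I would cite that result rather than reprove it, noting that it uses Gronwall's inequality applied to the variational equation. A minor secondary point is to shrink the domains so that the flow stays inside $\Omega$ for $|y_1|<\varepsilon$; continuity of the flow and compactness of a small closed box handle this.
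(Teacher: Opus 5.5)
Your proposal is correct and is essentially the standard proof in Lee's book, which the paper cites instead of proving the theorem itself: flow out a hyperplane transverse to $\mathbf{f}(\bx_0)$, observe that $D\Psi(0)=[\mathbf{f}(0),\mathbf{e}_2,\ldots,\mathbf{e}_d]$ is invertible, and invert $\Psi$ locally by the inverse function theorem. For the $C^1$ version stated here (rather than Lee's smooth setting), the one extra ingredient is the joint $C^1$ dependence of the flow on $(t,\bx)$, and you correctly identify and cite it.
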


The flow-box theorem may be viewed as the rank-one case of the
Frobenius theorem \cite{lee2003smooth}: wherever
$\mathbf{f}\neq 0$, the vector field spans a one-dimensional
distribution; since every one-dimensional distribution is
automatically involutive, the Frobenius theorem implies that it is
locally integrable. The flow-box theorem gives the corresponding
coordinate representation, i.e., the vector field is straightened into
$\partial/\partial y_1$ and the trajectories become coordinate
lines in the $y_1$-direction.

The implication for conservation laws follows almost trivially. Define
\begin{align}
\label{eq:flow_box_conservation_laws}
    I_i(\bx)
    \coloneqq
    y_{i+1}(\bx),
    \qquad
    i=1,\ldots,d-1.
\end{align}
Equation~\eqref{eq:flow_box_coordinates} gives
\[
    \frac{d}{dt}I_i(\bx(t))
    =
    \frac{dy_{i+1}}{dt}
    =
    0,
\]
so each $I_i$ is a local conservation law on $ U$. Moreover,
because $\Phi$ is a diffeomorphism, the gradients
\[
    \nabla I_1(\bx),\ldots,\nabla I_{d-1}(\bx)
\]
are linearly independent throughout $ U$. Thus, every
$d$-dimensional dynamical system admits $d-1$ functionally
independent conservation laws in a sufficiently small neighborhood of
any regular point.

Note that the existence of
$d-1$ conservation laws on a small neighborhood does not imply that
the system is globally integrable or that it possesses any nontrivial
global conservation law. These local invariants simply arise from the
universal geometry of a nonvanishing vector field and such local
features cannot be used to indicate any further special global structure of the dynamical system.

\subsection{The local-to-global continuation problem}

Consequently, the flow-box theorem resolves the local existence question,  yet
does not determine how far the resulting conservation laws can be
continued. For various reasons, a local invariant may fail to extend smoothly to a larger
domain. Moreover, even when several conservation
laws can each be continued, their gradients may cease to remain
linearly independent throughout the enlarged domain.

To formulate this problem (namely the one of continuation of
conservation laws from local to global), let $\{ U_r\}_{r\in\mathcal{R}}$ be a nested family of connected open subsets of $\Omega$, indexed by a
scale parameter $r$, such that
\begin{align}
\label{eq:nested_domains}
     U_{r_1}\subseteq U_{r_2}
    \qquad
    \text{whenever }r_1\leq r_2.
\end{align}
The domains may, for example, be expanding rectangular boxes or, in a
Hamiltonian system, energy sublevel sets.

For each such domain, let $\nu( U)$ denote the maximum number of functionally independent $C^1$ conservation laws defined on $ U$. By definition, $\nu( U)\leq d-1$. Moreover, since the domains are nested,
\begin{align}
\label{eq:nu_monotone}
\nu( U_{r_2})
\leq
\nu( U_{r_1}),
\qquad
r_1\leq r_2.
\end{align}

If the domains are centered at a regular point $\bx_0$, the flow-box
theorem gives
\[
    \nu( U_r)=d-1
\]
for all sufficiently small $r$. As $r$ increases, the value of
$\nu( U_r)$ may decrease. Each decrease marks the ``failure'' of at
least one of the initially available local conservation laws to extend
over the enlarged domain.

Because a family of conservation laws admits nonunique representations, the intrinsic quantity of interest is the number $\nu(U_r)$ of functionally independent conservation laws, rather than any particular choice of representatives. Indeed, if
\[
    \mathbf{I}
    =
    (I_1,\ldots,I_k)
\]
is a functionally independent family and
$F:\R^k\to\R^k$ is a local diffeomorphism, then
$F\circ\mathbf{I}$ defines another family with the same level-set
structure. Consequently, the loss of a conservation law as
the domain expands refers to a decrease in the
number of functionally independent conservation laws,
rather than the disappearance of a particular representative.

The local-to-global continuation problem considered in this paper consists
therefore of determining, or approximating, the function
\(
    r\mapsto\nu( U_r),
\)
and of investigating the mechanisms responsible for its decreases. In particular, after fixing a reference point in phase space, we seek to answer the following questions:
\begin{enumerate}
    \item How many functionally independent conservation laws exist on
    each domain $ U_r$?

    \item At what domain sizes does this number decrease, and can these
    critical domain sizes be interpreted in terms of the underlying
    system dynamics?

    \item Can the region in which a conservation law first fails be
    identified from its pointwise conservation or independence
    residual?

    \item How many and which of the conservation laws available locally persist over
    the full phase-space region of interest and therefore represent
    global structure?
\end{enumerate}

Our primary construction begins with domains centered at a regular
point, where the flow-box theorem supplies a precise local benchmark.
We also consider domains centered at equilibria, particularly energy
sublevel sets in Hamiltonian systems. In that setting, the flow-box
theorem does not apply at the center, and the number of conservation
laws present on small domains depends on additional structure of the
system.

Determining $\nu( U_r)$ analytically is generally difficult,
especially when the vector field is available only through data or
when the loss of a conservation law is caused by complicated global
dynamics or multiple singularities. This motivates a
data-driven approach to the local-to-global continuation problem: To reiterate our approach in context, we first review existing methods for conservation-law discovery, with particular attention to their ability to recover multiple functionally independent invariants and to the phase-space domains on which they operate.

\subsection{\label{sec:related_work}Related work}

Before delving into the techniques utilized in the present work, it
is worthwhile to acknowledge that
a growing literature has developed data-driven methods for discovering
conservation laws from trajectory observations, vector-field
evaluations, or known differential equations. Early sparse-regression
approaches searched for conserved quantities within prescribed
function libraries~\cite{kaiser2018discovering}. Wetzel
et al.~\cite{wetzel2020discovering} used interpretable Siamese neural
networks to identify symmetry invariants and conserved quantities from
pairs of related observations. Liu and
Tegmark~\cite{liu2021machine} instead exploited the geometry and
intrinsic dimension of trajectory data to infer the number of conserved
quantities. Ha and Jeong~\cite{ha2021discovering} trained a neural
network on grouped observations sharing the same invariant values,
while Arora et al.~\cite{arora2023model} proposed a model-agnostic
approach for jointly learning first integrals and the underlying
dynamics from trajectories.

Several methods formulate conservation-law discovery geometrically.
Lu et al.~\cite{lu2023discovering} combined optimal transport with
manifold learning to identify both the number and values of conserved
quantities without requiring an explicit dynamical model or accurate
time information. More recently, Belyshev
et al.~\cite{belyshev2024beyond} combined representation learning with
topological analysis to characterize spaces of conservation laws in
classical and quantum examples. Oellerich and
Emelianenko~\cite{oellerich2024towards} developed a limited-data
framework based on the stability of singular vectors to recover the
number and symbolic form of conservation laws, with an emphasis on
biological and chemical systems. Mebratie
et al.~\cite{mebratie2025machine} introduced an alternative
trajectory-based formulation using indeterminate kernel ridge
regression. Their kernel construction reduces data and computational
requirements and directly produces explicit expressions, although the
main formulation targets a single conservation law.

When the governing differential equations are explicitly available, the
conservation condition can be imposed directly through the directional
derivative of a candidate invariant. Liu
et al.~\cite{liu2022machine} trained neural representations of multiple
conservation laws and introduced a numerical procedure for assessing
their functional independence. The Sparse Invariant Detector of Liu
et al.~\cite{liu2024interpretable} searches a prescribed basis for
sparse, interpretable invariants and has recovered previously unknown
conservation laws in applications with known governing equations and
candidate features. Neural deflation~\cite{zhu2023machine} takes a
sequential approach: after each conservation law is learned, the
objective for the next candidate is deflated to prevent functional
dependence on the preceding laws. This construction is designed to
recover a complete family of functionally independent conservation
laws and, in its original Hamiltonian formulation, also enforces
pairwise involution. Its trajectory-based extension replaces direct
access to the vector field with a learned Hamiltonian
model~\cite{chen2025data}.

These methods differ in their data requirements, representations, and
ability to recover explicit or multiple independent invariants.
Nevertheless, to the best of our knowledge, they all formulate
conservation-law discovery on a fixed dataset or a prescribed region
of phase space. They do not distinguish conservation laws that exist
only locally from those that extend globally, nor do they study how the
number of independent conservation laws changes as the sampled domain
is expanded. The present work addresses this distinction (and the
concomitant variation of the number of 
conservation laws) by applying
neural deflation over nested domains and tracking the persistence and
breakdown of the learned conservation laws.

\section{\label{sec:method}Neural Deflation over expanding domains}

We now develop a numerical framework for approximating the
local-to-global continuation problem formulated in
Section~\ref{sec:background}. The method builds on neural deflation
\cite{zhu2023machine,chen2025data}, which sequentially learns
functionally independent conservation laws by modifying the loss used
to train each new candidate. In contrast to the
original Hamiltonian formulation, we omit pairwise involution
constraints while retaining functional independence through
the gradient-based deflation factor introduced below. This
factor penalizes candidates whose gradients become linearly
dependent on those of previously learned conservation laws.
Our objective is therefore to recover a maximal family of
\textit{independent} conservation laws on each domain, rather than a
maximal family \textit{in involution}. This distinction allows us to
target the full set of $d-1$ local invariants predicted by the
flow-box theorem, which need not be mutually in involution.
The method applies to a general autonomous dynamical
system~\eqref{eq:dynmiacal_system} without assuming a
Hamiltonian or Poisson structure.

We first describe the discovery of conservation laws on a single
prescribed domain. We then apply this fixed-domain procedure to a
nested family of expanding domains and use the resulting losses to
estimate the function $r\mapsto\nu( U_r)$.

\subsection{Learning a conservation law on a fixed domain}
\label{subsec:fixed_domain_conservation}

Let $ U\subset \Omega$ be a prescribed domain. We assume that the
vector field $\mathbf{f}(\bx)$ can be evaluated at sampled points in
$\Omega$, either directly from the governing equations or through
available vector-field data. We draw $N_{\mathrm{tr}}\in\mathbb{Z}^+$ and $N_{\mathrm{val}}\in\mathbb{Z}^+$ points, forming training and validation
sets $\mathcal{T}_U$ and
$\mathcal{V}_U$, respectively, from a probability distribution (typically uniform) supported on $ U$:
\[
    \mathcal{T}_ U
    =
    \left\{\bx^{(n)}\right\}_{n=1}^{N_{\mathrm{tr}}},
    \qquad
    \mathcal{V}_ U
    =
    \left\{\bx^{(n)}\right\}_{n=1}^{N_{\mathrm{val}}},
\]

A candidate conservation law is represented by a scalar-valued neural
network
\[
    I(\bx;\theta): U\to\R,
\]
where $\theta$ denotes the trainable parameters. Its spatial gradient
$\nabla I(\bx;\theta)$ is computed by automatic differentiation~\cite{baydin2018automatic}. The
conservation condition [Eq.~\eqref{eq:conservation_condition}] is
\[
    \mathbf{f}(\bx)\cdot\nabla I(\bx;\theta)=0.
\]
To remove the arbitrary scaling of both the vector field and the
candidate conservation law, we introduce the normalized vectors
\begin{align}
\label{eq:normalized_vectors}
    \widehat{\mathbf{f}}(\bx)
    \coloneqq
    \frac{\mathbf{f}(\bx)}
         {\|\mathbf{f}(\bx)\|_2},
    \quad 
    \widehat{\nabla I}(\bx;\theta)
    \coloneqq
    \frac{\nabla I(\bx;\theta)}
         {\|\nabla I(\bx;\theta)\|_2},
\end{align}
and define the pointwise
conservation loss as
\begin{align}
\label{eq:pointwise_conservation_loss}
    \ell_{\mathrm{cons}}
    [I;\bx]
    \coloneqq
    \left|
        \widehat{\mathbf{f}}(\bx)
        \cdot
        \widehat{\nabla I}(\bx)
    \right|^2.
\end{align}
Equivalently, away from equilibria and critical points of $I$,
\begin{align}
    \ell_{\mathrm{cons}}
    [I;\bx]
    =
    \frac{
        \left|
            \mathbf{f}(\bx)\cdot\nabla I(\bx)
        \right|^2
    }{
        \|\mathbf{f}(\bx)\|_2^2
        \|\nabla I(\bx)\|_2^2
    }.
\end{align}
Thus, $\ell_{\mathrm{cons}}[I;\bx]$ is the squared cosine of the angle between the vector field and the gradient of $I$. It vanishes precisely when the vector field is tangent to the level set of $I$ at $\bx$. The normalization makes the loss invariant under nonzero rescalings of both $\mathbf{f}$ and $I$ and, in particular, prevents the loss from being minimized merely by shrinking the magnitude of the network output.

The first candidate conservation law $I_1(\cdot;\theta_1)$ is trained
by minimizing the empirical conservation loss
\begin{align}
\label{eq:first_conservation_loss}
    \mathcal{L}_{1}
    (\theta_1;\mathcal{T}_ U)
    \coloneqq
    \frac{1}{|\mathcal{T}_ U|}
    \sum_{\bx\in\mathcal{T}_ U}
    \ell_{\mathrm{cons}}
    [I_1(\cdot;\theta_1);\bx].
\end{align}
Let $\theta_1^\ast$ denote the trained parameters. A small validation
loss indicates that $I_1(\cdot;\theta_1^\ast)$ approximately satisfies
the conservation equation over $ U$. This loss alone, however,
cannot be used to learn a complete family of conserved quantities since after one conservation law
has been identified, subsequent networks may simply learn functions of
the same invariant.

\subsection{Functional independence through neural deflation}
\label{subsec:neural_deflation}

To prevent such redundancy, we employ neural deflation, which
penalizes the functional dependence of each new candidate on
previously learned conservation laws. Specifically, suppose
that $K-1$ candidate conservation laws
\[
    I_1(\cdot;\theta_1^\ast),
    \ldots,
    I_{K-1}(\cdot;\theta_{K-1}^\ast)
\]
have already been learned on a fixed domain $U$. We then train
the next candidate $I_K(\cdot;\theta_K)$ by optimizing only
$\theta_K$, while keeping the parameters
$\theta_1^\ast,\ldots,\theta_{K-1}^\ast$ fixed. The previously
learned functions therefore remain unchanged during this
training stage, and their spatial gradients are used to
assess the functional independence of the new candidate.

At each point $\bx\in U$, define the linear subspace spanned by the
normalized gradients previously learned:
\begin{align}
\label{eq:previous_gradient_span}
    \mathcal{S}_{K-1}(\bx)
    \coloneqq
    \operatorname{span}
    \left\{
        \widehat{\nabla I_1}
        (\bx;\theta_1^\ast),
        \ldots,
        \widehat{\nabla I_{K-1}}
        (\bx;\theta_{K-1}^\ast)
    \right\}
    \subset\R^d.
\end{align}
The component of the new gradient that is orthogonal to this subspace
is
\begin{align}
\label{eq:orthogonal_gradient_component}
    \mathbf{p}_K(\bx;\theta_K)
    \coloneqq
    \operatorname{Proj}_{\mathcal{S}_{K-1}(\bx)^\perp}
    \widehat{\nabla I_K}(\bx;\theta_K),
\end{align}
where $\operatorname{Proj}_V:\R^d\to\R^d$ denotes the projection onto the subspace $V\subseteq\mathbb{R}^d$.
We use the squared norm $\|\mathbf{p}_K\|$ as a pointwise independence factor:
\begin{align}
\label{eq:independence_factor}
    \ell_{\mathrm{ind},K}
    [I_K(\cdot; \theta_K); \bx]
    \coloneqq
    \left\|
        \mathbf{p}_K(\bx;\theta_K)
    \right\|_2^2.
\end{align}
Because the candidate gradient is normalized, it's clear that
$0\leq\ell_{\mathrm{ind},K}\leq 1$. Moreover,
\[
    \ell_{\mathrm{ind},K}(\bx;\theta_K)=0
\]
if and only if the gradient of $I_K$ lies in the span of the previous
gradients at $\bx$. In particular, if $I_K$ is locally a smooth
function of the previously learned conservation laws, then its
independence factor vanishes.

For $K\geq 2$, define the pointwise deflated loss by
\begin{align}
\label{eq:pointwise_deflated_loss}
    \ell_{\mathrm{defl},K}
    \left[I_K(\cdot; \theta_K); \bx\right]
    \coloneqq
    \frac{
        \ell_{\mathrm{cons}}
        [I_K(\cdot;\theta_K);\bx]
    }{
        \left|
            \ell_{\mathrm{ind},K}\left[I_K(\cdot; \theta_K); \bx\right]
        \right|^\alpha
    },
\end{align}
where $\alpha>0$ is a tunable deflation parameter. The numerator
enforces the conservation condition, whereas the denominator penalizes
functional dependence on the previously learned conservation laws.
In particular, $\ell_{\mathrm{ind},K}\left[I_K(\cdot; \theta_K); \bx\right]$ vanishes when
$\nabla I_K(\bx;\theta_K)$ lies in the span of the preceding
gradients, causing the pointwise deflated loss to become singular.
This prevents the new candidate from reproducing a nonlinear function
of the previously discovered conservation laws. The exponent $\alpha$
controls the strength of the penalty near such dependent
configurations.

The $K$th network is trained by minimizing the empirical deflated loss
over the training set:
\begin{align}
\label{eq:empirical_deflated_loss}
    \mathcal{L}_{K}
    (\theta_K;\mathcal{T}_U)
    \coloneqq
    \frac{1}{|\mathcal{T}_U|}
    \sum_{\bx\in\mathcal{T}_U}
    \ell_{\mathrm{defl},K}
    \left[I_K(\cdot; \theta_K); \bx\right].
\end{align}
In the original neural deflation method for Hamiltonian systems, the
numerator also contains pairwise involution losses between the new
candidate and the previously learned conservation laws
\cite{zhu2023machine}. We omit these terms here. For a general vector
field, there need not be a canonical Poisson bracket, and our objective
is to recover a maximal family of functionally independent
conservation laws rather than a family of first integrals in
involution.

As in the original neural deflation framework~\cite{zhu2023machine}, a notable property of the proposed loss is its
consistency in the infinite-sample limit. In this limit, the empirical
average over the training set is replaced by an expectation with
respect to a probability measure on the domain $U$. The following theorem
summarizes this property.

\begin{theorem}[Consistency of neural deflation]
\label{thm:deflation_consistency}
Assume that $\{I_k(\cdot;\theta_k^*)\}_{k=1}^{K-1}$ is a set of $K-1$ functionally independent conservation laws on
$U\subset\R^d$, where $\theta_k^*\in\mathbb{R}^p$ denotes the $p\in\mathbb{Z}^+$ learned parameters after training. Let $\mu$ be a probability measure on $U$ that is
mutually absolutely continuous with respect to the Lebesgue measure on $U$.
For a candidate $I_K(\cdot;\theta_K)$, define the population deflated
loss
\begin{align}
\label{eq:population_deflated_loss}
    \mathcal{L}_{K}^{\mu}(\theta_K)
    \coloneqq
    \mathbb{E}_{\bx\sim\mu}
    \left[
        \frac{
            \ell_{\mathrm{cons}}
            [I_K(\cdot;\theta_K);\bx]
        }{
            \left|
                \ell_{\mathrm{ind},K}\left[I_K(\cdot; \theta_K); \bx\right]
            \right|^\alpha
        }
    \right],
\end{align}
where $\alpha>0$ and the integrand is interpreted as infinite whenever
$\ell_{\mathrm{ind},K}\left[I_K(\cdot; \theta_K); \bx\right]=0$. Then $\mathcal{L}_{K}^{\mu}(\theta_K^*)=0$ 
if and only if
\begin{itemize}
    \item $I_K(\cdot;\theta_K^*)$ is a conservation law, and
    \item $\{I_k(\cdot;\theta_k^*)\}_{k=1}^{K}$ is functionally independent $\mu$-almost everywhere on $U$.
\end{itemize}
\end{theorem}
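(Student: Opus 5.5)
The argument rests on one fact: the expectation of a nonnegative extended-real function vanishes iff the integrand vanishes $\mu$-almost everywhere. Because $\mu$ and Lebesgue measure on $U$ are mutually absolutely continuous, "$\mu$-a.e." and "Lebesgue-a.e." are interchangeable. Write $g(\bx)\coloneqq \ell_{\mathrm{cons}}[I_K;\bx]/|\ell_{\mathrm{ind},K}[I_K;\bx]|^\alpha \in[0,\infty]$, with the stated convention $g=\infty$ when $\ell_{\mathrm{ind},K}=0$. We must check that $g$ is measurable; this holds because $\ell_{\mathrm{cons}}$ and $\ell_{\mathrm{ind},K}$ are built from continuous gradients, normalizations and projections, which are measurable off null sets. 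Then $\mathcal{L}_K^\mu(\theta_K^*)=0$ is equivalent to $g=0$ $\mu$-a.e. The proof therefore reduces to translating "$g=0$ a.e." into the two bullet conditions.

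\emph{($\Rightarrow$)} Suppose $g=0$ $\mu$-a.e. Since $g=\infty$ wherever $\ell_{\mathrm{ind},K}=0$, we get $\ell_{\mathrm{ind},K}>0$ a.e., and hence $\ell_{\mathrm{cons}}=0$ a.e.

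For the conservation bullet: $\ell_{\mathrm{cons}}=0$ a.e. gives $\mathbf{f}\cdot\nabla I_K=0$ a.e. (away from equilibria and critical points of $I_K$ this follows from the cosine interpretation; at points where either vector vanishes the product is trivially zero). The function $\mathbf{f}\cdot\nabla I_K$ is continuous for $C^1$ networks and $\mathbf{f}$, and it vanishes on a set of full Lebesgue measure, hence on a dense set. Continuity then gives vanishing everywhere on $U$, which is Eq.~\eqref{eq:conservation_condition}, so $I_K$ is a conservation law.

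For the independence bullet: $\ell_{\mathrm{ind},K}(\bx)>0$ means $\widehat{\nabla I_K}(\bx)\notin\mathcal S_{K-1}(\bx)$. Intersect this full-measure set with the full-measure set on which $\nabla I_1,\dots,\nabla I_{K-1}$ are linearly independent, which is the hypothesis via Eq.~\eqref{eq:full_rank}. On the intersection, $\dim\mathcal S_{K-1}=K-1$ and $\nabla I_K$ lies outside it, so the $K$ gradients are linearly independent and $\operatorname{rank}D\mathbf I=K$. This is independence $\mu$-a.e.

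\emph{($\Leftarrow$)} If $I_K$ is a conservation law, then $\ell_{\mathrm{cons}}=0$ everywhere, with the convention that $\ell_{\mathrm{cons}}=0$ where $\nabla I_K$ or $\mathbf f$ vanishes. If the $K$ gradients are independent a.e., then on that set $\nabla I_K\neq0$ and $\nabla I_K\notin\mathcal S_{K-1}$, so $\ell_{\mathrm{ind},K}>0$ there. Hence $g=0/(\text{positive})=0$ a.e., and the expectation is zero.

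The main obstacle is purely one of conventions: how $\ell_{\mathrm{cons}}$ and $\ell_{\mathrm{ind},K}$ are defined at equilibria of $\mathbf f$ and at critical points of $I_K$, where the normalizations in Eq.~\eqref{eq:normalized_vectors} are undefined. I would handle this by noting two facts. First, critical points of $I_K$ lie inside the set where independence fails, which has measure zero under the independence bullet and is excluded by $\ell_{\mathrm{ind},K}>0$ in the forward direction. Second, the zero set of $\mathbf f$ contributes nothing to the conservation condition. If that zero set has positive measure, I would adopt the convention that $\ell_{\mathrm{cons}}=0$ there, since $\mathbf f\cdot\nabla I_K=0$ trivially. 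Only after fixing these conventions does the continuity step upgrading "a.e." to "everywhere" go through cleanly.
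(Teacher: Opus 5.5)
Your proposal is correct and follows essentially the same route as the paper's proof. In both, nonnegativity of the integrand reduces $\mathcal{L}_K^{\mu}(\theta_K^*)=0$ to the pair of conditions $\ell_{\mathrm{cons}}=0$ and $\ell_{\mathrm{ind},K}>0$ holding $\mu$-a.e., and these are then translated into the two bullet conditions. Your extra steps make rigorous what the paper's proof leaves implicit: the continuity argument upgrading $\mathbf{f}\cdot\nabla I_K=0$ from almost everywhere to everywhere, the explicit intersection with the full-measure set where $\nabla I_1,\ldots,\nabla I_{K-1}$ are independent, and the conventions at zeros of $\mathbf{f}$ and $\nabla I_K$.
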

The proof is a straightforward adaptation of the argument for the original neural deflation method and is deferred to Appendix~\ref{app:proof} for completeness.

\subsection{Estimating the number of conservation laws}
\label{subsec:estimating_number}

For a fixed domain $ U$, we train the networks sequentially: $I_1,\ I_2,\ \ldots, I_{d-1}$. Because a $d$-dimensional system admits at most $d-1$ functionally
independent (local) conservation laws, it is sufficient
to consider at most $d-1$ networks. After training the $K$th network,
we evaluate its deflated loss and conservation loss on the independent validation set:
\begin{align}
\label{eq:validation_deflated_loss}
    \mathcal{L}_{\mathrm{defl},K}^{\mathrm{val}}( U)
    &\coloneqq
    \frac{1}{|\mathcal{V}_ U|}
    \sum_{\bx\in\mathcal{V}_ U}
    \ell_{\mathrm{defl},K}
    \left[I_K(\cdot; \theta_K^\ast); \bx\right].\\
\label{eq:validation_conservation_loss}
    \mathcal{L}_{\mathrm{cons}, K}^{\mathrm{val}}( U)
    &\coloneqq
    \frac{1}{|\mathcal{V}_ U|}
    \sum_{\bx\in\mathcal{V}_ U}
    \ell_{\mathrm{cons}}
    [I_K(\cdot; \theta_K^\ast); \bx].
\end{align}
For $K=1$, the deflated loss is understood to be the conservation loss
in \eqref{eq:first_conservation_loss}.

If $ U$ supports at least $K$ independent conservation laws, the
first $K$ validation losses are expected to remain near the baseline
optimization and approximation error. Once the method attempts to
learn more independent conservation laws than the domain supports, the
conservation and independence requirements become incompatible, and a
pronounced increase in the validation loss is expected. We therefore
estimate the number of independent conservation laws by the index
preceding the first significant loss increase:
\begin{align}
\label{eq:estimated_number}
    \widehat{\nu}( U)
    =
    K-1
    ~
    \text{if a significant increase first occurs at index $K$.}
\end{align}
Rather than assuming a universal absolute threshold, we examine both
the magnitude of the losses and their separation across successive
values of $K$. These will be illustrated along with numerical experiments in  Section~\ref{sec:experiments}.

\subsection{Continuation over expanding domains}
\label{subsec:expanding_domains}

We now apply the neural deflation procedure to a nested
family
\[
     U_{r_1}
    \subset
     U_{r_2}
    \subset
    \cdots
    \subset
     U_{r_M},
    \qquad
    r_1<r_2<\cdots<r_M.
\]
For each $r_m$, separate training and validation samples are drawn from
$ U_{r_m}$, and the sequence of neural conservation laws is learned
on that domain. This produces an estimate $\widehat{\nu}( U_{r_m})$
for every domain size. Typical choices include
\begin{itemize}
    \item rectangular domains centered at a prescribed
point $\bx_0$,
\begin{align}
\label{eq:rectangular_domain}
     U_r
    =
    \bx_0+(-r,r)^d;
\end{align}
\item energy sublevel sets in Hamiltonian examples,
\begin{align}
\label{eq:energy_sublevel_domain}
     U_r
    =
    \{\bx\in D:H(\bx)<r\}.
\end{align}
\end{itemize}
The first construction allows the domain to grow from either a regular
point or an equilibrium. The second respects the geometry of the
Hamiltonian and is useful when the continuation is naturally
parameterized by energy.

At every fixed value of $r$, the method learns conservation laws only
on $ U_r$. These learned quantities should therefore be regarded
as local conservation laws on that domain unless $ U_r$ covers the
entire phase-space region under consideration. The local-to-global
information is obtained from
the variation of $r\mapsto\widehat{\nu}( U_r)$ over the nested family.

When the domains shrink to a regular point as $r\to 0$, Theorem~\ref{thm:flow_box} predicts
\[
    \widehat{\nu}( U_r)\approx d-1
\]
for sufficiently small $r$. A decrease in
$\widehat{\nu}( U_r)$ as $r$ grows indicates that at least one of
the conservation laws supported on smaller domains cannot be extended
as part of an independent family over the enlarged region. Although numerical persistence alone does
not provide a mathematically rigorous statement about the global validity of a conservation law, it does provide evidence that a candidate local law, which
continues to be approximately recovered over all tested domains, {\it may} be a global law.

\subsection{Average, maximal, and pointwise diagnostics}
\label{subsec:loss_diagnostics}

The estimated number $\widehat{\nu}( U_r)$ detects whether a loss
of conservation laws has occurred, but it does not by itself reveal
where the obstruction is located and/or possibly its nature. To investigate the spatial structure
of a breakdown, we inspect the pointwise conservation and deflated
losses over the domain.

Specifically, to localize a potential obstruction, we visualize the pointwise fields
\[
    \bx
    \mapsto
    \ell_{\mathrm{cons}}[I_{K,r}(\cdot; \theta_{K, r}^\ast);\bx],
    \qquad
    \bx
    \mapsto
    \ell_{\mathrm{defl},K}
    [I_{K,r}(\cdot; \theta_{K, r}^\ast);\bx],
\]
either throughout the domain $U_r$ or on selected lower-dimensional slices.
Regions of large conservation loss indicate that the candidate
function fails to remain constant along the vector field. Regions of
small independence factor indicate that its gradient becomes nearly
dependent on the previously learned gradients. The deflated loss
combines these two effects, so it is sometimes useful to inspect each quantity separately when interpreting a breakdown.

In some cases the interpretation of breakdown is simple since the high-loss region may concentrate near an
equilibrium or another isolated singularity, thereby localizing the
obstruction to continuation.  In more complicated scenarios (e.g., when a recurrence arises 
along an oscillatory trajectory), the method may still sharply identify the critical value of $r$ at which a conservation law is lost, while failing to resolve the specific location on the corresponding ``$r$-shell'' where the breakdown occurs.

\subsection{Summary of the expanding-domain procedure}
\label{subsec:method_summary}

The complete procedure can be summarized as follows:
\begin{enumerate}
    \item Choose a nested family of domains
    $\{ U_{r_m}\}_{m=1}^M$.

    \item For each domain $ U_{r_m}$, draw independent training and
    validation samples and evaluate the vector field at these points.

    \item Train the first neural conservation law using the normalized
    conservation loss \eqref{eq:first_conservation_loss}.

    \item Holding all previously trained networks fixed, sequentially
    train additional candidates using the deflated loss
    \eqref{eq:empirical_deflated_loss}.

    \item Identify the first candidate whose validation loss exhibits
    a pronounced increase, and use the preceding index to estimate
    $\widehat{\nu}( U_{r_m})$.

    \item Compare the results across domain sizes to detect decreases
    in $\widehat{\nu}( U_r)$ and, whenever possible, localize the
    corresponding obstruction.
\end{enumerate}


\section{\label{sec:experiments}Numerical experiments}
We now apply the expanding-domain neural deflation framework to a
collection of dynamical systems of increasing complexity. We begin
with elementary two- and three-dimensional examples for which the
local and global conservation laws are known analytically. We then
consider the H\'enon--Heiles system in both integrable and
non-integrable regimes, followed by the non-integrable double pendulum system.

For each system, we specify a nested family of training domains
$\{U_r\}$ and apply neural deflation independently on each domain. We
use the validation conservation and deflated losses to estimate the
number of conservation laws supported on $U_r$. When a loss increases,
we additionally examine its pointwise values to determine whether the
corresponding obstruction can be localized/identified within the domain.

\subsection{Network architecture and training setup}
\label{subsec:network_training}

Unless otherwise stated, we use the following common training setup. Each candidate conservation law $I_k(\bx;\theta_k)$ is represented by
a fully connected neural network with four hidden layers of width
$64$. The SiLU activation function~\cite{elfwing2018sigmoid}, $\operatorname{SiLU}(z)=z/(1+e^{-z})$, is applied after each hidden linear layer, while the output layer is linear and scalar-valued. All weight matrices are initialized using Xavier uniform initialization~\cite{glorot2010understanding}, and all
biases are initialized to zero.

For each domain $U_r\subset\R^d$, we independently sample a training
set $\mathcal{T}_{U_r}$ and a validation set
$\mathcal{V}_{U_r}$, each consisting of $50{,}000$ points drawn
uniformly from $U_r$. The network gradients with respect to the phase
variables are evaluated using automatic differentiation. Each
candidate network is trained for $1000$ epochs using the Adam
optimizer~\cite{kingma2015adam} with learning rate $10^{-3}$ and a batch size of $32$. The
deflation exponent is set to $\alpha=1$ in all experiments; however, the results are fairly robust to this choice. For example, $\alpha=0.5$ yields qualitatively similar results.

We report both the conservation loss and the deflated loss on the
validation set. Their averages measure the overall validity of a
candidate law on $U_r$, while their maximal and pointwise values are
used to detect localized failures. Since the training and validation
points are sampled from continuous distributions, isolated
equilibria and other sets of Lebesgue measure zero are not sampled
exactly. The empirical losses therefore assess conservation and
functional independence in an almost-everywhere sense, consistent
with Theorem~\ref{thm:deflation_consistency}.

\subsection{Elementary two- and three-dimensional systems}
\label{subsec:elementary_systems}

We begin with four elementary systems for which the relevant local and
global conservation laws are known analytically. These examples test
whether neural deflation can recover the local conservation laws
predicted by the flow-box theorem, detect their loss as the domain
expands, and distinguish a genuine obstruction to continuation from an
equilibrium across which a global conservation law remains well
defined.

\begin{example}[2D node]
\label{ex:node_2d}
Consider
\begin{align}
\label{eq:experiment_node_2d}
    \left\{
    \begin{aligned}
        \dot{x} &= x,\\
        \dot{y} &= y.
    \end{aligned}
    \right.
\end{align}
This system has no nonconstant global conservation law on $\R^2$.
However, on any simply connected domain
$U\subset\R^2\setminus\{(0,0)\}$, a local conservation law can be
expressed as a smooth function of the polar angle. In particular, on a box
centered at $(2,0)$ that does not reach the origin,
\begin{align}
\label{eq:node_2d_local_law}
    I(x,y)=\frac{y}{x}
\end{align}
is a smooth local conservation law.
\end{example}

\begin{example}[3D degenerate node]
\label{ex:node_3d}
Consider
\begin{align}
\label{eq:experiment_node_3d}
    \left\{
    \begin{aligned}
        \dot{x} &= x,\\
        \dot{y} &= y,\\
        \dot{z} &= 0.
    \end{aligned}
    \right.
\end{align}
The system has the global conservation law
\begin{align}
\label{eq:node_3d_global_law}
    I_1(x,y,z)=z.
\end{align}
Away from the equilibrium line
\[
    \mathcal{E}
    =
    \{(0,0,z):z\in\R\},
\]
it also admits an additional local conservation law. In particular, on
a box centered at $(2,0,0)$ that does not intersect $\mathcal{E}$,
\begin{align}
\label{eq:node_3d_local_law}
    I_2(x,y,z)=\frac{y}{x}
\end{align}
is functionally independent of $I_1$.
\end{example}

\begin{example}[2D center]
\label{ex:center_2d}
Consider
\begin{align}
\label{eq:experiment_center_2d}
    \left\{
    \begin{aligned}
        \dot{x} &= y,\\
        \dot{y} &= -x.
    \end{aligned}
    \right.
\end{align}
The system has the global conservation law (energy)
\begin{align}
\label{eq:center_2d_global_law}
    I(x,y)=x^2+y^2.
\end{align}
Although the origin is an equilibrium and a critical point of $I$, the
conservation law extends smoothly across it.
\end{example}

\begin{example}[3D degenerate center]
\label{ex:center_3d}
Consider
\begin{align}
\label{eq:experiment_center_3d}
    \left\{
    \begin{aligned}
        \dot{x} &= y,\\
        \dot{y} &= -x,\\
        \dot{z} &= 0.
    \end{aligned}
    \right.
\end{align}
This system has two globally defined conservation laws,
\begin{align}
\label{eq:center_3d_global_laws}
    I_1(x,y,z)
    &=
    x^2+y^2,
    &
    I_2(x,y,z)
    &=
    z.
\end{align}
Their gradients are independent away from the equilibrium line
$\mathcal{E}=\{(0,0,z):z\in\R\}$ and hence independent almost
everywhere.
\end{example}

For these examples, let
\[
    \bx_0^{(2)}=(2,0),
    \qquad
    \bx_0^{(3)}=(2,0,0),
\]
and define the expanding boxes
\begin{align}
\label{eq:elementary_domains}
    U_r^{(d)}
    \coloneqq
    \bx_0^{(d)}+[-r,r]^d,
    \qquad
    d\in\{2,3\}.
\end{align}
The two-dimensional systems are trained on $U_r^{(2)}$, while the
three-dimensional systems are trained on $U_r^{(3)}$. We consider
\[
    r\in\{1.5,1.9,2.1,2.5\},
\]
which samples the domains immediately before and after the critical
value $r=2$. At this value, $U_r^{(2)}$ first reaches the origin,
whereas $U_r^{(3)}$ first intersects the equilibrium line
$\mathcal{E}$.

For the 2D node (Example~\ref{ex:node_2d}), the box $U_r^{(2)}$ supports one local conservation
law as long as $r<2$. Figure~\ref{fig:2d_node} shows the pointwise
conservation loss of the learned candidate. For $r=1.5$ and $r=1.9$,
the loss remains small throughout the domain, consistent with the
existence of the local conservation law  $y/x$. Once the box expands past the critical value $r=2$ and it now contains the origin, a pronounced
high-loss region appears near $(0,0)$. This indicates that the local
conservation law cannot be continued over the enlarged domain and
correctly identifies the origin as the obstruction.

\begin{figure}
    \centering
    \begin{subfigure}[t]{0.45\textwidth}
        \centering
        \includegraphics[width=\textwidth]{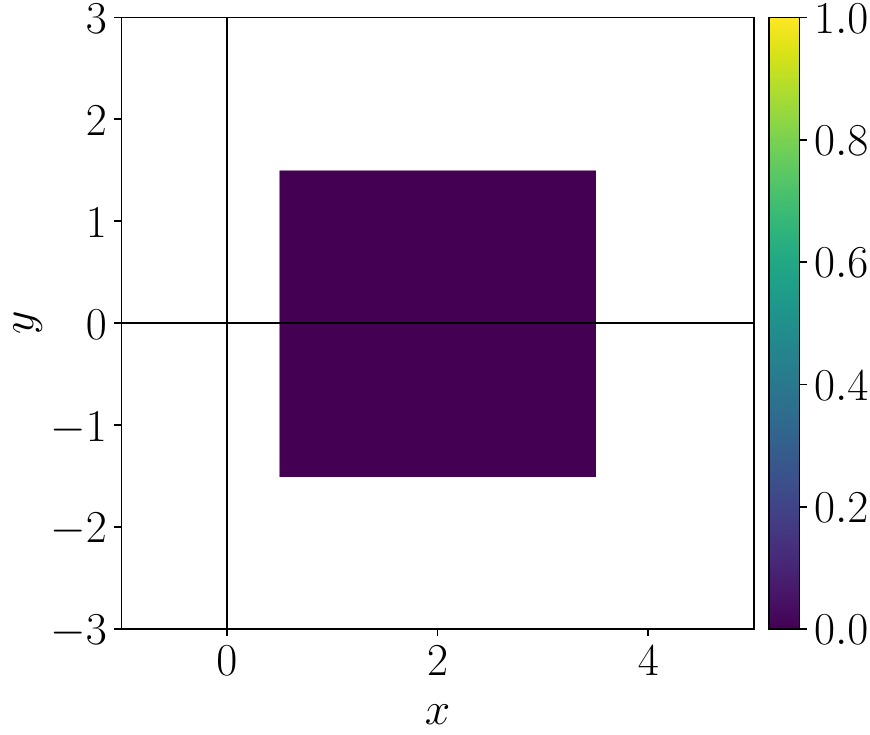}
        \subcaption{$r=1.5$}
    \end{subfigure}
    \hfill
    \begin{subfigure}[t]{0.45\textwidth}
        \centering
        \includegraphics[width=\textwidth]{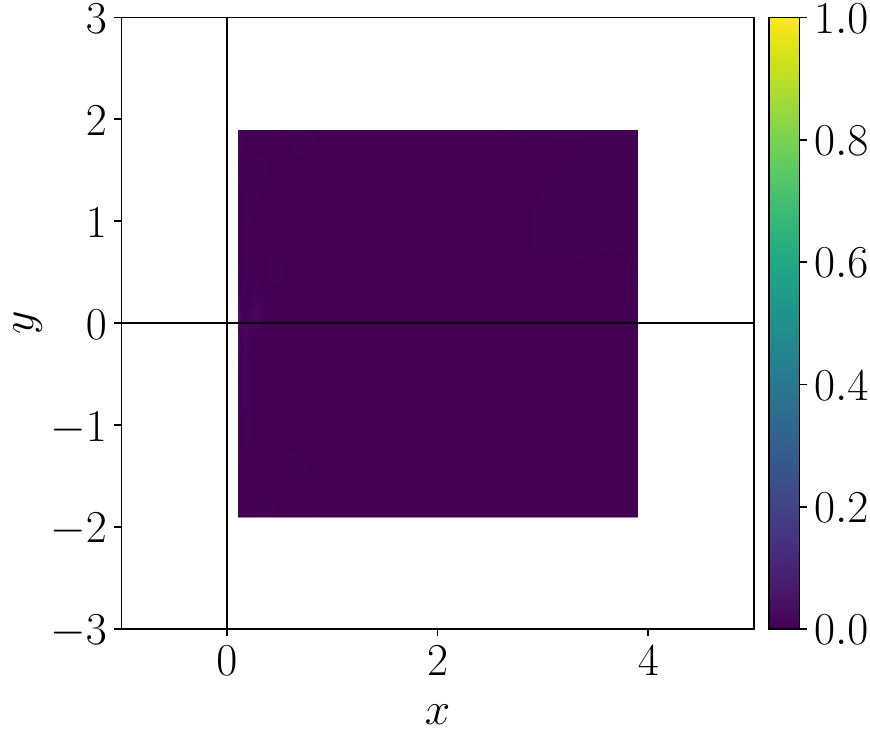}
        \subcaption{$r=1.9$}        
    \end{subfigure}
    \begin{subfigure}[t]{0.45\textwidth}
        \centering
        \includegraphics[width=\textwidth]{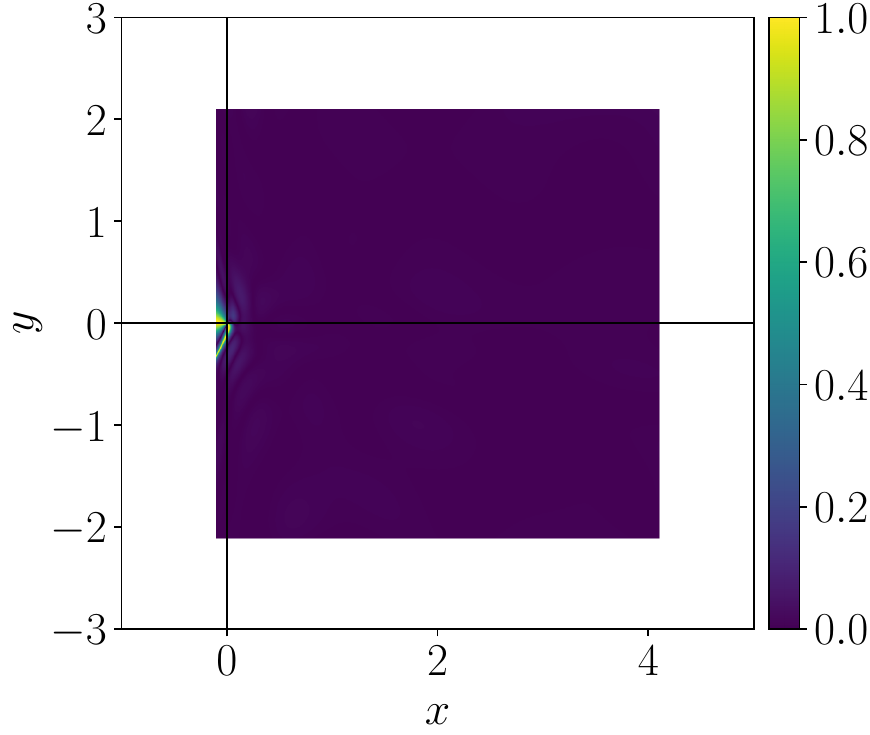}
        \subcaption{$r=2.1$}        
    \end{subfigure}
    \hfill
    \begin{subfigure}[t]{0.45\textwidth}
        \centering
        \includegraphics[width=\textwidth]{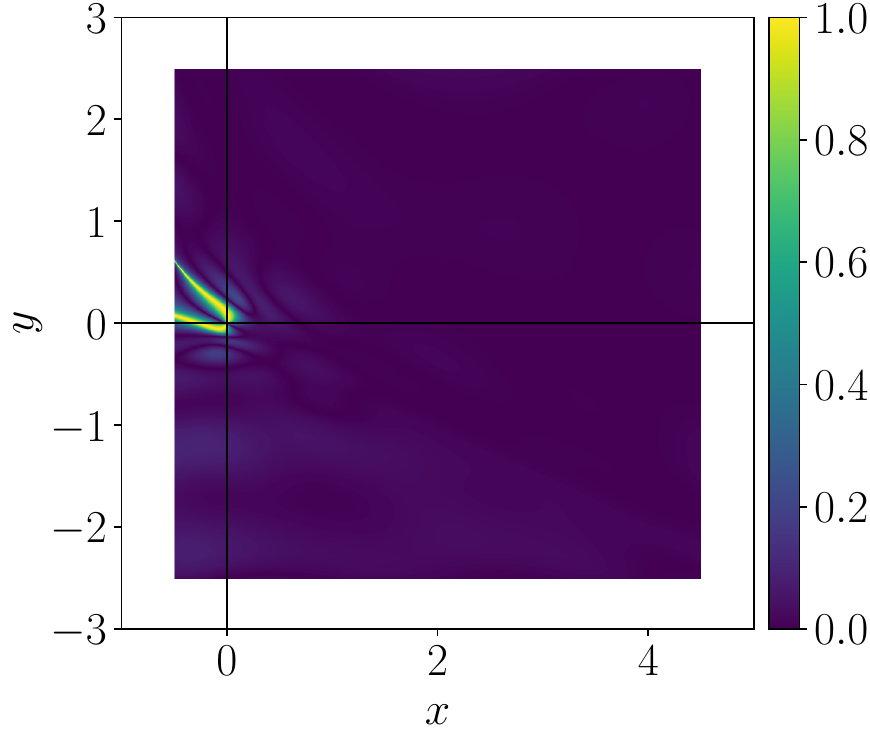}
        \subcaption{$r=2.5$}        
    \end{subfigure}
    \hfill  
    \caption{Pointwise conservation loss for the 2D node (Example~\ref{ex:node_2d}) on
$U_r^{(2)}=[2-r,2+r]\times[-r,r]$, with
$r=1.5,1.9,2.1,2.5$ from left to right. The loss remains small for
$r<2$ but jumps at the origin once $r>2$, where the local
conservation law $y/x$ can no longer be continued smoothly.}
    \label{fig:2d_node}
\end{figure}

For the 3D degenerate node (Example~\ref{ex:node_3d}), the box $U_r^{(3)}$ supports two
independent conservation laws for $r<2$: the global law $z$ and an
additional local law, such as $y/x$. Figure~\ref{fig:3d_node} shows the corresponding pointwise deflated losses on the slice $z=0$. For $r=1.5$
and $r=1.9$, both losses remain small. Once the box expands past
$r=2$ and intersects the equilibrium line, the loss for the first law
remains small, while the deflated loss for the second develops a sharp
increase near $(x,y)=(0,0)$. Thus, the global law persists, whereas
the additional local law breaks down, reducing the number of
independent conservation laws from two to one.

\begin{figure}
    \centering
    \begin{subfigure}[t]{0.24\textwidth}
        \centering
        \includegraphics[width=\textwidth]{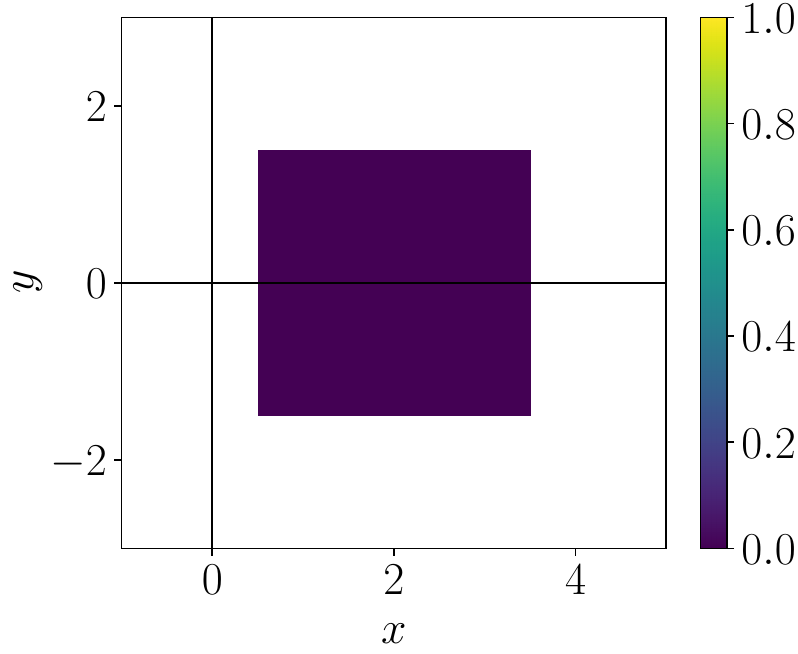}
        \subcaption{$I_1,~r=1.5$}
    \end{subfigure}
    \hfill
    \begin{subfigure}[t]{0.24\textwidth}
        \centering
        \includegraphics[width=\textwidth]{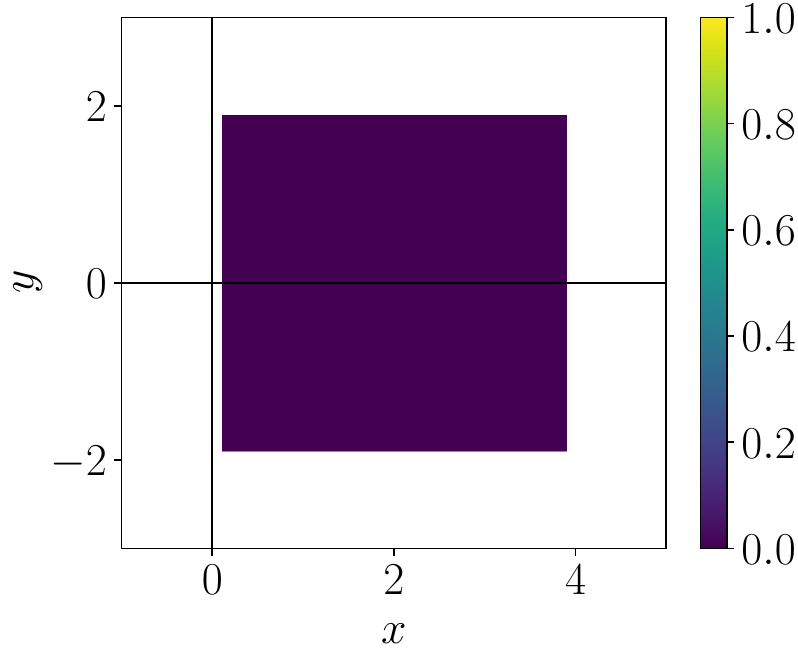}
        \subcaption{$I_1,~r=1.9$}        
    \end{subfigure}
    \begin{subfigure}[t]{0.24\textwidth}
        \centering
        \includegraphics[width=\textwidth]{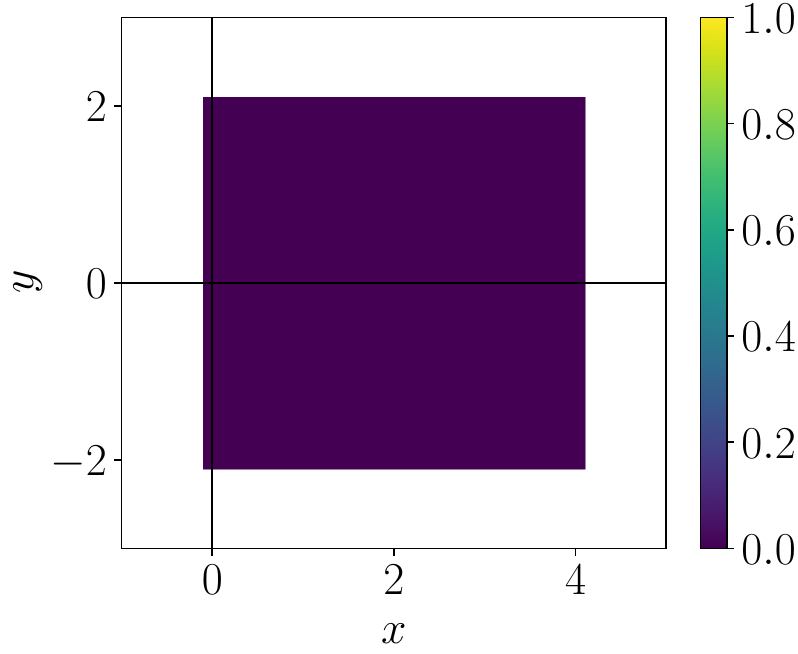}
        \subcaption{$I_1,~r=2.1$}        
    \end{subfigure}
    \hfill
    \begin{subfigure}[t]{0.24\textwidth}
        \centering
        \includegraphics[width=\textwidth]{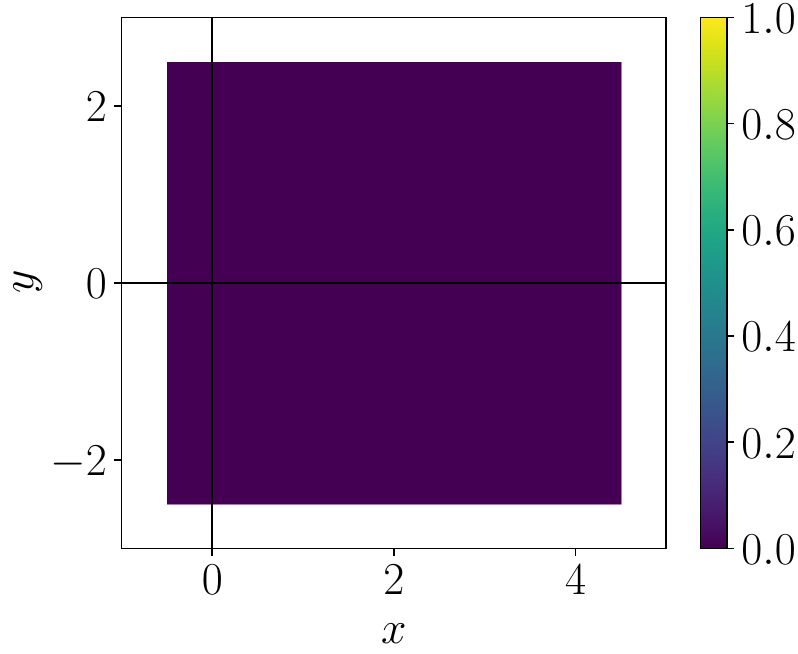}
        \subcaption{$I_1,~r=2.5$}        
    \end{subfigure}
    \hfill
        \begin{subfigure}[t]{0.24\textwidth}
        \centering
        \includegraphics[width=\textwidth]{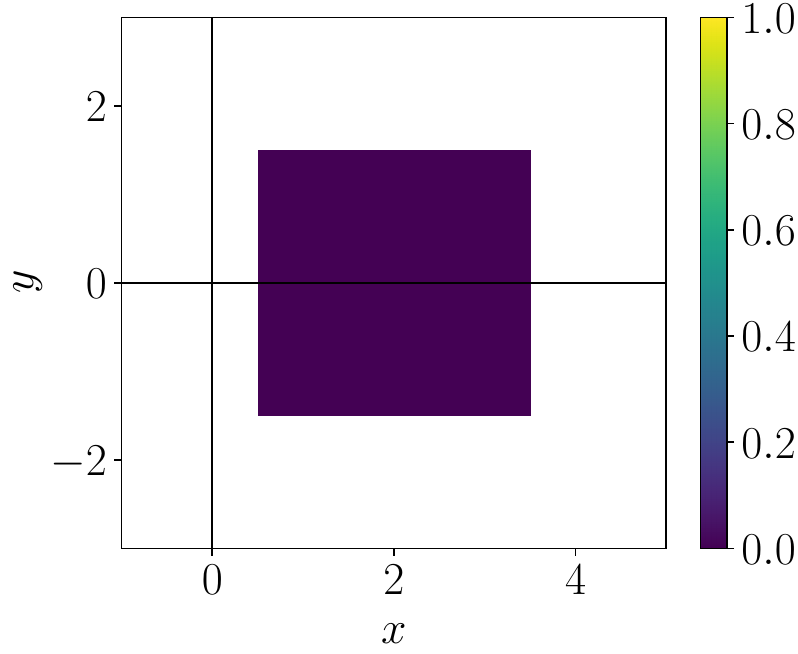}
        \subcaption{$I_2,~r=1.5$}
    \end{subfigure}
    \hfill
    \begin{subfigure}[t]{0.24\textwidth}
        \centering
        \includegraphics[width=\textwidth]{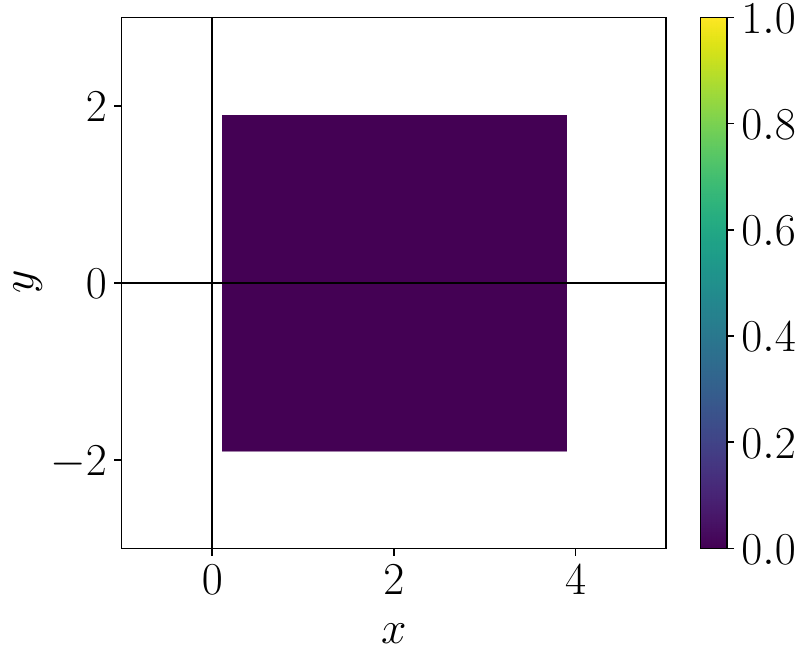}
        \subcaption{$I_2,~r=1.9$}        
    \end{subfigure}
    \begin{subfigure}[t]{0.24\textwidth}
        \centering
        \includegraphics[width=\textwidth]{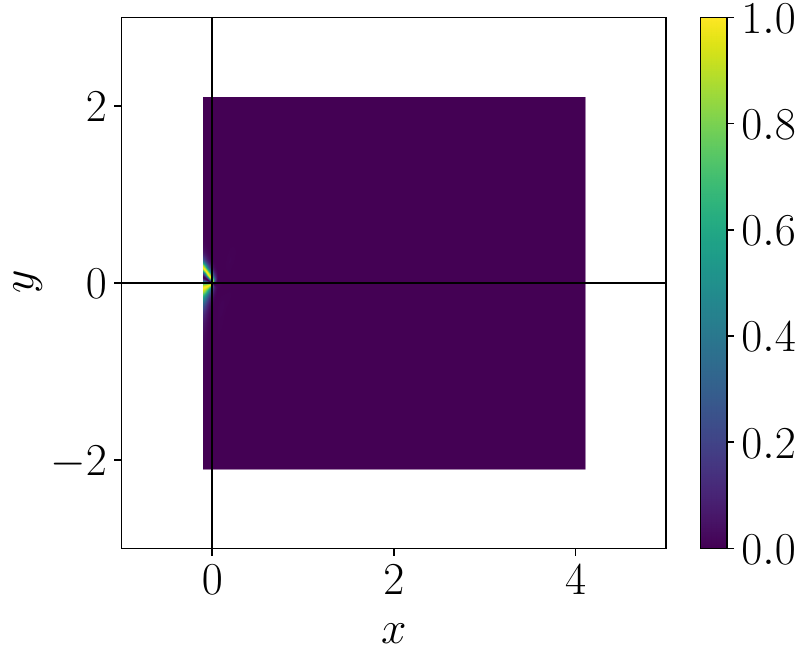}
        \subcaption{$I_2,~r=2.1$}        
    \end{subfigure}
    \hfill
    \begin{subfigure}[t]{0.24\textwidth}
        \centering
        \includegraphics[width=\textwidth]{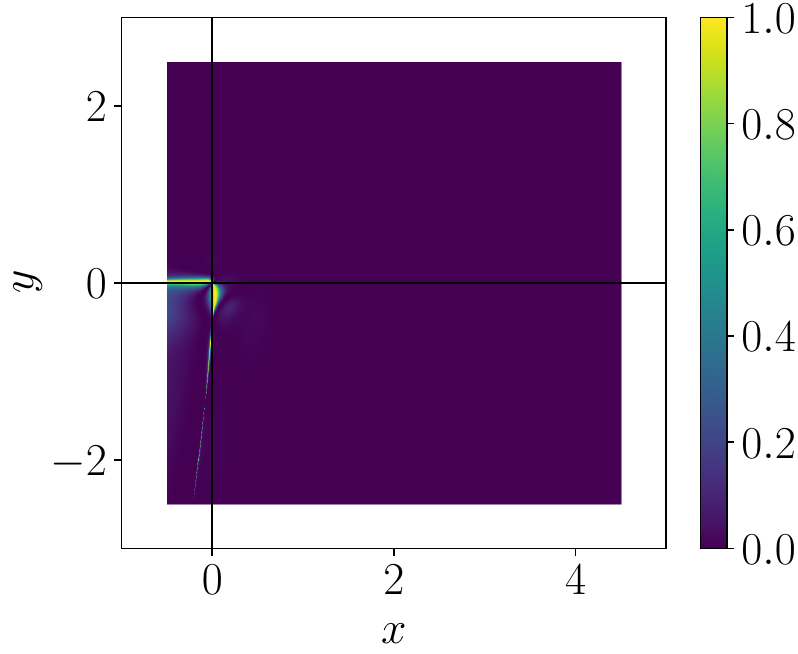}
        \subcaption{$I_2,~r=2.5$}        
    \end{subfigure}
    \caption{Pointwise losses for the 3D degenerate node (Example~\ref{ex:node_3d}) on the slice $z=0$ of
$U_r^{(3)}=[2-r,2+r]\times[-r,r]^2$, with
$r=1.5,1.9,2.1,2.5$ from left to right. Panels (a)--(d) show the
conservation loss, equivalently the deflated loss, for $I_1$, while
panels (e)--(h) show the deflated loss for $I_2$. The first
conservation law persists for all $r$, whereas the loss for the second
jumps near the equilibrium line once $r>2$.}
    \label{fig:3d_node}
\end{figure}


In contrast to the 2D node, the equilibrium at the origin does not
obstruct the conservation law of the 2D center
(Example~\ref{ex:center_2d}). Figure~\ref{fig:2d_center} shows the
pointwise conservation loss of the learned candidate. The loss remains
small for $r=1.5$ and $r=1.9$ and continues to remain small after the
box expands past $r=2$ and contains the origin. This is consistent with
the fact that $x^2+y^2$ is a global conservation law that extends
smoothly across the equilibrium.
\begin{figure}
    \centering
    \begin{subfigure}[t]{0.45\textwidth}
        \centering
        \includegraphics[width=\textwidth]{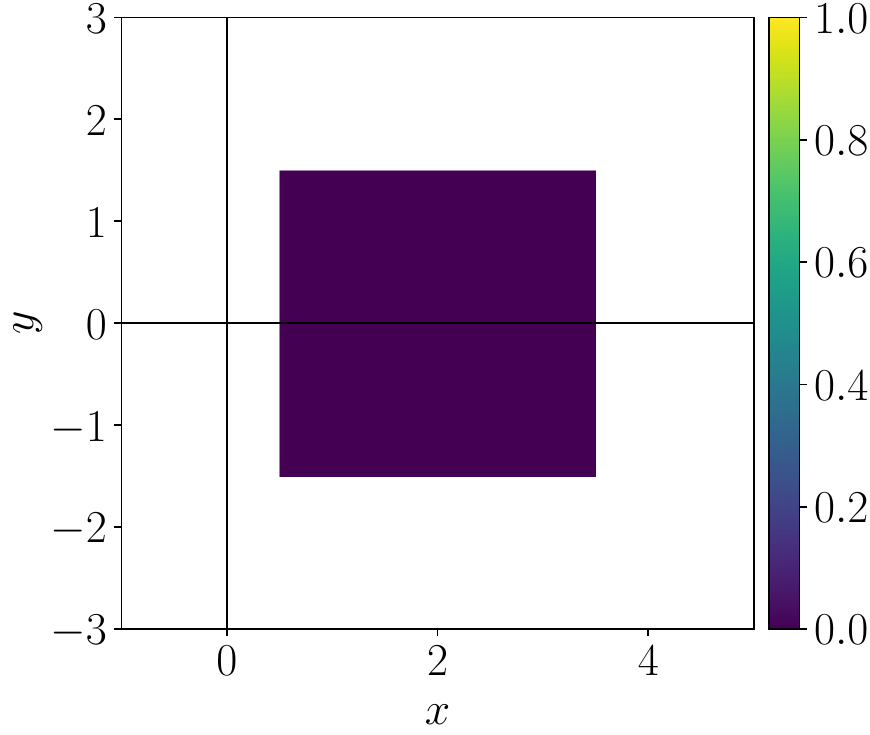}
        \subcaption{$r=1.5$}
    \end{subfigure}
    \hfill
    \begin{subfigure}[t]{0.45\textwidth}
        \centering
        \includegraphics[width=\textwidth]{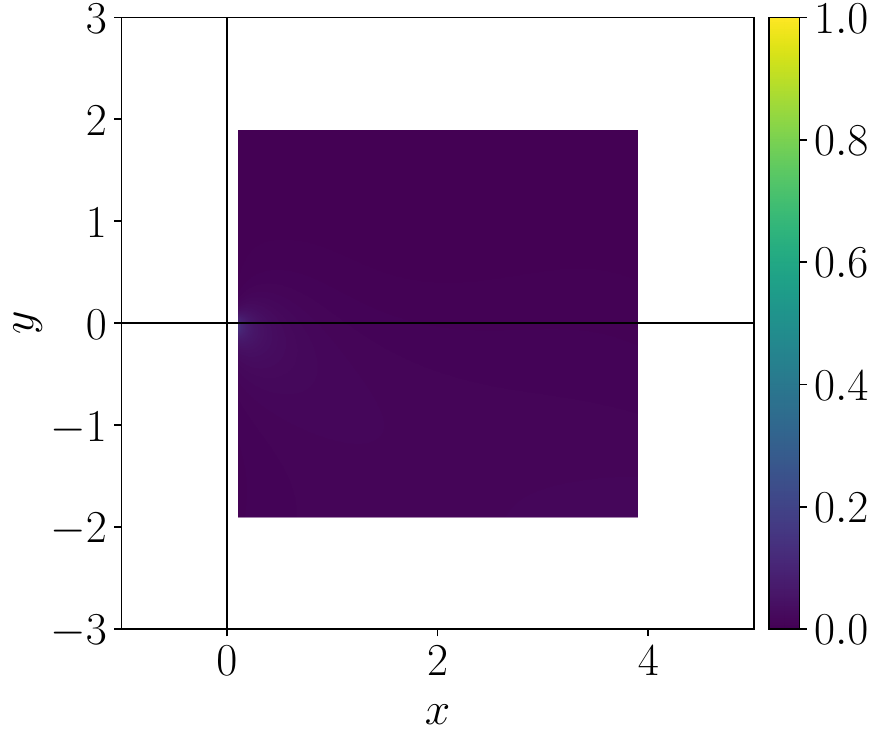}
        \subcaption{$r=1.9$}        
    \end{subfigure}
    \begin{subfigure}[t]{0.45\textwidth}
        \centering
        \includegraphics[width=\textwidth]{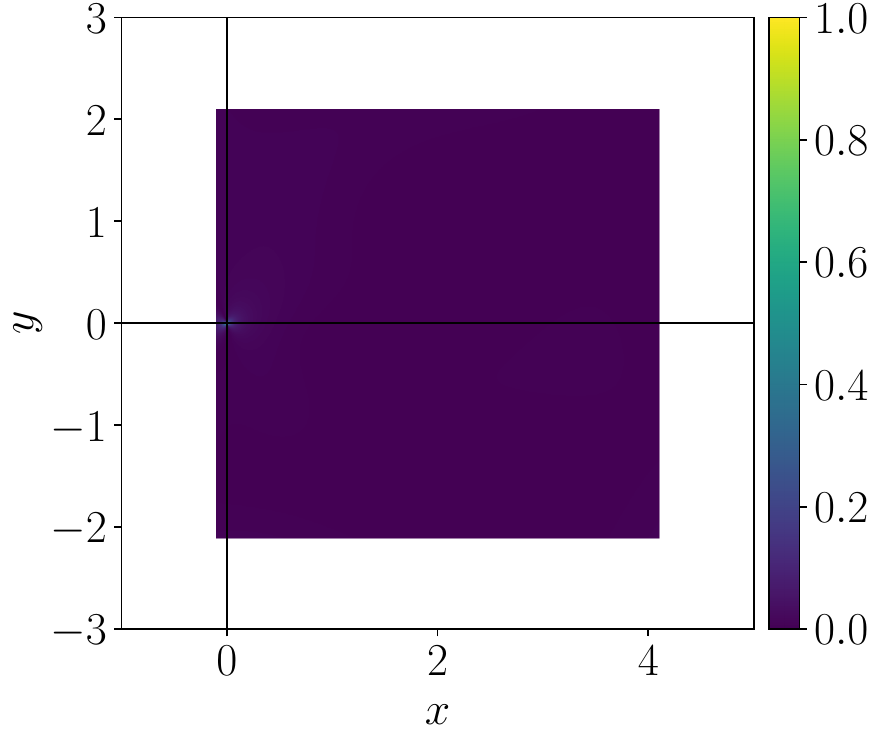}
        \subcaption{$r=2.1$}        
    \end{subfigure}
    \hfill
    \begin{subfigure}[t]{0.45\textwidth}
        \centering
        \includegraphics[width=\textwidth]{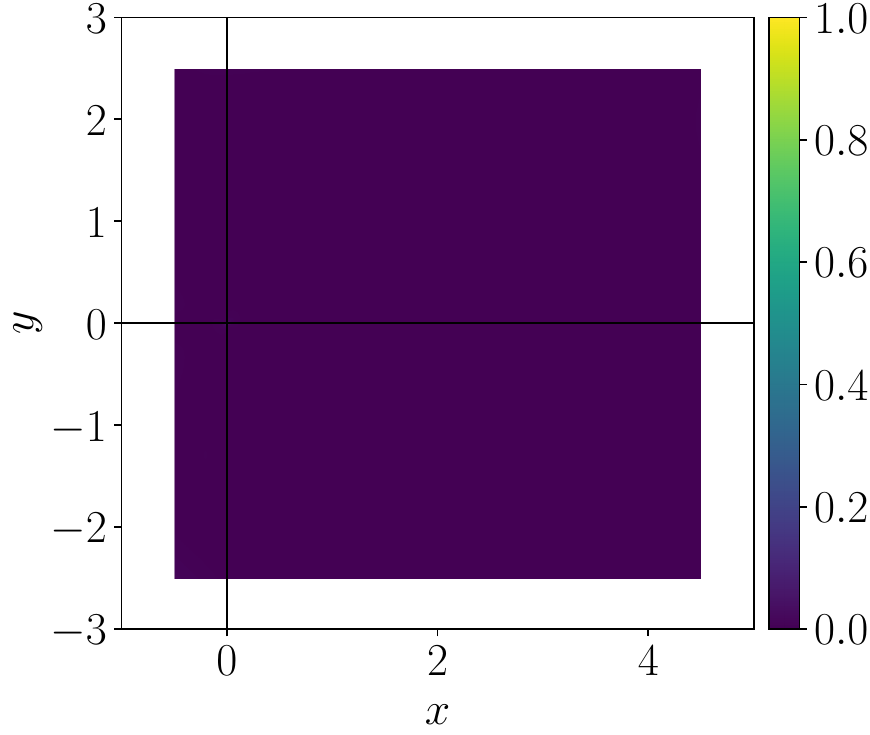}
        \subcaption{$r=2.5$}        
    \end{subfigure}
    \hfill  
    \caption{Pointwise conservation loss for the 2D center
(Example~\ref{ex:center_2d}) on
$U_r^{(2)}=[2-r,2+r]\times[-r,r]$, with
$r=1.5,1.9,2.1,2.5$ from left to right. Unlike the 2D node, the
conservation law $x^2+y^2$ persists after the domain contains the
origin.
}
    \label{fig:2d_center}
\end{figure}

The same distinction appears in three dimensions. For the 3D
degenerate center (Example~\ref{ex:center_3d}), both $x^2+y^2$ and $z$
are globally defined, and neither conservation law is lost when the
box intersects the equilibrium line. Figure~\ref{fig:3d_center} shows
the pointwise deflated losses on the slice $z=0$. For all tested values
of $r$, both learned candidates retain small losses.

\begin{figure}
    \centering
    \begin{subfigure}[t]{0.24\textwidth}
        \centering
        \includegraphics[width=\textwidth]{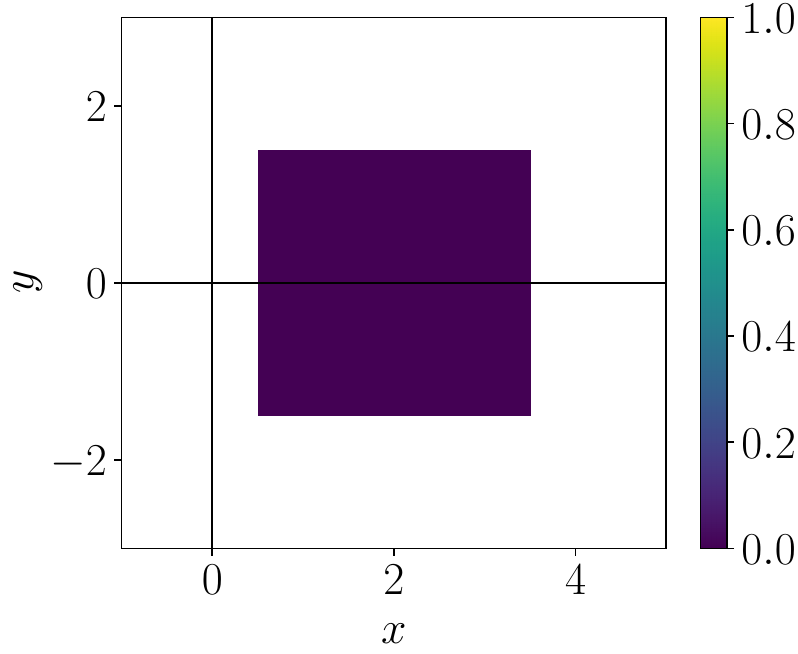}
        \subcaption{$I_1,~r=1.5$}
    \end{subfigure}
    \hfill
    \begin{subfigure}[t]{0.24\textwidth}
        \centering
        \includegraphics[width=\textwidth]{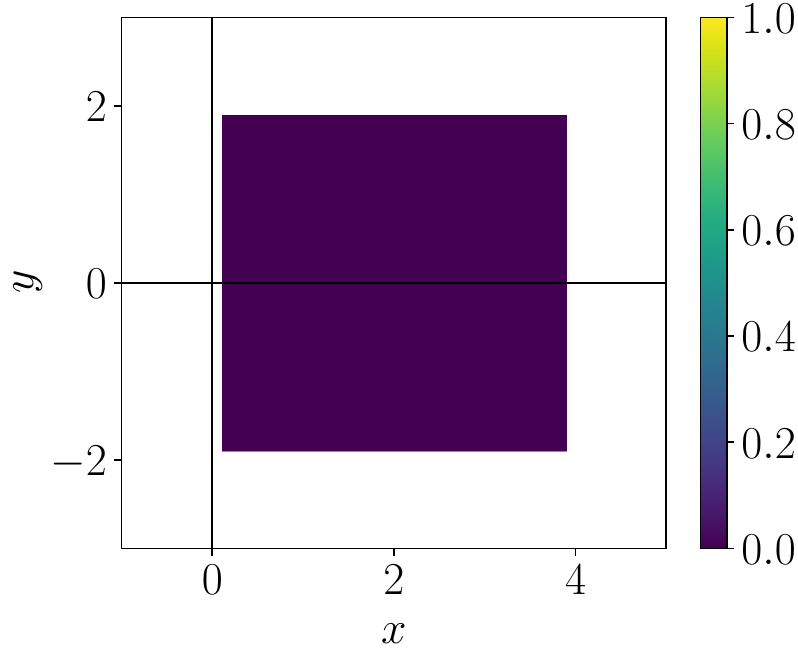}
        \subcaption{$I_1,~r=1.9$}        
    \end{subfigure}
    \begin{subfigure}[t]{0.24\textwidth}
        \centering
        \includegraphics[width=\textwidth]{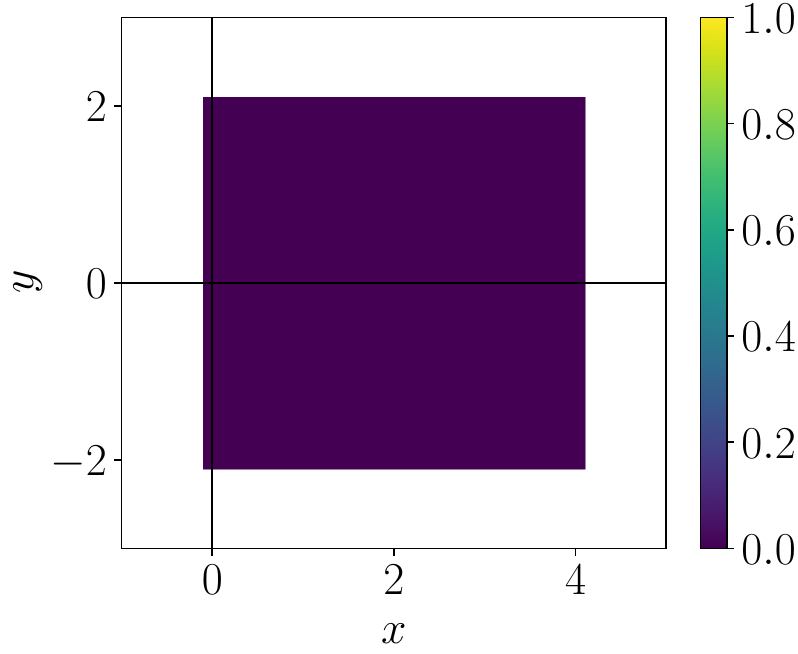}
        \subcaption{$I_1,~r=2.1$}        
    \end{subfigure}
    \hfill
    \begin{subfigure}[t]{0.24\textwidth}
        \centering
        \includegraphics[width=\textwidth]{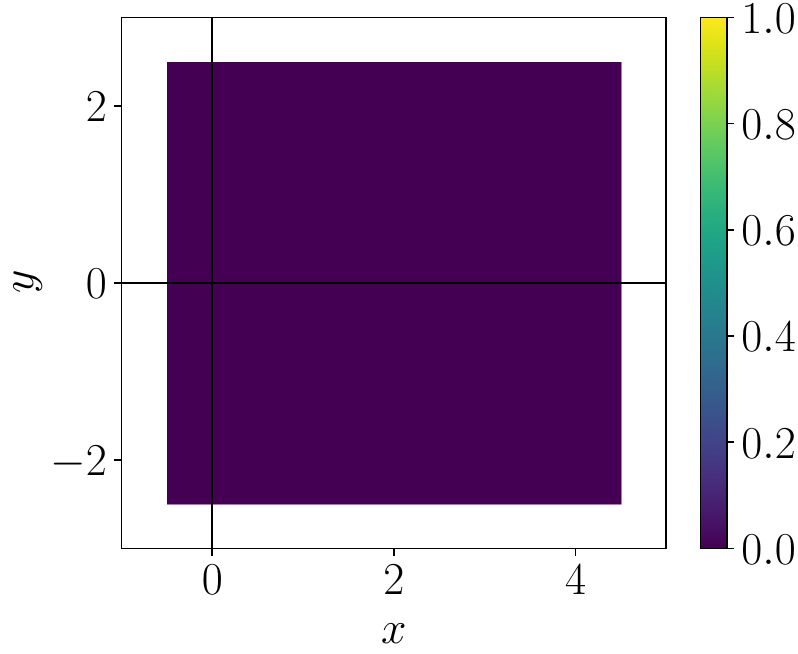}
        \subcaption{$I_1,~r=2.5$}        
    \end{subfigure}
    \hfill
        \begin{subfigure}[t]{0.24\textwidth}
        \centering
        \includegraphics[width=\textwidth]{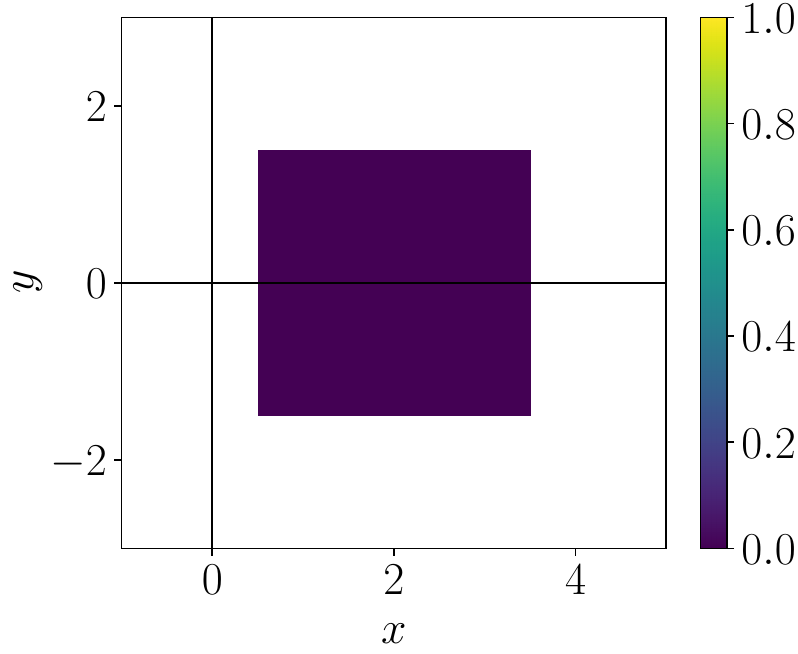}
        \subcaption{$I_2,~r=1.5$}
    \end{subfigure}
    \hfill
    \begin{subfigure}[t]{0.24\textwidth}
        \centering
        \includegraphics[width=\textwidth]{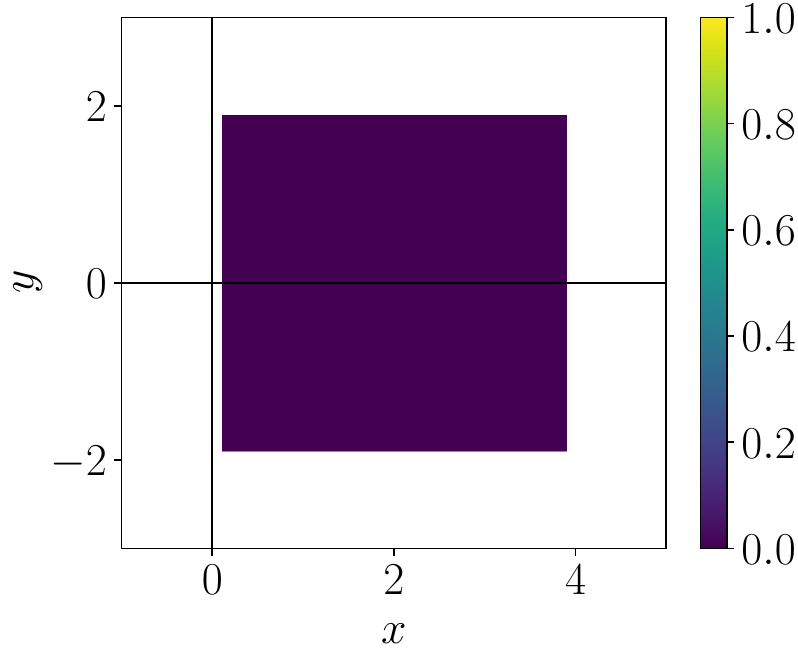}
        \subcaption{$I_2,~r=1.9$}        
    \end{subfigure}
    \begin{subfigure}[t]{0.24\textwidth}
        \centering
        \includegraphics[width=\textwidth]{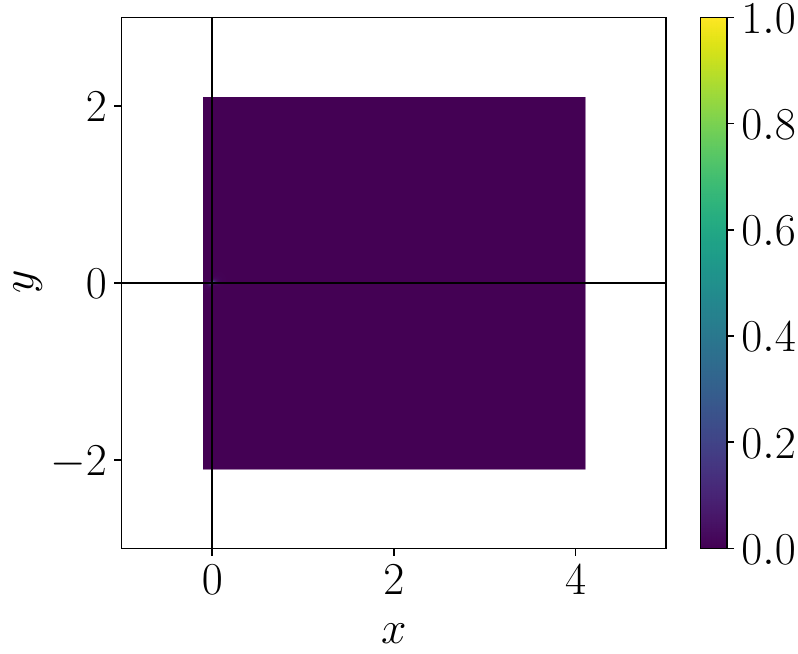}
        \subcaption{$I_2,~r=2.1$}        
    \end{subfigure}
    \hfill
    \begin{subfigure}[t]{0.24\textwidth}
        \centering
        \includegraphics[width=\textwidth]{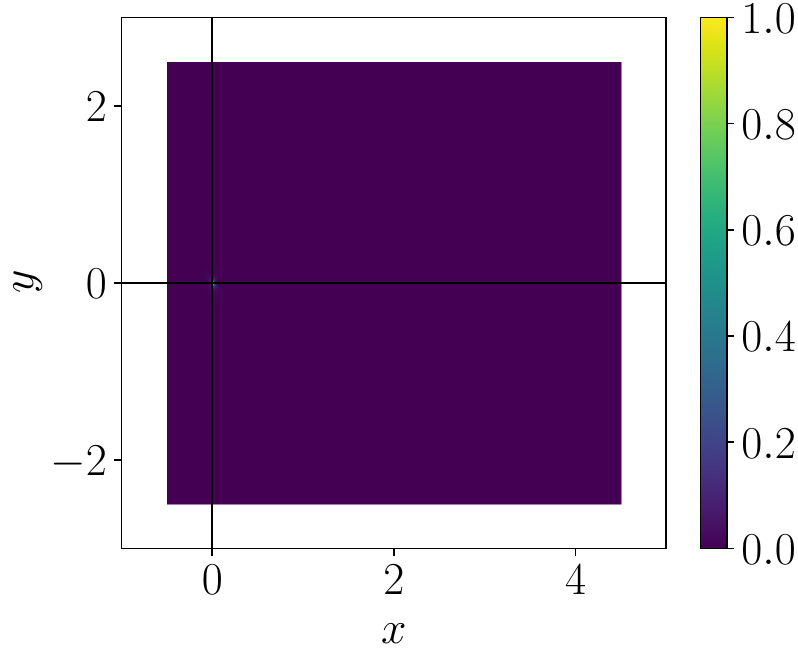}
        \subcaption{$I_2,~r=2.5$}        
    \end{subfigure}
    \caption{Pointwise losses for the 3D degenerate center
(Example~\ref{ex:center_3d}) on the slice $z=0$ of
$U_r^{(3)}=[2-r,2+r]\times[-r,r]^2$, with
$r=1.5,1.9,2.1,2.5$ from left to right. Panels (a)--(d) show the
conservation loss, equivalently the deflated loss, for $I_1$, while
panels (e)--(h) show the deflated loss for $I_2$. Unlike the 3D
degenerate node, both conservation laws persist after the box
intersects the equilibrium line.
}
    \label{fig:3d_center}
\end{figure}

Together, the node and center examples distinguish two different
effects of an equilibrium set. For the 2D node, the angular
conservation law cannot be continued across the origin, whereas the
energy of the 2D center remains globally defined despite having a
critical point there.
The same contrast persists in three dimensions, where
the global law $z$ survives in both degenerate systems,  yet the second
law breaks down for the degenerate node and persists for the
degenerate center. Neural deflation therefore detects a genuine loss
of a conservation law while explicitly distinguishing it from the vanishing of
the gradient of an otherwise valid invariant.

\subsection{H\'enon--Heiles system}
\label{subsec:henon_heiles}

We next consider the H\'enon--Heiles system~\cite{henon1964applicability,fordy1991henon}, a four-dimensional (two-degree-of-freedom) 
Hamiltonian system. We express its Hamiltonian in the following normalized form
\begin{align}
\label{eq:henon_heiles_hamiltonian}
    H_\epsilon(x,y,p_x,p_y)
    =
    \frac{1}{2}\left(p_x^2+p_y^2\right)
    +
    \frac{1}{2}\left(x^2+y^2\right)
    +
    x^2y-\epsilon y^3,
\end{align}
where $\epsilon$ is a free parameter. Throughout this subsection, we
consider the parameter range $\epsilon \ge -1/3$. The corresponding Hamiltonian equations  of motion are
\begin{align}
\label{eq:henon_heiles_system}
    \left\{
    \begin{aligned}
        \dot{x}   &= p_x,\\
        \dot{y}   &= p_y,\\
        \dot{p}_x &= -x-2xy,\\
        \dot{p}_y &= -y-x^2+3\epsilon y^2.
    \end{aligned}
    \right.
\end{align}

The Hamiltonian $H_\epsilon$ is a global conservation law for every
$\epsilon$. Within the parameter range considered here, the system is known to be 
Liouville integrable  only when $\epsilon=-1/3$. In this case, it
admits two functionally independent global conservation laws in
involution, one of which is $H_{-1/3}$. For
$\epsilon > -1/3$, the system is non-integrable, and the
Hamiltonian is the only global conservation law.

We study the system using two families of expanding domains. The first
consists of rectangular boxes centered at the regular point
$(1,1,0,0)$, allowing us to recover the three local conservation laws
predicted by the flow-box theorem and track their subsequent decrease. 
The
second consists of energy sublevel sets growing from the equilibrium
at the origin, allowing us to examine how the persistence of a second
conservation law depends on the distance of $\epsilon$ from the
integrable value $\epsilon=-1/3$.

\subsubsection{Rectangular boxes growing from a regular point}
\label{subsubsec:hh_rectangular}

\begin{figure}
    \centering
    \begin{subfigure}[t]{0.43\textwidth}
        \centering
        \includegraphics[width=\textwidth]{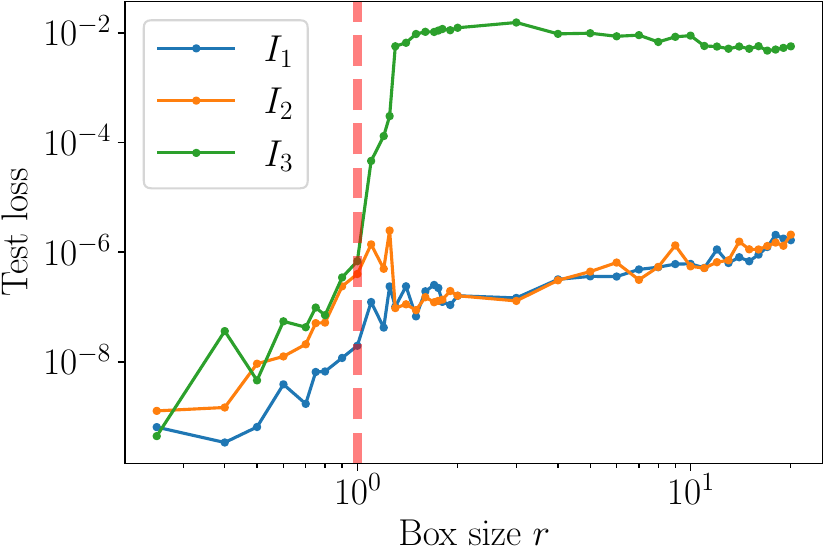}
        \subcaption{$\epsilon=-1/3$ (integrable)}
    \end{subfigure}
    \hfill
    \begin{subfigure}[t]{0.43\textwidth}
        \centering
        \includegraphics[width=\textwidth]{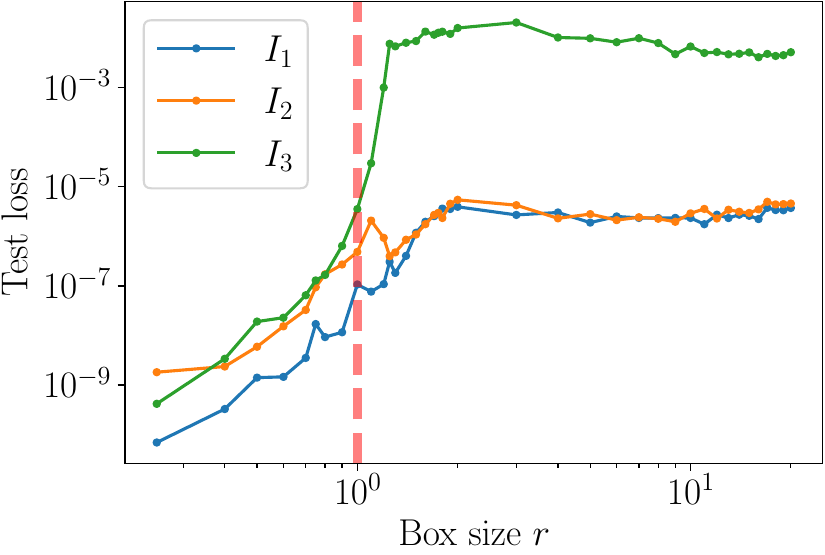}
        \subcaption{$\epsilon=-0.3$}
    \end{subfigure}    
    \hfill
    \begin{subfigure}[t]{0.43\textwidth}
        \centering
        \includegraphics[width=\textwidth]{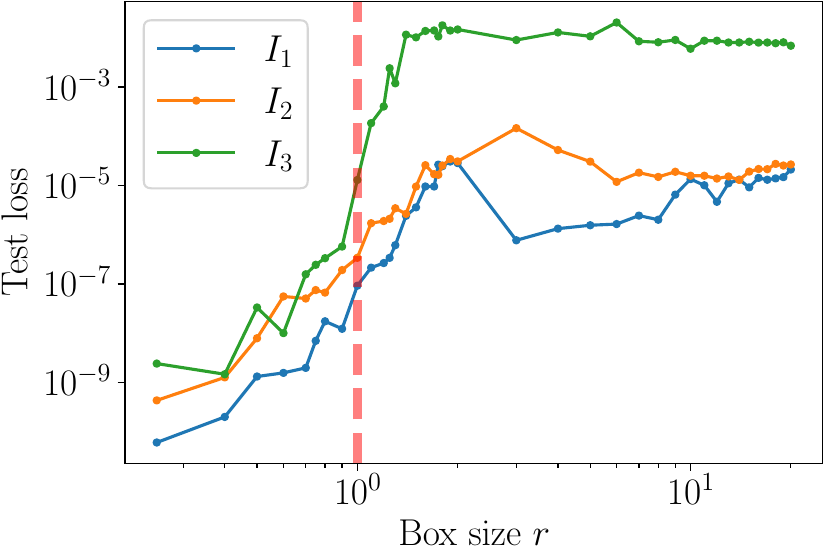}
        \subcaption{$\epsilon=-0.2$}
    \end{subfigure}
    \hfill
    \begin{subfigure}[t]{0.43\textwidth}
        \centering
        \includegraphics[width=\textwidth]{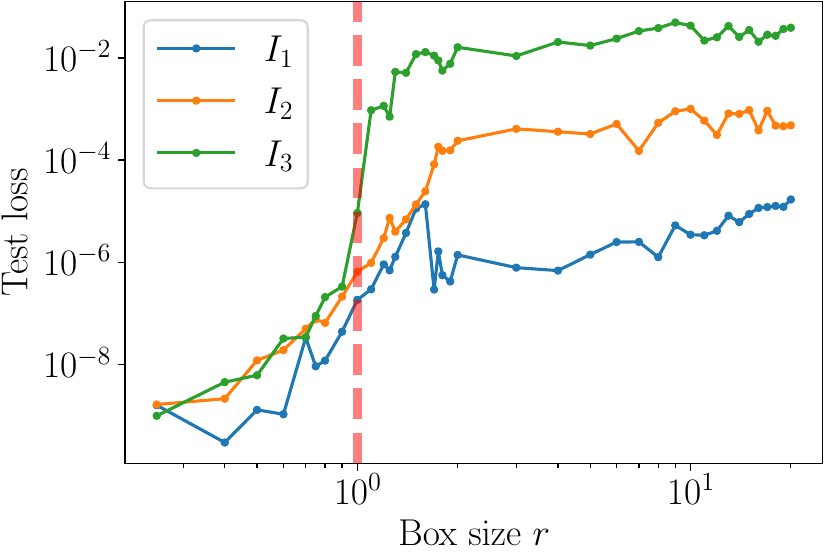}
        \subcaption{$\epsilon=-0.1$}        
    \end{subfigure}
        \hfill
    \begin{subfigure}[t]{0.43\textwidth}
        \centering
        \includegraphics[width=\textwidth]{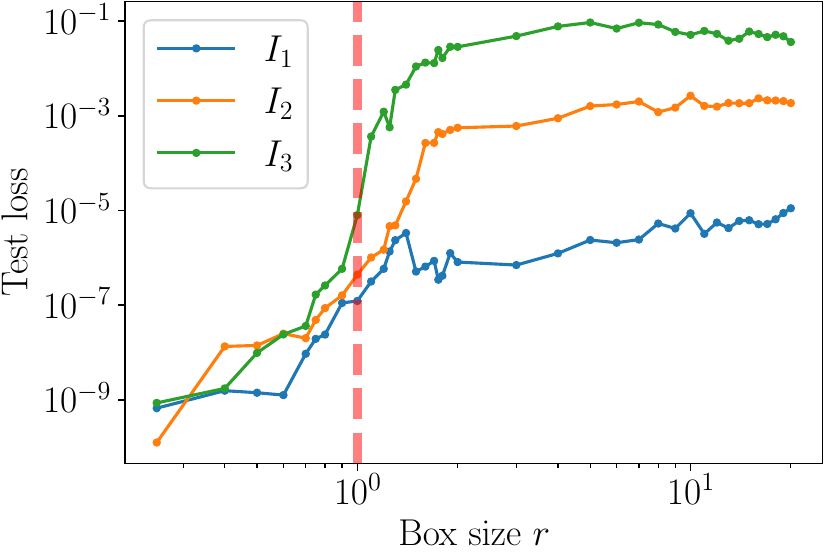}
        \subcaption{$\epsilon=0$}
    \end{subfigure}
    \hfill
    \begin{subfigure}[t]{0.43\textwidth}
        \centering
        \includegraphics[width=\textwidth]{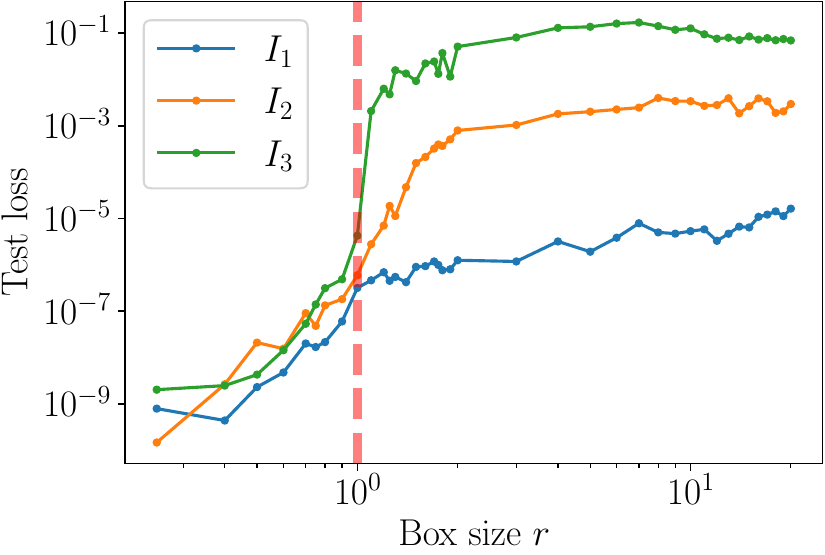}
        \subcaption{$\epsilon=0.1$}        
    \end{subfigure}        
    \caption{Domain-averaged deflated losses of the three learned candidates for
the H\'enon--Heiles system on the expanding boxes
$U_r^{\mathrm{rec}}=(1,1,0,0)+[-r,r]^4$. Panels (a)--(f) correspond
to $\epsilon=-1/3,-0.3,-0.2,-0.1,0,$ and $0.1$, respectively. Three
conservation laws are recovered before the box reaches the equilibrium
at $r=1$, after which the third is lost. The two remaining laws persist
for the integrable case $\epsilon=-1/3$. For $\epsilon=-0.3$ and
$\epsilon=-0.2$, no clear loss of the second candidate is observed
over the tested range of $r$, while a pronounced separation between the first
and second losses emerges from $\epsilon=-0.1$ onward.}
    \label{fig:hh_rectangular_losses}
\end{figure}

We begin from the regular point $(1,1,0,0)$ and consider the expanding
boxes
\begin{align}
\label{eq:hh_rectangular_domains}
    U_r^{\mathrm{rec}}
    =
    (1,1,0,0)+[-r,r]^4.
\end{align}
For sufficiently small $r$, the flow-box theorem predicts three
functionally independent local conservation laws. We therefore train
three candidates sequentially and examine their domain-averaged
deflated losses as $r$ increases.

Figure~\ref{fig:hh_rectangular_losses} shows that all three candidates
have small losses of comparable magnitude before the box reaches the origin. At the critical
value $r=1$, the box first intersects the equilibrium
$(0,0,0,0)$, where the vector field vanishes and the flow-box
description ceases to apply. Correspondingly, the loss of the third candidate increases, indicating that the number of recovered conservation laws decreases from three to fewer than three. This first transition occurs in both the
integrable and non-integrable cases and is essentially independent of
$\epsilon$.

Figure~\ref{fig:hh_flowbox_breakdown} examines the first transition
in the number of learned conservation laws for the integrable case
$\epsilon=-1/3$ and the non-integrable case $\epsilon=-0.2$. We plot
the pointwise deflated loss of the third candidate $I_3$ on the
$p_x$--$p_y$ slice $x=y=1-r$ of the boundary of
$U_r^{\mathrm{rec}}$. At $r=0.9$, the boundary has not yet reached
the origin, and the loss remains small for both values of $\epsilon$.
At $r=1$, the boundary first contains the equilibrium
$(0,0,0,0)$, and the loss increases sharply. Thus, independently of
integrability, the flow-box regime breaks down when the expanding box
reaches the equilibrium, reducing the number of recovered conservation
laws from three to two.

\begin{figure}
    \centering
    \begin{subfigure}[t]{0.45\textwidth}
        \centering
        \includegraphics[width=\textwidth]{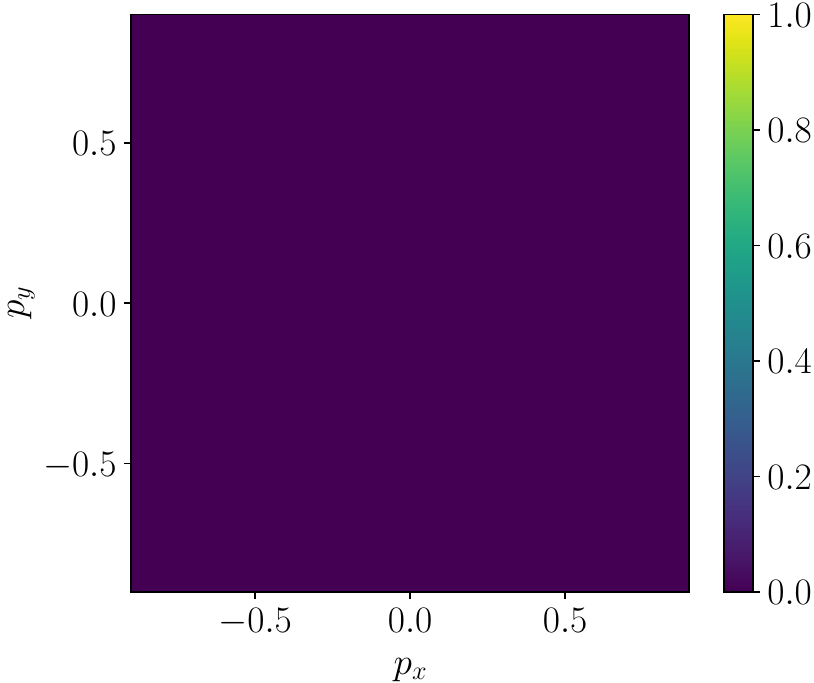}   \subcaption{$\epsilon = -1/3$, $r=0.9$}
    \end{subfigure}
    \begin{subfigure}[t]{0.45\textwidth}
        \centering
        \includegraphics[width=\textwidth]{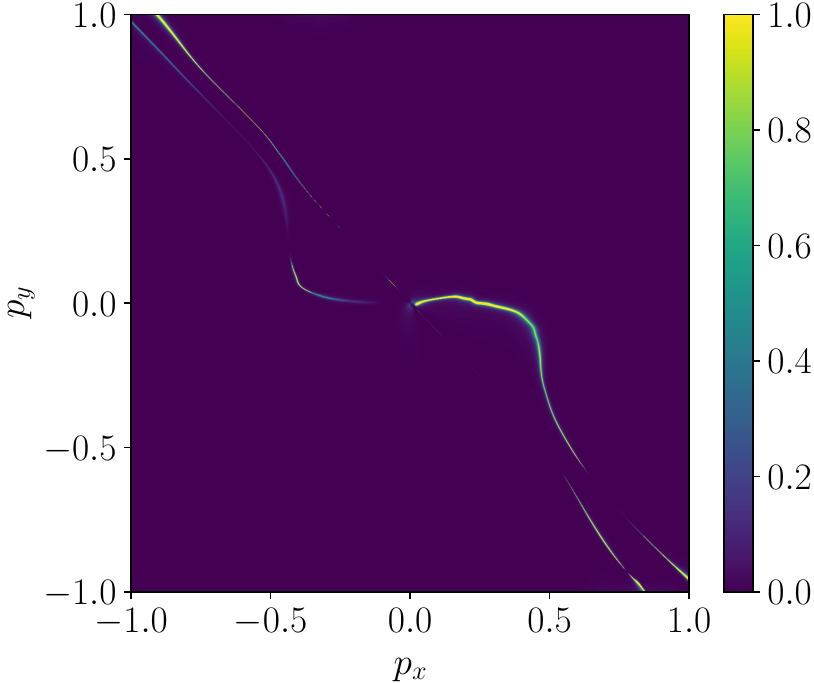}   \subcaption{$\epsilon = -1/3$, $r=1.0$}           
    \end{subfigure}    
    \begin{subfigure}[t]{0.45\textwidth}
        \centering
        \includegraphics[width=\textwidth]{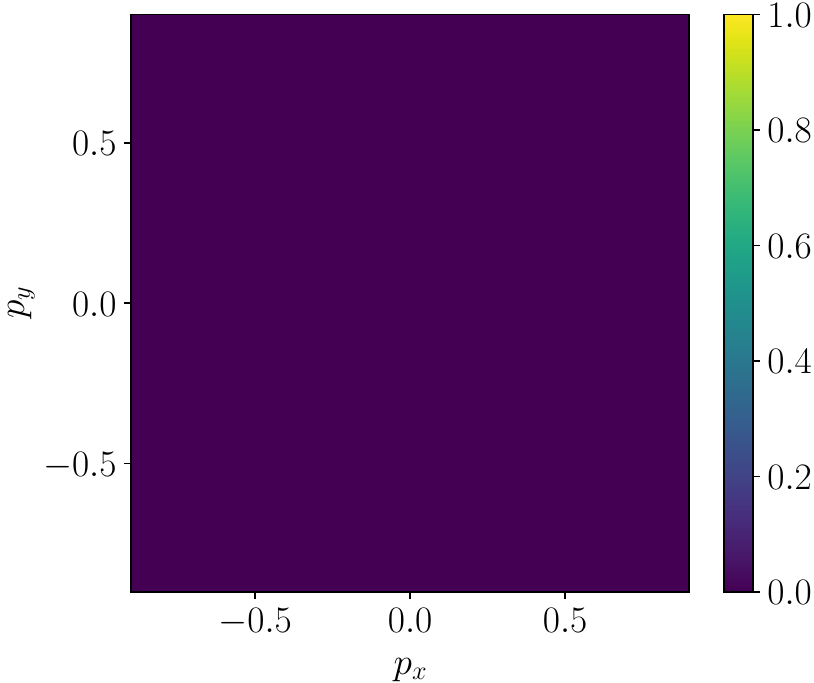}   \subcaption{$\epsilon = -0.2$, $r=0.9$}
    \end{subfigure}
    \begin{subfigure}[t]{0.45\textwidth}
        \centering
        \includegraphics[width=\textwidth]{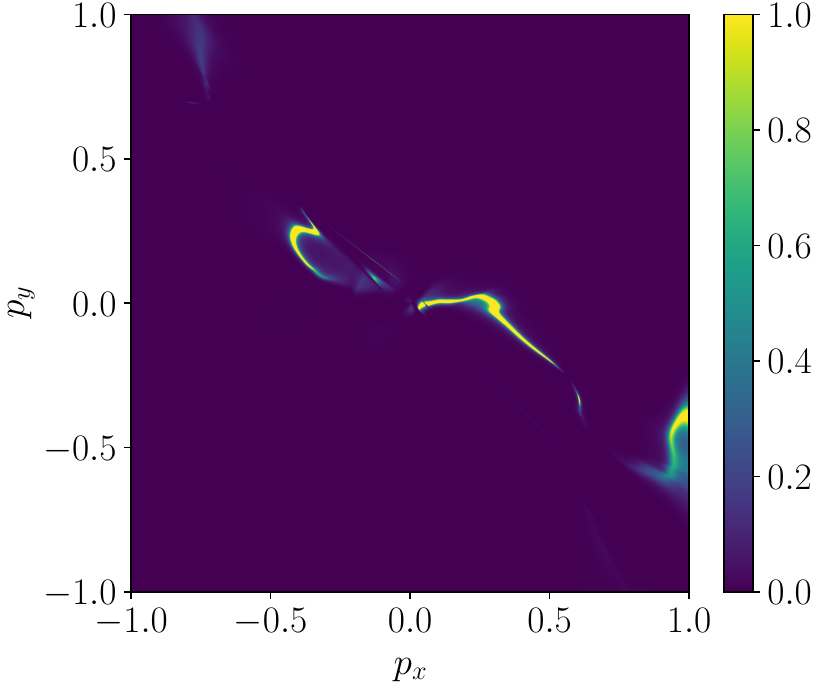}   \subcaption{$\epsilon = -0.2$, $r=1.0$}           
    \end{subfigure}
\caption{Pointwise deflated loss of the third learned conservation law $I_3$
for the H\'enon--Heiles system on the $p_x$--$p_y$ slice
$x=y=1-r$ of the boundary of $U_r^{\mathrm{rec}}$. Panels (a) and
(b) correspond to $\epsilon=-1/3$ with $r=0.9$ and $r=1.0$,
respectively, while panels (c) and (d) show the corresponding results
for $\epsilon=-0.2$. For both parameter values, the loss is small at
$r=0.9$ but increases sharply at $r=1$, when the boundary first
reaches the equilibrium at the origin, marking the breakdown of the third
local conservation law.}
\label{fig:hh_flowbox_breakdown}
\end{figure}
Beyond the initial transition at $r=1$, the behavior depends strongly on $\epsilon$. In the integrable case $\epsilon=-1/3$, the two remaining losses stay small as the box continues to expand, consistent with the existence of two independent global conservation laws in involution; see Figure~\ref{fig:hh_rectangular_losses}(a).

For the nearby non-integrable cases $\epsilon=-0.3$ and $\epsilon=-0.2$, the second candidate also maintains a small loss throughout the tested range of box sizes; see Figure~\ref{fig:hh_rectangular_losses}(b)--(c). Thus, although these systems are not globally integrable, the present experiment does not numerically resolve the eventual breakdown of the second conservation law. 
We do note here in passing that for some intermediate box sizes in 
case (c) a nontrivial difference between the first and second conservation
law loss appears, yet we cannot use this as a systematic enough indication
of the loss of integrability (given that this disparity decreases for
even larger boxes).
A clear qualitative change first appears at $\epsilon=-0.1$; see Figure~\ref{fig:hh_rectangular_losses}(d). As the box expands, the loss of the second candidate increases substantially, whereas the first candidate, corresponding to the Hamiltonian, remains accurate. This separation becomes even more pronounced for $\epsilon=0$ and $\epsilon=0.1$; see Figure~\ref{fig:hh_rectangular_losses}(e)--(f). Accordingly,
we can definitively infer the non-integrability of the 
cases in panels (d)--(f), although once again this is not a 
sharp enough criterion to operate as a precise detector of
integrability, to the best of our understanding.

These results suggest that the second conservation law persists over increasingly large domains as $\epsilon$ approaches the integrable value $-1/3$. For $\epsilon=-0.3$ and $\epsilon=-0.2$, its breakdown, if detectable by the present method, may occur only beyond the range of box sizes considered here.

\subsubsection{Energy sublevel sets growing from the origin}
\label{subsubsec:hh_energy}

We next expand the training domain outward from the equilibrium at the
origin, using energy sublevel sets rather than rectangular boxes.
Since the full energy sublevel set need not be bounded, we restrict
attention to its bounded connected component containing the origin.
Specifically, for $\epsilon\in[-1/3,0]$ and an energy level $h>0$,
let $U_h^{\mathrm{en}}(\epsilon)$ denote this component of
\begin{align}
\label{eq:hh_energy_domains}
    \left\{
        (x,y,p_x,p_y):
        H_\epsilon(x,y,p_x,p_y)\leq h
    \right\}.
\end{align}
The first hyperbolic equilibria are
\begin{align}
\label{eq:hh_first_hyperbolic_equilibria}
    \left(
        \pm\sqrt{\frac{3}{4}\epsilon+\frac{1}{2}},
        -\frac{1}{2},
        0,
        0
    \right),
\end{align}
with energy
\begin{align}
\label{eq:hh_escape_energy}
    h_{\mathrm{esc}}(\epsilon)
    =
    \frac{1+\epsilon}{8}.
\end{align}
This defines the escape threshold, below which the connected
component containing the origin remains bounded.

To compare different values of $\epsilon$ on a common scale,
we parameterize the energy level by
\begin{align}
\label{eq:hh_normalized_energy}
    h=r(1+\epsilon),
    \qquad
    0\leq r\leq \frac{1}{8}.
\end{align}
For each $\epsilon$, choosing $r$ specifies the energy level
$h=r(1+\epsilon)$, whose level set defines the boundary of
the corresponding bounded training domain
$U_{r(1+\epsilon)}^{\mathrm{en}}(\epsilon)$.
As $r$ increases, this domain expands toward the escape
threshold, which corresponds to $r=1/8$ independently of
$\epsilon$. Thus, the horizontal axis labeled $r$ in
Figure~\ref{fig:HH_energy_sublevelset} represents normalized
energy rather than the energy itself. Samples are drawn
uniformly from $U_{r(1+\epsilon)}^{\mathrm{en}}(\epsilon)$.

Because every such domain contains the equilibrium at the origin, the
flow-box theorem does not predict three independent conservation laws.
Indeed, the third candidate has a large deflated loss from the outset.
The main question is therefore how long a second low-loss candidate
persists as the energy domain expands.

Figure~\ref{fig:HH_energy_sublevelset} shows the domain-averaged
deflated losses for nine values of $\epsilon$. In the integrable case
$\epsilon=-1/3$, the first two candidates retain small losses
throughout the tested range, consistent with the existence of two
independent global conservation laws in involution. The same numerical
behavior remains visible for the nearby non-integrable cases
$\epsilon=-0.3$ and $\epsilon=-0.27$: although no second global
conservation law exists, its eventual breakdown is not resolved over
the tested energy range. I.e., similarly to the case of
the rectangular boxes of Figure~\ref{fig:hh_rectangular_losses},
the energy sublevels are unable to sharply detect the loss
of integrability near the integrable case of $\epsilon=-1/3$.

A clear loss of the second candidate first becomes visible at
$\epsilon=-0.25$, where its deflated loss begins to increase at larger
values of $r$. As $\epsilon$ moves farther from the integrable value
$-1/3$, this increase occurs at progressively smaller normalized
energy levels. This trend is evident for
$\epsilon=-0.24,-0.21,$ and $-0.2$. By $\epsilon=-0.1$, the second
candidate already has a large loss at the smallest tested domains, so
only the Hamiltonian is recovered from the outset. The same behavior
persists at $\epsilon=0$.

These results suggest that proximity to the integrable limit controls the
numerical persistence of the second candidate. Near $\epsilon=-1/3$,
it remains accurate over a large range of energy sublevel sets; farther
away, it breaks down on increasingly smaller domains. Since this loss
can occur well before $r=1/8$, it is not simply caused by the energy
contour reaching the first hyperbolic equilibrium. 
While the
latter could have been an appealing threshold for the identification
of loss of integrability and the loss of the associated second conservation
law, such an interpretable scenario has not been verified 
by our numerical computations.

\begin{figure}
    \centering
    \begin{subfigure}[t]{0.32\textwidth}
        \centering
        \includegraphics[width=\textwidth]{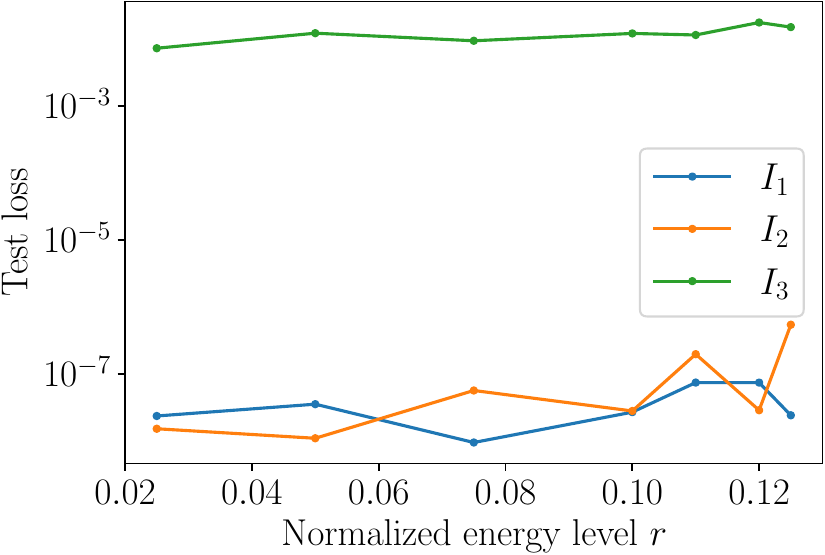}
        \subcaption{$\epsilon=-1/3$ (integrable)}
    \end{subfigure} 
    \begin{subfigure}[t]{0.32\textwidth}
        \centering
        \includegraphics[width=\textwidth]{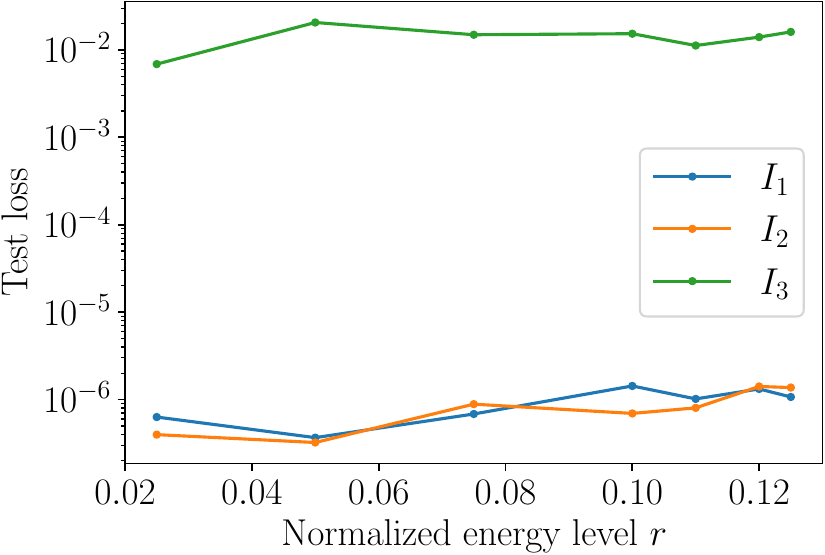}
        \subcaption{$\epsilon=-0.3$}
    \end{subfigure}
    \begin{subfigure}[t]{0.32\textwidth}
        \centering
        \includegraphics[width=\textwidth]{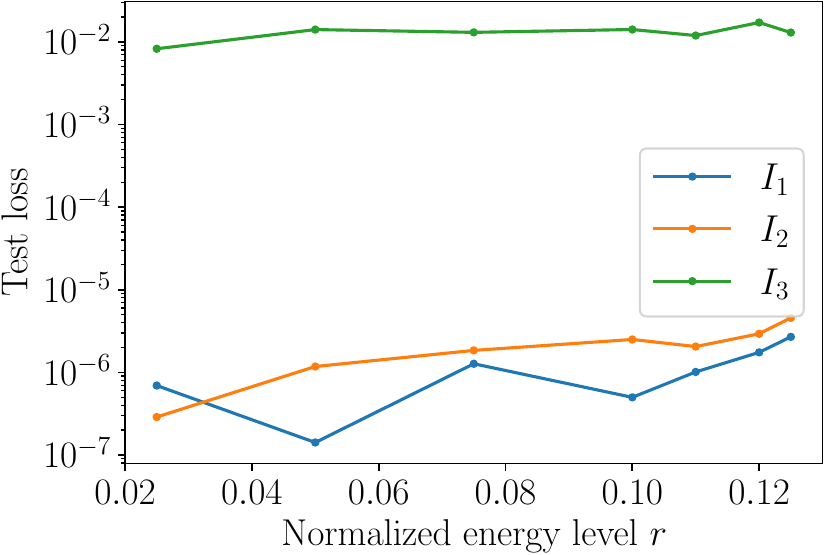}
        \subcaption{$\epsilon=-0.27$}
    \end{subfigure}    
    \begin{subfigure}[t]{0.32\textwidth}
        \centering
        \includegraphics[width=\textwidth]{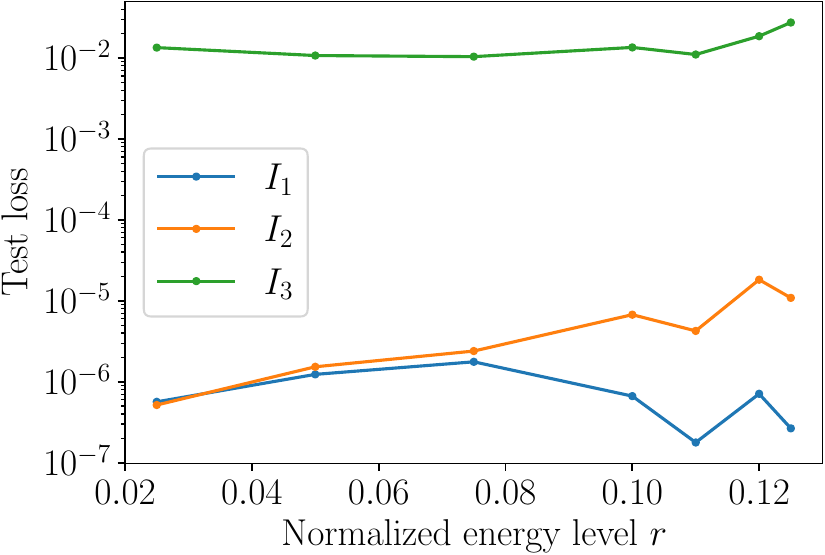}
        \subcaption{$\epsilon=-0.25$}
    \end{subfigure}
    \begin{subfigure}[t]{0.32\textwidth}
        \centering
        \includegraphics[width=\textwidth]{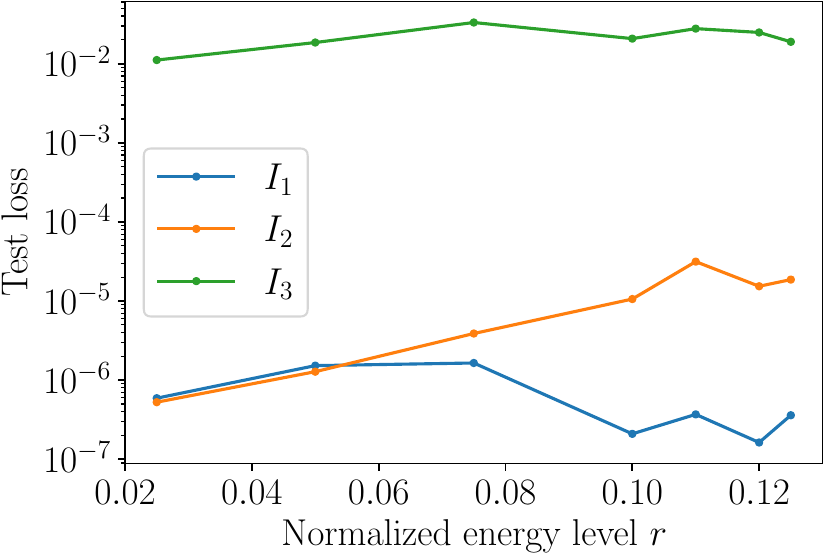}
        \subcaption{$\epsilon=-0.24$}
    \end{subfigure}     
    \begin{subfigure}[t]{0.32\textwidth}
        \centering
        \includegraphics[width=\textwidth]{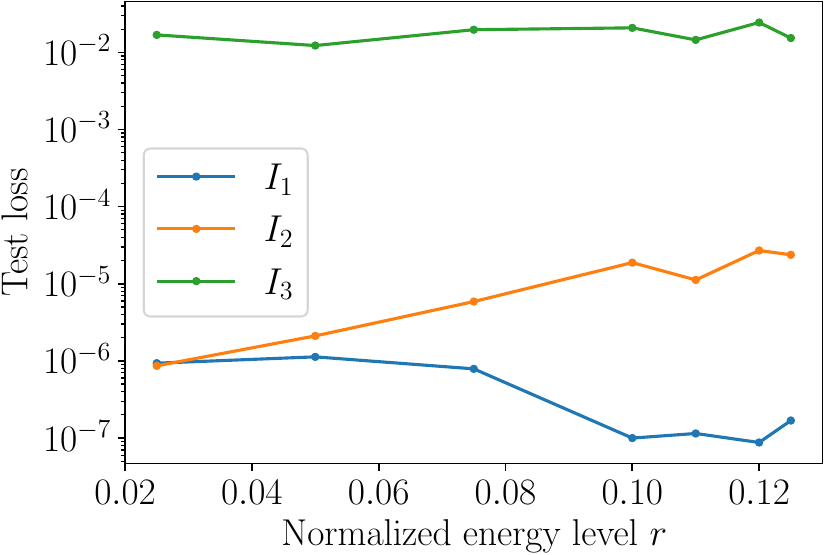}
        \subcaption{$\epsilon=-0.21$}
    \end{subfigure}   
    \begin{subfigure}[t]{0.32\textwidth}
        \centering
        \includegraphics[width=\textwidth]{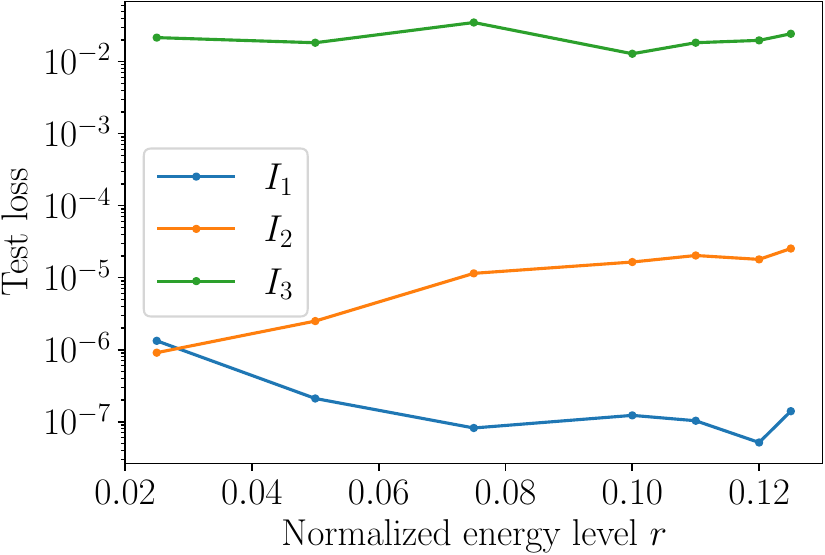}
        \subcaption{$\epsilon=-0.2$}
    \end{subfigure}      
    \begin{subfigure}[t]{0.32\textwidth}
        \centering
        \includegraphics[width=\textwidth]{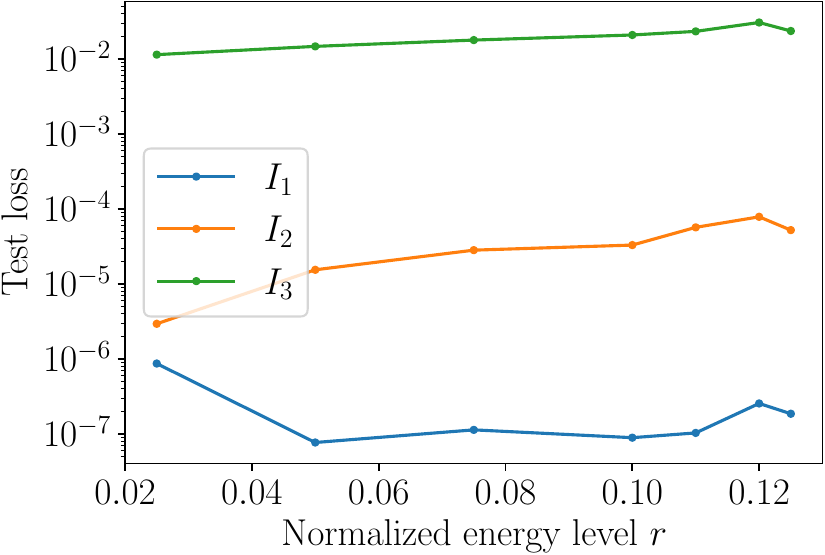}
        \subcaption{$\epsilon=-0.1$}
    \end{subfigure}
    \begin{subfigure}[t]{0.32\textwidth}
        \centering
        \includegraphics[width=\textwidth]{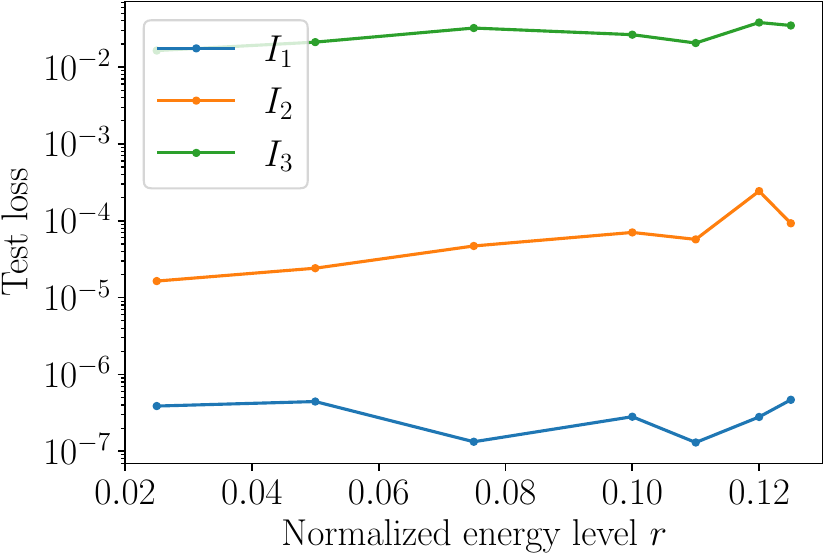}
        \subcaption{$\epsilon=0$}
    \end{subfigure}
    \caption{
    Domain-averaged deflated losses of the three learned candidates for
    the H\'enon--Heiles system on the energy sublevel sets
    $U_{r(1+\epsilon)}^{\mathrm{en}}(\epsilon)$ growing from the
    origin. Panels (a)--(i) correspond to
    $\epsilon=-1/3,-0.3,-0.27,-0.25,-0.24,-0.21,-0.2,-0.1,$ and
    $0$, respectively. The horizontal axis is the normalized energy
    parameter $r$, with energy level $h=r(1+\epsilon)$. Two candidates
    persist throughout the tested range in the integrable case. The
    breakdown of the second candidate first becomes visible at
    $\epsilon=-0.25$ and occurs at progressively smaller values of
    $r$ as $\epsilon$ moves away from $-1/3$; by $\epsilon=-0.1$, it
    is lost from the smallest tested domain.
    }
    \label{fig:HH_energy_sublevelset}
\end{figure}

The two domain-expansion strategies reveal complementary aspects of
the H\'enon--Heiles dynamics. Rectangular boxes growing from a regular
point initially recover the three local conservation laws predicted by
the flow-box theorem and then track their successive failure. Energy
sublevel sets growing from the equilibrium contain no three-law
flow-box regime, but more directly reveal how the persistence of a
second candidate depends on proximity to the integrable limit.

\subsubsection{Conservation loss near the integrable limit}
\label{subsubsec:hh_predictor}
The numerical results of the previous subsection show that the second candidate conservation law persists over larger domains as $\epsilon$ approaches the integrable limit. For the H\'enon--Heiles system, we find that this dependence can be compared directly with the residual of an explicit invariant of the integrable system. 

To begin,  write
\begin{equation}
\label{eq:pred_split}
H_\epsilon=H_\star+\mu P,\qquad
\mu=\epsilon+\tfrac13,\qquad P=-y^3,
\end{equation}
where $H_\star=H_{-1/3}$ is the integrable Hamiltonian. For convenience in the calculations, we use the standard canonical coordinates
$q_\pm=y\pm x$, $p_\pm=\tfrac12(p_y\pm p_x)$ so that
\begin{equation}
\label{eq:pred_sep}
H_\star=A_++A_-,\qquad A_\pm=p_\pm^2+W(q_\pm),
\end{equation}
where $W(q)=\tfrac14q^2+\tfrac16q^3.$ It is straightforward to show that the difference
\[
D=A_+-A_-
\]
provides a second integral that is functionally independent of $H_\star$ almost everywhere.

\begin{figure}[ht]
    \centering
    \includegraphics[width=.7\columnwidth]
    {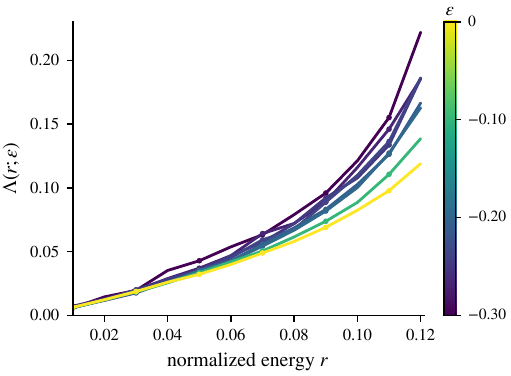}
    \caption{
    Dependence of $\Lambda(r;\epsilon)$ on the normalized energy $r$ for
the H\'enon--Heiles system. At fixed $r$, the variation with $\epsilon$
is modest, while $\Lambda$ increases substantially as the energy domain
expands.
    }
    \label{fig:hh_near_integrable_scaling}
\end{figure}

Now, to connect to neural deflation, use $\{D,H_\star\}=0$ and $\partial D/\partial p_y=p_x$ so that
\begin{equation}
\label{eq:pred_identity}
\mathbf f_\epsilon\cdot\nabla D
=\{D,H_\epsilon\}=\mu\{D,P\}=3\mu y^2p_x.
\end{equation}
Thus, the normalized conservation loss in
Equation~\eqref{eq:pointwise_conservation_loss} can be expressed as
\begin{equation}
\label{eq:pred_lcons}
\ell_{\mathrm{cons}}[D;z]
=\frac{9\mu^2y^4p_x^2}
{\|\mathbf f_\epsilon(z)\|_2^2\|\nabla D(z)\|_2^2}.
\end{equation}
Here we take norms and gradients in the original coordinates $z=(x,y,p_x,p_y)$. For the deflation exponent $\alpha=1$ used in the previous experiments, substituting $I_1=H_\epsilon$ gives
\begin{equation}
\label{eq:pred_deflated}
\ell_{\mathrm{defl},2}[D;z]
=\frac{9\mu^2y^4p_x^2}{Q_\epsilon(z)},
\end{equation}
where
\begin{equation}
\label{eq:pred_gram}
Q_\epsilon
=\|\nabla H_\epsilon\|_2^2\|\nabla D\|_2^2
-(\nabla H_\epsilon\cdot\nabla D)^2.
\end{equation}
Equation~\eqref{eq:pred_deflated} follows from
$\|\mathbf f_\epsilon\|_2=\|\nabla H_\epsilon\|_2$ and
Equation~\eqref{eq:pointwise_deflated_loss}, and holds wherever $Q_\epsilon>0$.

For $U=U^{\mathrm{en}}_{r(1+\epsilon)}(\epsilon)$, let
$\langle\cdot\rangle_U$ denote the uniform domain average and define
\begin{equation}
\label{eq:pred_Lambda}
\Lambda(r;\epsilon)
=
\left\langle
\frac{9y^4p_x^2}{Q_\epsilon(z)}
\right\rangle_U,
\qquad
\mathcal L_D(r;\epsilon)
=
\mu^2\Lambda(r;\epsilon).
\end{equation}
Here $\mathcal L_D$ is the domain-averaged deflated loss of the explicit second integral $D$. To compute the averages, we use Monte Carlo sampling on the energy sublevel sets
$U_{r(1+\epsilon)}^{\mathrm{en}}(\epsilon)$ with
$\epsilon\in\{-0.30,-0.27,-0.25,-0.24,-0.21,-0.20,-0.10,0\}$ and
$r=0.01,0.02,\ldots,0.12$. For each pair $(\epsilon,r)$, we drew
$10^5$ samples uniformly from the bounded component containing the
origin and evaluated the integrand in Equation~\eqref{eq:pred_Lambda}.
The reported value of $\Lambda(r;\epsilon)$ is the mean of three
independent Monte Carlo realizations.

Figure~\ref{fig:hh_near_integrable_scaling} displays one such set of computations across the pairs $(\epsilon,r)$, revealing that
$\Lambda(r;\epsilon)$ depends much more strongly on the domain size
than on $\epsilon$. Through $r=0.1$, its variation across the tested
values of $\epsilon$ is less than a factor of $1.5$, while it increases
by about an order of magnitude between $r=0.02$ and $r=0.12$. Since
$\mathcal L_D=\mu^2\Lambda$ with
$\mu=\epsilon+1/3$, the growth of the loss with $r$ is governed mainly
by $\Lambda$, whereas its separation across $\epsilon$ is governed
mainly by $\mu^2$, which changes by a factor of $100$ between
$\epsilon=-0.30$ and $\epsilon=0$.

With $I_1=H_\epsilon$, $\mathcal L_D$ provides an explicit
benchmark for the second-candidate deflated loss, obtained
by evaluating the fixed observable $D$ inherited from the
integrable system. Over the tested parameter range, the
increase of this benchmark with domain size and distance
from the integrable limit qualitatively parallels the
behavior of the learned second candidate in
Figure~\ref{fig:HH_energy_sublevelset}. This qualitative
agreement is consistent with interpreting the learned
persistence as a manifestation of near-integrable structure.

We consider a complementary study in Appendix~\ref{app:quad}, where we use a trajectory-based measure of the near-integrability to track the drift of an invariant of the integrable system along perturbed trajectories. We obtain a finite-time conservation defect of order $O(\mu^2)$, and include this calculation as an exploratory comparison with the domain-averaged analysis above.

\subsection{Double pendulum}
\label{subsec:double_pendulum}

Lastly, we consider a planar double pendulum. Let $m_1$ and $m_2$
denote the two masses, $l_1$ and $l_2$ the lengths of the two
pendulum arms, and $g$ the gravitational acceleration. For simplicity we consider equal masses and
equal lengths and choose dimensionless units such that
\[
    m_1=m_2=l_1=l_2=g=1.
\]
The angular positions of the two arms are denoted by $\theta_1$ and
$\theta_2$, and their angular velocities by
$\omega_1=\dot{\theta}_1$ and $\omega_2=\dot{\theta}_2$. We also write
\[
    \Delta=\theta_1-\theta_2
\]
for their relative angle. In the state variables
$(\theta_1,\theta_2,\omega_1,\omega_2)$, the equations of motion are
\begin{align}
\label{eq:double_pendulum}
    \left\{
    \begin{aligned}
        \dot{\theta}_1
        &=
        \omega_1,\\
        \dot{\theta}_2
        &=
        \omega_2,\\
        \dot{\omega}_1
        &=
        \frac{
        -3\sin\theta_1-\sin(\theta_1-2\theta_2)
        -2\sin\Delta
        \left(\omega_2^2+\omega_1^2\cos\Delta\right)
        }{
        3-\cos(2\Delta)
        },\\
        \dot{\omega}_2
        &=
        \frac{
        2\sin\Delta
        \left(
        2\omega_1^2+2\cos\theta_1+\omega_2^2\cos\Delta
        \right)
        }{
        3-\cos(2\Delta)
        }.
    \end{aligned}
    \right.
\end{align}
The system is Hamiltonian and consequently has the global conservation law
\begin{align}
\label{eq:double_pendulum_energy}
    H_{\mathrm{dp}}
    =
    \omega_1^2
    +\frac{1}{2}\omega_2^2
    +\omega_1\omega_2\cos\Delta
    -2\cos\theta_1-\cos\theta_2.
\end{align}
The four equilibria
\[
    (0,0,0,0),\qquad
    (0,\pi,0,0),\qquad
    (\pi,0,0,0),\qquad
    (\pi,\pi,0,0)
\]
have energies $-3,-1,1,$ and $3$, respectively. The origin is the
stable equilibrium, while the first separatrix is associated with the
saddle energy $H_{\mathrm{dp}}=-1$. Unlike the integrable
H\'enon--Heiles case considered above, the double pendulum is 
non-integrable and possesses only one global conservation law, namely
the energy.

As before, we use two families of expanding training domains. We first
grow rectangular boxes from a regular point and track the successive
loss of the three local conservation laws predicted by the flow-box
theorem. We then grow energy sublevel sets directly from the stable
equilibrium and examine the gradual loss of a second numerically
recovered candidate.

\subsubsection{Rectangular boxes growing from a regular point}
\label{subsubsec:double_pendulum_rectangular}

We center the rectangular domains at $\bx_0=(0,0,0.2,0)$ and define
\begin{align}
\label{eq:double_pendulum_rectangular_domains}
    U_r^{\mathrm{rec}}
    =
    [-r,r]\times[-r,r]
    \times[0.2-r,0.2+r]\times[-r,r].
\end{align}
The first critical radius is
\begin{align}
\label{eq:double_pendulum_r1}
    r_1=0.2,
\end{align}
at which the box first contains the stable equilibrium at the origin.

A second distinguished radius occurs when the box first intersects
the separatrix associated with the saddle energy $-1$. The largest
energy attained in $U_r^{\mathrm{rec}}$ is
\begin{align}
\label{eq:double_pendulum_max_box_energy}
    H_{\max}(r)
    =
    (0.2+r)^2
    +\frac{1}{2}r^2
    +(0.2+r)r
    -3\cos r.
\end{align}
Consequently, the second critical radius is determined by
$H_{\max}(r_2)=-1$, giving
\begin{align}
\label{eq:double_pendulum_r2}
    r_2\approx0.6325.
\end{align}

Figure~\ref{fig:double_pendulum_rectangular_losses} shows the
domain-averaged deflated losses of the three learned candidates as the
box expands. The loss of $I_1$, which is associated with the
Hamiltonian, remains much smaller than those of the other candidates,
but it nevertheless increases with $r$. Since the Hamiltonian is a
global conservation law on every domain considered, this increase is
attributable to numerical effects, including finite sampling,
optimization error, and the increasing difficulty of representing the
law accurately on a larger domain. We therefore regard the loss of
$I_1$ as a numerical loss floor against which the other candidates are
compared.

Relative to this loss floor, the third candidate begins to rapidly deteriorate
before the box approaches the equilibrium at $r_1=0.2$. This does not
contradict the flow-box theorem. The theorem guarantees three
independent local conservation laws only on some sufficiently small
neighborhood of the regular center point; it certainly does not imply that these
laws extend throughout every larger box that avoids an equilibrium.
Such an extension requires the one-dimensional orbit foliation inside
the entire box to admit three globally defined, single-valued, smooth
coordinates transverse to the flow. This global flow-box structure can
fail before an equilibrium enters the domain, for example when the
same trajectory passes through the box in multiple disconnected
segments or intersects a transverse section more than once. The
observed increase of the loss of $I_3$ before $r_1$ therefore indicates
that the three-law local representation is no longer resolved on the
whole box, although the numerical experiment does not sharply locate
the precise obstruction. Indeed, while from an interpretation 
perspective, such an identification of the obstruction would be 
desirable, our numerical experiments did not point to a specific
interpretable condition thereof.

The second candidate remains close to the numerical loss floor through
$r_1$ and up to approximately the separatrix radius
$r_2\approx0.6325$. Beyond $r_2$, its loss separates gradually from
that of $I_1$ and continues to increase as the box expands. Thus, the
experiment exhibits an overall transition from three recovered laws
to two and eventually to one. The first transition occurs before the
box reaches the stable equilibrium, while the second becomes
more clearly discernible once the box intersects the separatrix.
Neither transition is sharply localized, although that of the
$I_3$ appears to be considerably sharper than that of $I_2$.
\begin{figure}
    \centering
    \includegraphics[width=.7\columnwidth]{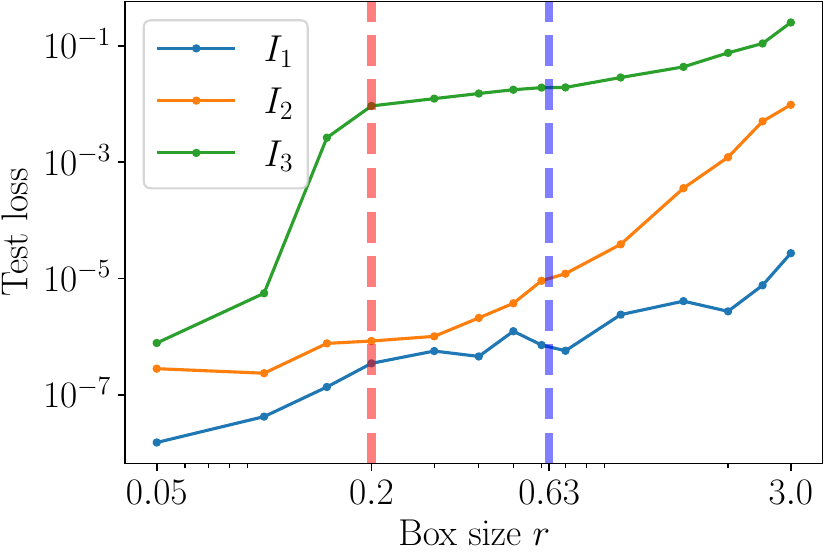}
    \caption{Domain-averaged deflated losses of the three learned candidates for the double pendulum on the expanding boxes $U_r^{\mathrm{rec}}$ centered at $\bx_0=(0,0,0.2,0)$. The vertical dashed lines mark $r_1=0.2$, when the box first contains the stable equilibrium, and $r_2\approx0.6325$, when it first intersects the separatrix of the saddle energy $-1$. The third candidate is lost before $r_1$, while the second remains accurate through $r_2$ and then gradually breaks down.
    }
    \label{fig:double_pendulum_rectangular_losses}
\end{figure}

\subsubsection{Energy sublevel sets growing from the origin}
\label{subsubsec:double_pendulum_energy}

We next grow the training domain directly from the stable equilibrium
at the origin. For an energy level $h>-3$, let
$U_h^{\mathrm{en}}$ denote the connected component containing the
origin of the sublevel set
\begin{align}
\label{eq:double_pendulum_energy_domains}
    U_h^{\mathrm{en}}=\left\{
        (\theta_1,\theta_2,\omega_1,\omega_2):
        H_{\mathrm{dp}}
        (\theta_1,\theta_2,\omega_1,\omega_2)
        \leq h
    \right\}.
\end{align}
Samples are drawn uniformly from $U_h^{\mathrm{en}}$.

Because every energy sublevel set contains the equilibrium at the
origin, the flow-box theorem does not predict three independent
conservation laws in this experiment. Correspondingly, the third
candidate has a large conservation loss from the smallest energy level
considered.

Figure~\ref{fig:double_pendulum_energy_losses} shows the
domain-averaged conservation losses as the energy level $h$ increases.
For each candidate and each value of $h$, the plotted loss is the
smallest value obtained over five independent training runs, which
reduces the influence of unsuccessful optimization runs. The first
candidate, associated with the (existence of) Hamiltonian, retains the smallest loss
throughout the experiment. The second candidate initially has a
somewhat larger but still comparable loss. As the energy sublevel set
expands, however, its loss gradually separates from that of the
first law and continues to increase.

Thus, two conservation laws are numerically resolved on the smallest
energy domains, but only the Hamiltonian remains accurate on larger
ones. The gradual separation of the first and second losses indicates
the breakdown of the second candidate, signaling the loss
of Liouville integrability ---in line with our theoretical
expectations--- but it does not identify a
sharp energy level at which this breakdown occurs.
\begin{figure}
    \centering
    \includegraphics[width=.7\columnwidth]{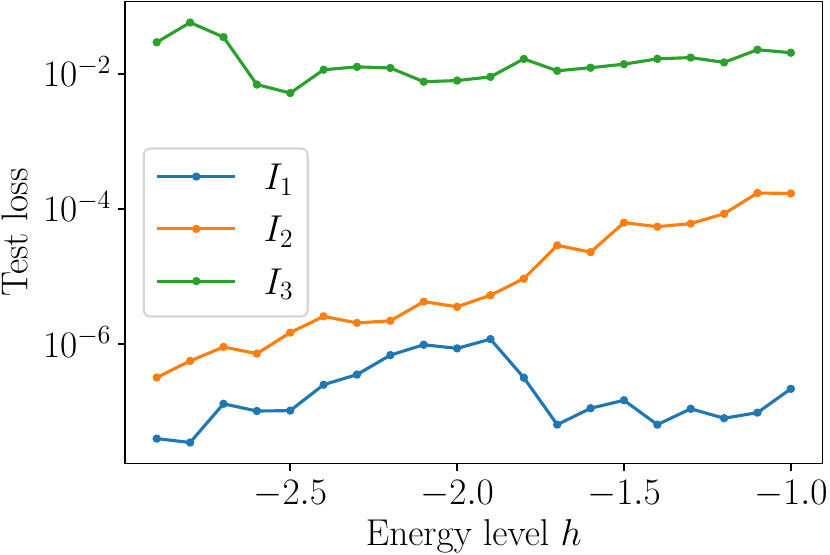}
    \caption{Domain-averaged conservation losses of the learned candidates for the double pendulum on the expanding energy sublevel sets $U_h^{\mathrm{en}}$. The third candidate has a large loss from the outset because every domain contains the equilibrium. The energy candidate remains accurate, while the loss of the second candidate increases gradually as the energy level grows, indicating a transition from two numerically recovered conservation laws to one.} \label{fig:double_pendulum_energy_losses}
\end{figure}

To examine whether the gradual loss of the second candidate can be
localized in phase space, we compare its pointwise conservation loss
with the Smaller Alignment Index (SALI), a standard indicator for
distinguishing regular and chaotic trajectories
\cite{sali}. Given two initially independent deviation
vectors $\bv_1$ and $\bv_2$ evolved along a trajectory and normalized
to $\widehat{\bv}_1$ and $\widehat{\bv}_2$, the index is defined by
\begin{align}
\label{eq:sali}
    \operatorname{SALI}(t)
    =
    \min\left\{
        \left\|\widehat{\bv}_1(t)+\widehat{\bv}_2(t)\right\|,
        \left\|\widehat{\bv}_1(t)-\widehat{\bv}_2(t)\right\|
    \right\}.
\end{align}
For a chaotic trajectory, the two deviation vectors align or
anti-align with the dominant unstable direction, and the SALI rapidly
approaches zero. For a regular trajectory, it typically remains away
from zero. Thus, small SALI values indicate chaotic motion, whereas
larger values indicate regular motion. When $\log_{10}\operatorname{SALI}$
is displayed, more negative values correspond to stronger evidence of
chaos, e.g., blue regions in panels (a) and (d) of Figure~\ref{fig:double_pendulum_sali}.

We evaluate the SALI on the Poincar\'e section
\begin{align}
\label{eq:double_pendulum_poincare_section}
    \Sigma_h
    =
    \left\{
        (\theta_1,\theta_2,\omega_1,\omega_2):
        \theta_2=0,\;
        \omega_2>0,\;
        H_{\mathrm{dp}}=h
    \right\}.
\end{align}
Figure~\ref{fig:double_pendulum_sali} compares the resulting SALI maps
with the pointwise conservation loss of the second learned candidate
for several independent training runs. Although the SALI maps reveal
increasingly complicated regular and chaotic structures as the energy
increases, these structures do not show a consistent spatial
correspondence with the conservation-loss fields. In particular,
regions of relatively large SALI do not systematically coincide with
regions of large conservation loss, and the low-SALI chaotic
structures are not consistently reproduced by the loss fields.
Moreover, the detailed conservation-loss patterns vary between
independent training runs. Accordingly, despite the conceptual
appeal of such a putative connection between the chaos emergence
and the additional conservation loss failure, such a connection
is not brought forth by our data.

The comparison therefore does not identify a sharp dynamical boundary
across which the second conservation law fails. Instead, the
domain-averaged results indicate a gradual deterioration of the second
candidate as the energy sublevel set expands. Neural deflation detects
the eventual reduction from two numerically resolved conservation laws
to the single global energy, but the breakdown cannot  be sharply
localized using SALI (at least not at the resolution of our investigation here), leaving room for further mathematical and computational investigation.


\begin{figure}
    \centering
    \begin{subfigure}[t]{0.3\textwidth}
        \centering
        \includegraphics[width=\textwidth]{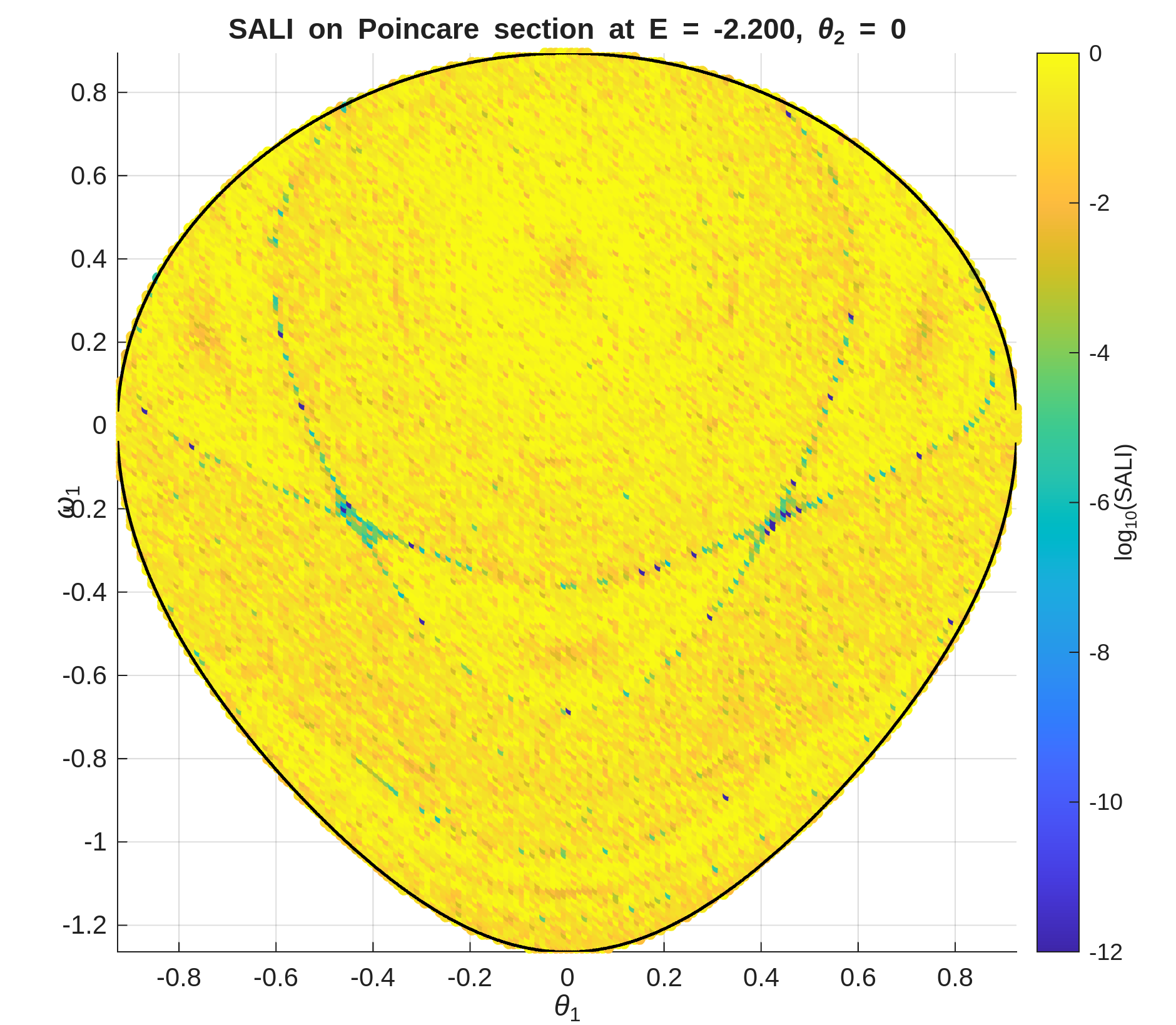}
        \subcaption{SALI, $E=-2.2$}
    \end{subfigure}
    \begin{subfigure}[t]{0.34\textwidth}
        \centering
        \includegraphics[width=\textwidth]{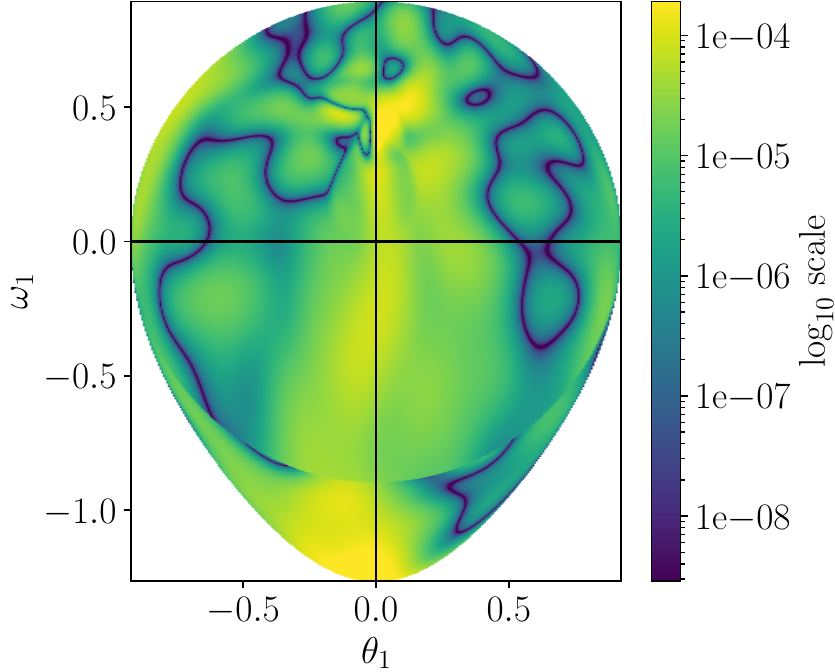} \subcaption{Conservation loss, trial 1.}
    \end{subfigure}
    \begin{subfigure}[t]{0.34\textwidth}
        \centering
        \includegraphics[width=\textwidth]{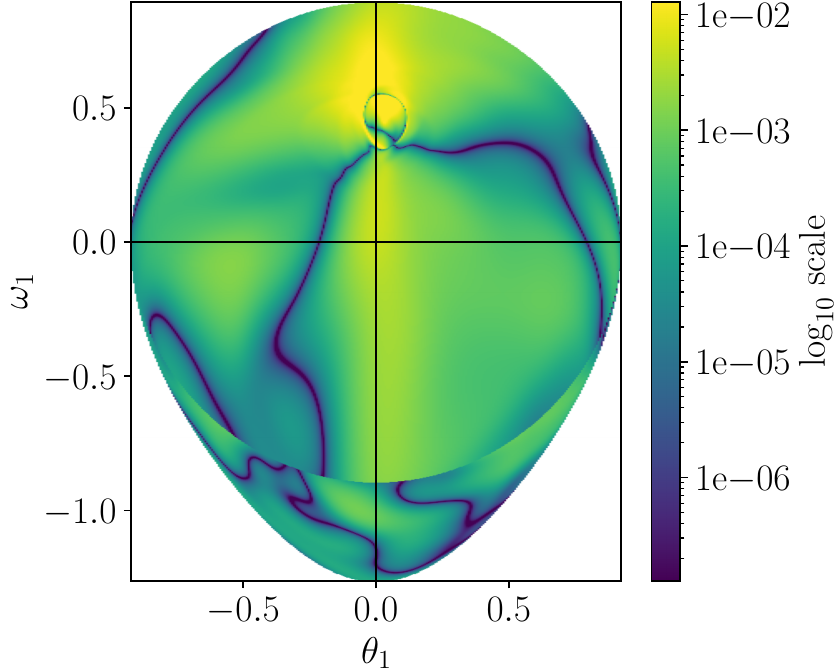} \subcaption{Conservation loss, trial 2.}
    \end{subfigure}    

    \begin{subfigure}[t]{0.3\textwidth}
        \centering
        \includegraphics[width=\textwidth]{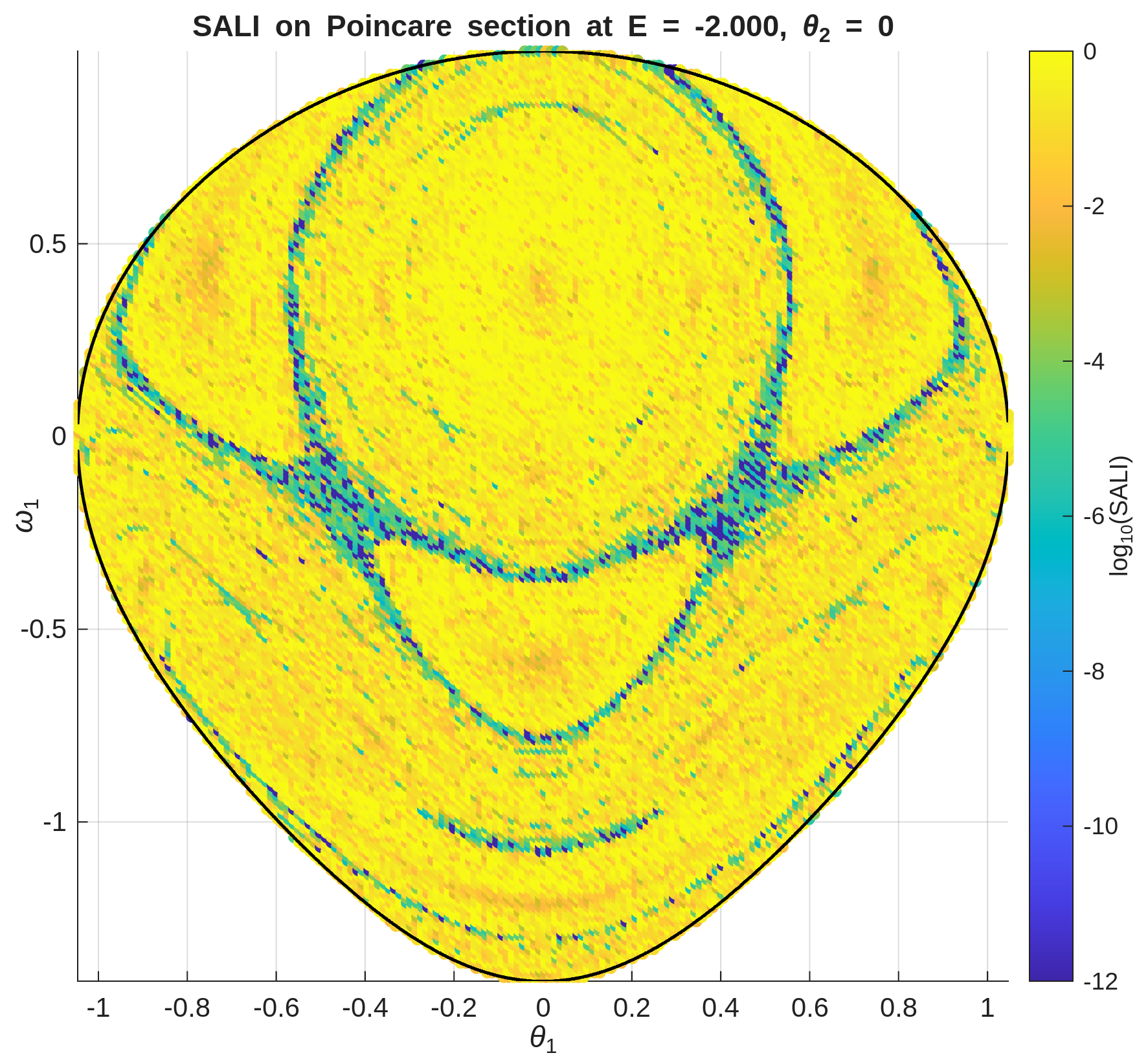}
        \subcaption{SALI, $E=-2.0$}
    \end{subfigure}
    \begin{subfigure}[t]{0.34\textwidth}
        \centering
        \includegraphics[width=\textwidth]{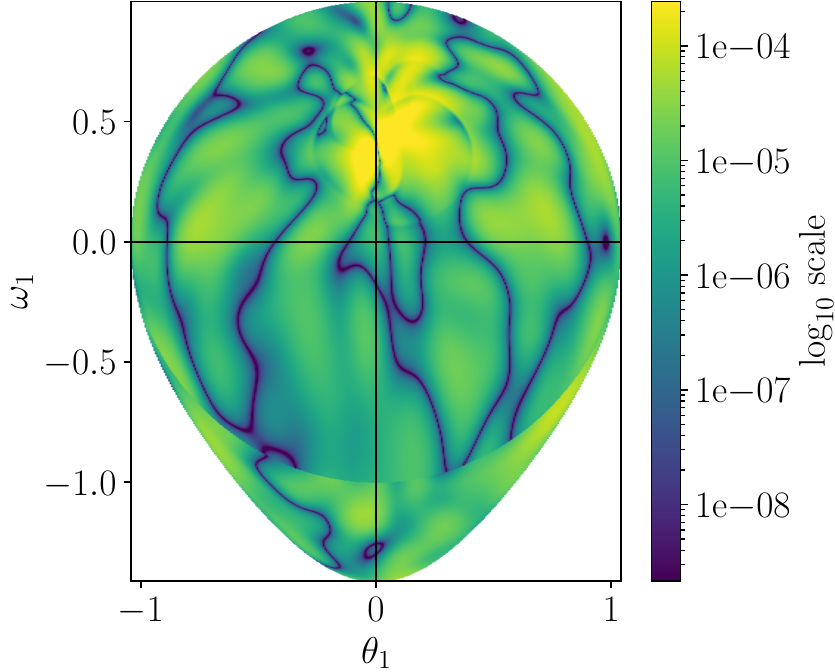} \subcaption{Conservation loss, trial 1.}
    \end{subfigure}
    \begin{subfigure}[t]{0.34\textwidth}
        \centering
        \includegraphics[width=\textwidth]{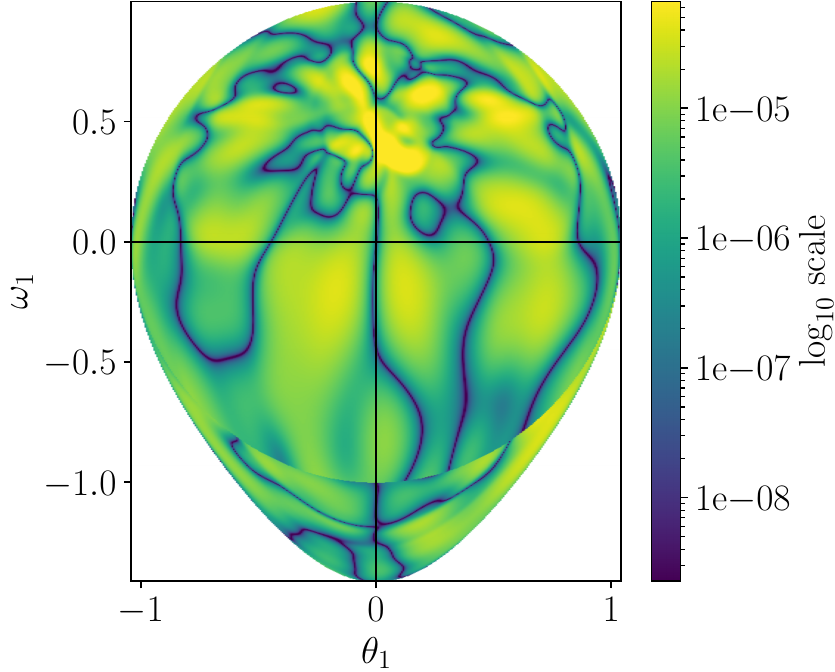} \subcaption{Conservation loss, trial 2.}
    \end{subfigure}       
    \caption{SALI maps and pointwise conservation losses of the second conservation law for the double pendulum at $E=-2.2$ (top row) and $E=-2.0$ (bottom row). In each row, the first panel shows the SALI on the Poincar\'e section, while the second and third show conservation losses from two independent training runs. The SALI structures do not consistently align with regions of large loss, and the loss patterns vary between runs.} \label{fig:double_pendulum_sali}
\end{figure}


\section{\label{sec:conclusion}Conclusion and future work}
We formulated conservation-law discovery as a local-to-global
continuation problem. Although a $d$-dimensional dynamical system
admits $d-1$ independent conservation laws near every regular point,
only a subset of these laws may extend over larger regions of phase
space. To study this transition, we applied neural deflation
over a nested family of expanding domains and used the
resulting validation losses to estimate how the number of independent
conservation laws changes with domain size.

The numerical experiments show that the framework recovers the local
laws predicted by theory and detects their successive breakdown as the
domain expands, interpolating in this way between the
$d-1$ local count of the Frobenius integrability (and the flow-box
theorem) and the asymptotic number for integrable (bearing 
$d/2$ conservation laws in involution)
or non-integrable systems. In the elementary systems considered, pointwise losses sharply
identified when a local law is obstructed by an equilibrium while
distinguishing global laws that extend across the same singular set.
For the H\'enon--Heiles system, the persistence scale of the second
candidate differentiates integrable, near-integrable, and
non-integrable regimes. Nevertheless, the neural deflation technique
considered could not be used as a {\it sharp} detector for 
integrability; rather, it illustrated the desired change of
behavior near the integrable limit. For the double pendulum, the losses revealed a
gradual reduction to the global energy law, but the breakdown was not found to be
sharply localized and it did not correlate directly with the SALI structure
(and the associated chaoticity emergence). These examples demonstrate
that the loss of a conservation law may arise either through a
localized obstruction or through a more diffuse change across phase
space.

The proposed diagnostics remain numerical. Their interpretation
depends on approximation, sampling, and optimization errors, so
candidate losses must be compared with an empirical loss floor (e.g., Hamiltonian) rather
than a universal threshold. Moreover, persistence over the tested
domains does not by itself prove that a learned law extends globally,
and pointwise loss patterns may vary across independent training runs.

Future work will focus on improving the reliability and resolution of the expanding domain,
the continuation procedure through adaptive sampling, and more
systematic uncertainty estimates across training runs. A complementary
theoretical direction is to relate decreases in the number of
conservation laws to geometric properties of the orbit foliation and
to specific dynamical obstructions. 
Additionally, understanding how the findings considered herein
generalize to higher-dimensional systems (i.e., ones with more
degrees of freedom) would be a relevant and worthwhile avenue
for future exploration.
Such developments could turn
expanding-domain conservation-law discovery into a more quantitative
tool for probing global structure in nonlinear dynamical systems.

\begin{acknowledgments}
J.A. acknowledges support from NSF grant PHY-2316622. I.G.K. acknowledges support from the U.S. Department of Energy. P.G.K. acknowledges support from NSF grant PHY-2408988 and the Simons Foundation (SFI-MPS-SFM-00011048). This research was partly conducted while P.G.K. was  visiting the Okinawa Institute of Science and
Technology (OIST) through the Theoretical Sciences Visiting Program (TSVP), the University of
Sydney through the visitor program of the Sydney Mathematical Research Institute (SMRI), and the Department of Mechanical Engineering at Seoul National
University through a Fulbright Fellowship. Their support is gratefully acknowledged. W.Z. acknowledges support from NSF grants DMS-2502900 and DMS-2540370 and AFOSR grant FA9550-25-1-0079.
\end{acknowledgments}

\vspace{1em}
\textbf{J.A.:} Conceptualization, Methodology, Formal analysis, Software, Writing -- review \& editing.

\textbf{I.G.K.:} Conceptualization, Methodology, Formal analysis, Writing -- review \& editing.

\textbf{P.G.K.:} Conceptualization, Methodology, Formal analysis, Writing -- review \& editing.

\textbf{W.Z.:} Conceptualization, Methodology, Formal analysis, Software, Writing -- original draft, Writing -- review \& editing.

\section*{Data Availability Statement}

No external datasets were used in this study. The code used to generate the numerical results is openly available on GitHub at \url{https://github.com/mrzw731/local-deflation}.

\appendix

\section{\label{app:proof}Proof of Theorem~\ref{thm:deflation_consistency}}

\begin{proof}
The integrand in \eqref{eq:population_deflated_loss} is nonnegative.
Therefore, $\mathcal{L}_{K}^{\mu}(\theta_K^*)=0$ if and only if, outside a set of zero $\mu$-measure,
\begin{align}
\label{eq:consistency_conservation}
    \ell_{\mathrm{cons}}
    [I_K(\cdot;\theta_K^*);\bx]
    &=0,
    \\
\label{eq:consistency_independence}
    \ell_{\mathrm{ind},K}(\bx;\theta_K^*)
    &>0.
\end{align}
Equation~\eqref{eq:consistency_conservation} is equivalent to
\[
    \mathbf{f}(\bx)\cdot
    \nabla I_K(\bx;\theta_K^*)=0,
\]
so $I_K(\cdot;\theta_K^*)$ is a conservation law. By the definition
of the independence factor,
\eqref{eq:consistency_independence} holds if and only if
$\nabla I_K(\bx;\theta_K^*)$ has a nonzero component orthogonal to the
span of
\[
    \nabla I_1(\bx;\theta_1^*),\ldots,
    \nabla I_{K-1}(\bx;\theta_{K-1}^*).
\]
This is equivalent to
\[
    \operatorname{rank}
    D(I_1,\ldots,I_K)(\bx)=K
\]
for $\mu$-almost every $\bx\in U$.

Conversely, if $I_K(\cdot;\theta_K^*)$ is a conservation law and the
full collection is functionally independent almost everywhere, then
the numerator in \eqref{eq:population_deflated_loss} vanishes and the
denominator is positive almost everywhere. Hence the integrand
vanishes $\mu$-almost everywhere, and
$\mathcal{L}_{K}^{\mu}(\theta_K^*)=0$.
\end{proof}

\section{Finite-time conservation defect}
\label{app:quad}

For the integrable Hamiltonian \eqref{eq:pred_sep}, consider
\begin{equation}
\label{eq:pred_F}
F=2(A_+^2+A_-^2)=H_\star^2+D^2.
\end{equation}
Its Poisson bracket with $H_\star$ vanishes, and
\begin{equation}
\label{eq:pred_bracket}
G\coloneqq\{F,P\}=12y^2(A_+p_++A_-p_-).
\end{equation}
Let $\Phi_\mu^t$ and $\Phi_\star^t$ denote the flows of
$H_\epsilon$ and $H_\star$, respectively. For a fixed time $T$, define
\begin{align}
\mathcal C_2(z_0;\mu,T)
&=\int_0^T\left(\frac{d}{dt}F(\Phi_\mu^t z_0)\right)^2dt,
\label{eq:pred_C2}\\
g(z_0;T)&=\int_0^T G^2(\Phi_\star^t z_0)\,dt.
\label{eq:pred_g}
\end{align}
The squared residual accumulates without cancellation between positive
and negative drift. Since $\{F,H_\epsilon\}=\mu G$,
\begin{equation}
\label{eq:app_exact}
\mathcal C_2(z_0;\mu,T)
=\mu^2\int_0^T G^2(\Phi_\mu^t z_0)\,dt.
\end{equation}

\begin{theorem}
\label{thm:quad}
Let $\mathcal Z$ be compact and $T>0$. Assume that the unperturbed
trajectories from $\mathcal Z$ exist on $[0,T]$. There are constants
$\mu_0,C>0$, depending on $\mathcal Z$ and $T$, such that the perturbed
trajectories also exist on $[0,T]$ and
\begin{equation}
\label{eq:pred_quad}
\sup_{z_0\in\mathcal Z}
\left|\mathcal C_2(z_0;\mu,T)-\mu^2g(z_0;T)\right|
\leq C|\mu|^3
\end{equation}
for $|\mu|<\mu_0$.
\end{theorem}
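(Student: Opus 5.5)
By the exact identity \eqref{eq:app_exact}, it suffices to show that
\[
\sup_{z_0\in\mathcal Z}\left|\int_0^T\bigl(G^2(\Phi_\mu^t z_0)-G^2(\Phi_\star^t z_0)\bigr)\,dt\right|\le C|\mu| .
\]
Multiplying by $\mu^2$ then gives \eqref{eq:pred_quad}. The argument is therefore a standard continuous-dependence estimate for ODE flows on a compact set, combined with the fact that $G$ is a polynomial and hence locally Lipschitz.

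\emph{Step 1: a compact tube around the unperturbed orbits.} The unperturbed vector field $\mathbf f_\star$ is polynomial, so the map $(t,z_0)\mapsto\Phi_\star^t z_0$ is continuous on $[0,T]\times\mathcal Z$. Its image
\[
K=\{\Phi_\star^t z_0: t\in[0,T],\ z_0\in\mathcal Z\}
\]
is therefore compact. Let $K_1$ be the closed $1$-neighborhood of $K$, which is also compact. On $K_1$, define
\begin{itemize}
\item $L$, a Lipschitz constant of $\mathbf f_\star$;
\item $M=\sup_{K_1}\|\nabla P\|$, which bounds the perturbation, since $\mathbf f_\epsilon-\mathbf f_\star=\mu J\nabla P$;
\item $L_G$, a Lipschitz constant of $G^2$;
\item $B=\sup_{K_1}G^2$.
\end{itemize}

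\emph{Step 2: Gronwall with a bootstrap.} This is the only step requiring care, because we must rule out the perturbed trajectory leaving $K_1$ (or blowing up) before time $T$. Fix $z_0\in\mathcal Z$ and set $e(t)=\Phi_\mu^t z_0-\Phi_\star^t z_0$. Let $\tau\le T$ be the supremum of the times for which the perturbed solution exists and stays in $K_1$ on $[0,\tau)$. On $[0,\tau)$ we have
\[
|\dot e|\le L|e|+|\mu|M,
\]
so Gronwall's inequality gives
\[
|e(t)|\le |\mu|\,\frac{M}{L}\bigl(e^{LT}-1\bigr)=:|\mu|C_1 .
\]
Choose $\mu_0$ with $\mu_0C_1<1/2$. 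Then for $|\mu|<\mu_0$ the perturbed trajectory stays within distance $1/2$ of $K$ on $[0,\tau)$. By the extension theorem for ODEs and continuity, this forces $\tau=T$ and existence on all of $[0,T]$. The constants $C_1$ and $\mu_0$ are uniform in $z_0\in\mathcal Z$.

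\emph{Step 3: conclude.} On $[0,T]$,
\[
|G^2(\Phi_\mu^t z_0)-G^2(\Phi_\star^t z_0)|\le L_G|\mu|C_1 .
\]
Integrating over $[0,T]$ gives the required bound with $C=TL_GC_1$, uniformly in $z_0$. Multiplying by $\mu^2$ and using \eqref{eq:app_exact} yields \eqref{eq:pred_quad}.

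Two remarks. First, one should check that \eqref{eq:app_exact} holds: this follows from $\frac{d}{dt}F(\Phi_\mu^t z_0)=\{F,H_\epsilon\}=\{F,H_\star\}+\mu\{F,P\}=\mu G$, using $\{F,H_\star\}=0$ from \eqref{eq:pred_F}. Second, the sign convention $\{F,H\}$ versus $\{H,F\}$ is irrelevant, because $G$ enters only through its square.
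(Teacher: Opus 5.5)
Your proposal is correct and follows the paper's proof essentially step for step: a compact neighborhood of the unperturbed orbit segments, Gr\"onwall's inequality with $C_1=M(e^{LT}-1)/L$, and the Lipschitz bound on $G^2$, giving $C=TL_GC_1$. The only difference is that the paper keeps the perturbed trajectories inside the compact set by citing continuous dependence on initial data and parameters, whereas you prove this directly with a bootstrap that yields the explicit threshold $\mu_0=1/(2C_1)$, a slightly more self-contained version of the same step.
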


\begin{proof}
The unperturbed trajectory segments form a compact set. Choose a compact
ball $\mathcal N$ containing this set in its interior. Continuous
dependence on the initial condition and parameter gives $\mu_0>0$ such
that $z_\mu(t)=\Phi_\mu^t z_0$ remains in $\mathcal N$ for
$z_0\in\mathcal Z$, $0\leq t\leq T$, and $|\mu|<\mu_0$.
Let $L>0$ be a Lipschitz constant for $X_{H_\star}$ on $\mathcal N$,
$M=\sup_{\mathcal N}\|X_P\|_2$, and $L_2$ a Lipschitz constant for $G^2$.
Writing $\zeta(t)=\Phi_\star^t z_0$, the integral equations give
\[
\|z_\mu(t)-\zeta(t)\|_2
\leq L\int_0^t\|z_\mu(s)-\zeta(s)\|_2\,ds+|\mu|Mt.
\]
Gr\"onwall's inequality yields
\begin{equation}
\label{eq:app_gronwall}
\sup_{0\leq t\leq T}\|z_\mu(t)-\zeta(t)\|_2
\leq C_1|\mu|,\qquad
C_1=\frac{M(e^{LT}-1)}{L}.
\end{equation}
Using \eqref{eq:app_exact} and \eqref{eq:pred_g},
\begin{align*}
\left|\mathcal C_2-\mu^2g\right|
&\leq\mu^2L_2\int_0^T\|z_\mu(t)-\zeta(t)\|_2\,dt\\
&\leq TL_2C_1|\mu|^3.
\end{align*}
The constants are uniform over $\mathcal Z$.
\end{proof}

On subsets where $g\geq g_0>0$, the estimate also gives
\begin{equation}
\label{eq:pred_offset}
\log_{10}\mathcal C_2
=2\log_{10}|\mu|+\log_{10}g+O(|\mu|).
\end{equation}
The time $T$ is fixed throughout this estimate. For Figure~\ref{fig:hh_predictor}, the initial conditions lie on
$\{y=0,\ p_y>0\}\cap\{H_\star=E\}$, parameterized by
\begin{equation}
\label{eq:pred_py0}
p_{y,0}=\sqrt{2E-x_0^2-p_{x,0}^2},\qquad
\mathcal D=\{x_0^2+p_{x,0}^2<2E\}.
\end{equation}
Here $E=0.04<1/12$. Since $P=0$ on the section, these initial conditions
also have energy $E$ for every perturbed Hamiltonian. The displayed
calculations use St\"ormer--Verlet with step size $h=0.02$ and $T=100$,
centered differences for $\dot F$, and trapezoidal quadrature for the
integrals. The initial grid is uniform in $(x_0,p_{x,0})$.

\begin{figure}[t]
\centering
\includegraphics[width=\columnwidth]{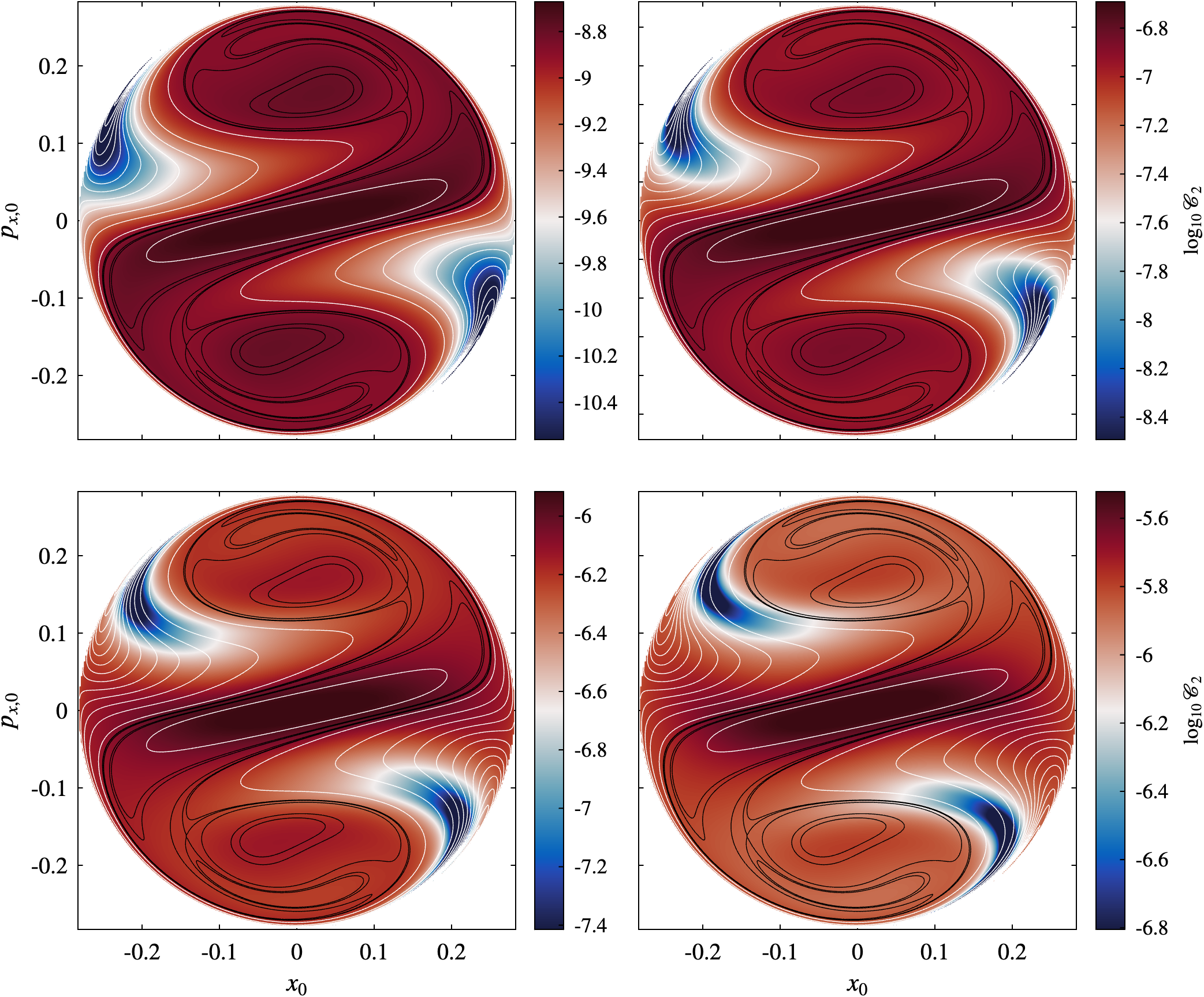}
\caption{\label{fig:hh_predictor}
Finite-time conservation defect of $F$ on the section
$\{y=0,\ p_y>0\}\cap\{H_\star=0.04\}$, with $T=100$.
The panels correspond to $\epsilon=-0.330,-0.300,-0.250,-0.200$
from top left to bottom right. Colors show $\log_{10}\mathcal C_2$
on the perturbed flow, with a separate color scale in each panel.
White curves are contours of $\log_{10}(\mu^2g)$ computed from the
integrable flow. Black curves mark saddle-level contours of $g$.
}
\end{figure}

White curves show contours of $\log_{10}(\mu^2g)$ at levels
$2\log_{10}|\mu|+V_i$, using the same $V_i$ in every panel.
Black curves are level sets of $g$ through its numerically identified
saddle points on $\mathcal D$. The low-defect regions move relative to
the fixed predictor contours as $\epsilon$ increases. These contours
describe the finite-time variation of the chosen observable $F$ and help to visualize where the finite-time drift of the second integral is small or large across the chosen Poincar\'e section.

\bibliography{reference}

@incollection{noether1983invariante,
  title={Invariante variationsprobleme},
  author={Noether, Emmy},
  booktitle={Gesammelte abhandlungen-collected papers},
  pages={231--239},
  year={1983},
  publisher={Springer}
}

@article{SUBRAMANIAN2021105679,
title = {Exploring critical points of energy landscapes: From low-dimensional examples to phase field crystal {PDE}s},
journal = {Communications in Nonlinear Science and Numerical Simulation},
volume = {96},
pages = {105679},
year = {2021},
issn = {1007-5704},
doi = {https://doi.org/10.1016/j.cnsns.2020.105679},
url = {https://www.sciencedirect.com/science/article/pii/S1007570420305098},
author = {P. Subramanian and I.G. Kevrekidis and P.G. Kevrekidis}
}

@book{olver1993applications,
  title={Applications of Lie groups to differential equations},
  author={Olver, Peter J},
  volume={107},
  year={1993},
  publisher={Springer Science \& Business Media}
}

@book{arnold1989classical,
  title={Mathematical methods of classical mechanics},
  author={Arnold, Vladimir Igorevich},
  volume={60},
  year={1989},
  publisher={Springer}
}

@book{goldstein2002classical,
  title={Classical Mechanics},
  author={Goldstein, Herbert and Poole, Charles P. and Safko, John L.},
  edition={3rd},
  year={2002},
  publisher={Addison-Wesley},
  address={San Francisco, CA},
  isbn={978-0201657029}
}

@book{lee2003smooth,
  author    = {Lee, John M.},
  title     = {Introduction to Smooth Manifolds},
  series    = {Graduate Texts in Mathematics},
  volume    = {218},
  publisher = {Springer},
  address   = {New York},
  year      = {2003},
  doi       = {10.1007/978-0-387-21752-9}
}

@article{wetzel2020discovering,
  author  = {Wetzel, Sebastian J. and Melko, Roger G. and
             Scott, Joseph and Panju, Maysum and Ganesh, Vijay},
  title   = {Discovering Symmetry Invariants and Conserved Quantities
             by Interpreting Siamese Neural Networks},
  journal = {Physical Review Research},
  volume  = {2},
  pages   = {033499},
  year    = {2020},
  doi     = {10.1103/PhysRevResearch.2.033499}
}

@article{liu2021machine,
  title={Machine learning conservation laws from trajectories},
  author={Liu, Ziming and Tegmark, Max},
  journal={Physical Review Letters},
  volume={126},
  number={18},
  pages={180604},
  year={2021},
  publisher={APS}
}

@article{ha2021discovering,
  title={Discovering invariants via machine learning},
  author={Ha, Seungwoong and Jeong, Hawoong},
  journal={Physical Review Research},
  volume={3},
  number={4},
  pages={L042035},
  year={2021},
  publisher={APS}
}

@article{liu2022machine,
  title={Machine learning conservation laws from differential equations},
  author={Liu, Ziming and Madhavan, Varun and Tegmark, Max},
  journal={Physical Review E},
  volume={106},
  number={4},
  pages={045307},
  year={2022},
  publisher={APS}
}

@article{lu2023discovering,
  title={Discovering conservation laws using optimal transport and manifold learning},
  author={Lu, Peter Y and Dangovski, Rumen and Solja{\v{c}}i{\'c}, Marin},
  journal={Nature Communications},
  volume={14},
  number={1},
  pages={4744},
  year={2023},
  publisher={Nature Publishing Group UK London}
}

@article{zhu2023machine,
  title={Machine learning of independent conservation laws through neural deflation},
  author={Zhu, Wei and Zhang, Hong-Kun and Kevrekidis, PG},
  journal={Physical Review E},
  volume={108},
  number={2},
  pages={L022301},
  year={2023},
  publisher={APS}
}

@article{chen2025data,
  title={Data-driven discovery of conservation laws from trajectories via neural deflation},
  author={Chen, Shaoxuan and Kevrekidis, Panayotis G and Zhang, Hong-Kun and Zhu, Wei},
  journal={Communications in Nonlinear Science and Numerical Simulation},
  volume={143},
  pages={108563},
  year={2025},
  publisher={Elsevier}
}

@article{burrage1998deflation,
  title={A deflation technique for linear systems of equations},
  author={Burrage, Kevin and Erhel, Jocelyne and Pohl, Bert and Williams, Alan},
  journal={SIAM Journal on Scientific Computing},
  volume={19},
  number={4},
  pages={1245--1260},
  year={1998},
  publisher={SIAM}
}

@article{ojika1987modified,
  title={Modified deflation algorithm for the solution of singular problems. {I}. {A} system of nonlinear algebraic equations},
  author={Ojika, Takeo},
  journal={Journal of mathematical analysis and applications},
  volume={123},
  number={1},
  pages={199--221},
  year={1987},
  publisher={Elsevier}
}

@article{gu2021structure,
  title={Structure probing neural network deflation},
  author={Gu, Yiqi and Wang, Chunmei and Yang, Haizhao},
  journal={Journal of computational physics},
  volume={434},
  pages={110231},
  year={2021},
  publisher={Elsevier}
}

@article{baydin2018automatic,
  title={Automatic differentiation in machine learning: a survey},
  author={Baydin, Atilim Gunes and Pearlmutter, Barak A and Radul, Alexey Andreyevich and Siskind, Jeffrey Mark},
  journal={Journal of machine learning research},
  volume={18},
  number={153},
  pages={1--43},
  year={2018}
}

@article{henon1964applicability,
  title={The applicability of the third integral of motion: some numerical experiments},
  author={H{\'e}non, Michel and Heiles, Carl},
  journal={Astronomical Journal},
  volume={69},
  pages={73},
  year={1964}
}

@article{fordy1991henon,
  title={The {H}{\'e}non-{H}eiles system revisited},
  author={Fordy, Allan P},
  journal={Physica D: Nonlinear Phenomena},
  volume={52},
  number={2-3},
  pages={204--210},
  year={1991},
  publisher={Elsevier}
}

@article{sali,
doi = {10.1088/0305-4470/37/24/006},
url = {https://doi.org/10.1088/0305-4470/37/24/006},
year = {2004},
month = {jun},
publisher = {},
volume = {37},
number = {24},
pages = {6269},
author = {Ch Skokos and Ch Antonopoulos and T C Bountis and M N Vrahatis},
title = {Detecting order and chaos in {H}amiltonian systems by the {SALI} method},
journal = {Journal of Physics A: Mathematical and General}
}

@inproceedings{kaiser2018discovering,
  title={Discovering conservation laws from data for control},
  author={Kaiser, Eurika and Kutz, J Nathan and Brunton, Steven L},
  booktitle={2018 IEEE Conference on Decision and Control (CDC)},
  pages={6415--6421},
  year={2018},
  organization={IEEE}
}

@article{arora2023model,
  title={Model-agnostic machine learning of conservation laws from data},
  author={Arora, Shivam and Bihlo, Alex and Brecht, R{\"u}diger and Holba, Pavel},
  journal={arXiv preprint arXiv:2301.07503},
  year={2023}
}

@article{belyshev2024beyond,
  title   = {Beyond Dynamics: Learning to Discover Conservation Principles},
  author  = {Belyshev, Antonii and Kovrigin, Alexander and
             Ustyuzhanin, Andrey},
  journal = {Machine Learning: Science and Technology},
  volume  = {5},
  number  = {2},
  pages   = {025055},
  year    = {2024},
  doi     = {10.1088/2632-2153/ad4a20}
}

@article{oellerich2024towards,
  title={Towards robust data-driven automated recovery of symbolic conservation laws from limited data},
  author={Oellerich, Tracey and Emelianenko, Maria},
  journal={Machine Learning: Science and Technology},
  volume={5},
  number={3},
  pages={035035},
  year={2024},
  publisher={IOP Publishing}
}

@article{mebratie2025machine,
  title={Machine learning conservation laws of dynamical systems},
  author={Mebratie, Meskerem Abebaw and Nather, R{\"u}diger and von Rudorff, Guido Falk and Seiler, Werner M},
  journal={Physical Review E},
  volume={111},
  number={2},
  pages={025305},
  year={2025},
  publisher={APS}
}

@article{liu2024interpretable,
  title={Interpretable conservation laws as sparse invariants},
  author={Liu, Ziming and Sturm, Patrick Obin and Bharadwaj, Saketh and Silva, Sam J and Tegmark, Max},
  journal={Physical Review E},
  volume={109},
  number={2},
  pages={L023301},
  year={2024},
  publisher={APS}
}

@article{BrownGearhart1971,
  author  = {Brown, K. M. and Gearhart, W. B.},
  title   = {Deflation techniques for the calculation of further solutions of a nonlinear system},
  journal = {Numerische Mathematik},
  volume  = {16},
  pages   = {334--342},
  year    = {1971},
  doi     = {10.1007/BF02165004}
}

@article{elfwing2018sigmoid,
  title={Sigmoid-weighted linear units for neural network function approximation in reinforcement learning},
  author={Elfwing, Stefan and Uchibe, Eiji and Doya, Kenji},
  journal={Neural networks},
  volume={107},
  pages={3--11},
  year={2018},
  publisher={Elsevier}
}

@inproceedings{glorot2010understanding,
  title={Understanding the difficulty of training deep feedforward neural networks},
  author={Glorot, Xavier and Bengio, Yoshua},
  booktitle={Proceedings of the thirteenth international conference on artificial intelligence and statistics},
  pages={249--256},
  year={2010},
  organization={JMLR Workshop and Conference Proceedings}
}

@inproceedings{kingma2015adam,
  author    = {Kingma, Diederik P. and Ba, Jimmy},
  title     = {Adam: A Method for Stochastic Optimization},
  booktitle = {International Conference on Learning Representations (ICLR)},
  year      = {2015},
  url       = {https://arxiv.org/abs/1412.6980}
}

\end{document}